\pgfplotsset{compat=1.18}
\setlist[enumerate,1]{label={(\arabic*)}}
\newcommand\StateX{\Statex\hspace{\algorithmicindent}}
\newcounter{algorithmicH}% New algorithmic-like hyperref counter
\let\oldalgorithmic\algorithmic
\renewcommand{\algorithmic}{%
    \stepcounter{algorithmicH}% Step counter
    \oldalgorithmic}% Do what was always done with algorithmic environment
\renewcommand{\theHALG@line}{ALG@line.\thealgorithmicH.\arabic{ALG@line}}
\spnewtheorem{thm}[theorem]{Theorem}{\bfseries}{\itshape}
\spnewtheorem{defn}[theorem]{Definition}{\bfseries}{\itshape}
\spnewtheorem{prop}[theorem]{Proposition}{\bfseries}{\itshape}
\spnewtheorem{lem}[theorem]{Lemma}{\bfseries}{\itshape}
\spnewtheorem{cor}[theorem]{Corollary}{\bfseries}{\itshape}
\spnewtheorem{hyp}[theorem]{Hypothesis}{\bfseries}{\itshape}
\spnewtheorem{ex}[theorem]{Example}{\bfseries}{}
\spnewtheorem{rem}[theorem]{Remark}{\bfseries}{}
\crefname{thm}{theorem}{theorems}
\Crefname{thm}{Theorem}{Theorems}
\crefname{defn}{definition}{definitions}
\Crefname{defn}{Definition}{Definitions}
\crefname{prop}{proposition}{propositions}
\Crefname{prop}{Proposition}{Propositions}
\crefname{lem}{lemma}{lemmas}
\Crefname{lem}{Lemma}{Lemmas}
\crefname{cor}{corollary}{corollaries}
\Crefname{cor}{Corollary}{Corollaries}
\crefname{ex}{example}{examples}
\Crefname{ex}{Example}{Examples}
\crefname{rem}{remark}{remarks}
\Crefname{rem}{Remark}{Remarks}
\crefname{hyp}{hypothesis}{hypotheses}
\Crefname{hyp}{Hypothesis}{Hypotheses}
\newcommand{\subalign}[1]{%
  \vcenter{%
    \Let@ \restore@math@cr \default@tag
    \baselineskip\fontdimen10 \scriptfont\tw@
    \advance\baselineskip\fontdimen12 \scriptfont\tw@
    \lineskip\thr@@\fontdimen8 \scriptfont\thr@@
    \lineskiplimit\lineskip
    \ialign{\hfil$\m@th\scriptstyle##$&$\m@th\scriptstyle{}##$\hfil\crcr
      #1\crcr
    }%
  }%
}
\newcommand{\F}{\mathbb{F}}                                         % Field
\newcommand{\Fp}{\F_{p}}                                            % Prime field
\newcommand{\Fq}{\F_{q}}                                            % Prime power field
\newcommand{\Fpn}{\Fp^{n}}                                          % n dimensional vector space over finite field of size p
\newcommand{\Fqn}{\Fq^{n}}                                          % n dimensional vector space over finite field of size q
\newcommand{\Fpnxn}{\Fp^{n \times n}}                               % n x n matrices over finite field of size p
\newcommand{\Fqm}{\Fq^{m}}                                          % m dimensional vector space over finite field
\newcommand{\Fpx}{\Fp^\times}                                       % Invertible elements of finite field of size p
\DeclareMathOperator{\LP}{LP}                                       % Linear probability
\DeclareMathOperator{\Tr}{Tr}                                       % Trace
\DeclareMathOperator{\solvdeg}{sd}                                  % Solving degree
\DeclareMathOperator{\circulant}{circ}                              % Circulant matirx
\DeclareMathOperator{\CORR}{CORR}                                   % Correlation from linear cryptanalysis
\DeclareMathOperator{\prob}{\mathbb{P}}                             % Probability measure symbol
\newcommand*{\degree}[1]{\deg \left( #1 \right)}                    % Degree
\DeclareMathOperator{\wt}{wt}                                       % Hamming weight
\newcommand{\Anemoi}{\texttt{Anemoi}\xspace}
\newcommand{\Arion}{\textsf{Arion}\xspace}
\newcommand{\ArionHash}{\textsf{ArionHash}\xspace}
\newcommand{\aArion}{\textalpha-\Arion}
\newcommand{\aArionHash}{\textalpha-\ArionHash}
\newcommand{\Arionpi}{\Arion-\textpi\xspace}
\newcommand{\GMiMC}{\texttt{GMiMC}\xspace}
\newcommand{\Griffin}{\textsc{Griffin}\xspace}
\newcommand{\Hades}{\textsc{Hades}\xspace}
\newcommand{\libsnark}{\texttt{libsnark}\xspace}
\newcommand{\LowMC}{\texttt{LowMC}\xspace}
\newcommand{\MiMC}{\texttt{MiMC}\xspace}
\newcommand{\Plonk}{\textsf{Plonk}\xspace}
\newcommand{\Poseidon}{\textsc{Poseidon}\xspace}
\newcommand{\RoneCS}{\textsf{R1CS}\xspace}
\newcommand{\ReinforcedConcrete}{\texttt{Reinforced Concrete}\xspace}
\newcolumntype{P}[1]{>{\centering\arraybackslash}p{#1}}
\newcolumntype{M}[1]{>{\centering\arraybackslash}m{#1}}
\DeclareMathOperator{\mb}{MB}
\begin{document}
    \title{\textsf{Arion}: Arithmetization-Oriented Permutation and Hashing from Generalized Triangular Dynamical Systems}
    \author{Arnab Roy \inst{1} \and
            Matthias Johann Steiner \inst{1} \and
            Stefano Trevisani \inst{2}}
    \institute{Alpen-Adria-Universit\"at Klagenfurt, Universit\"atsstraße 65-67, 9020 Klagenfurt am W\"orthersee, Austria \\ \email{firstname.lastname@aau.at} \and Technische Universit\"at Wien, Karlsplatz 13, 1040 Wien, Austria \\ \email{firstname.lastname@tuwien.ac.at}}

    \maketitle

    \begin{abstract}
        In this paper we propose the (keyed) permutation \textsf{Arion} and the hash function \textsf{ArionHash} over $\mathbb{F}_p$ for odd  and particularly large primes.
        The design of \textsf{Arion} is based on the newly introduced Generalized Triangular Dynamical System (GTDS), which provides a new algebraic framework for constructing (keyed) permutation using polynomials over a finite field.
        At round level \textsf{Arion} is the first design which is instantiated using the new GTDS.
        We provide extensive security analysis of our construction including algebraic cryptanalysis (e.g. interpolation and Gr\"obner basis attacks) that are particularly decisive in assessing the security of permutations and hash functions over $\mathbb{F}_p$.
        From an application perspective, \textsf{ArionHash} aims for efficient implementation in zkSNARK protocols and Zero-Knowledge proof systems.
        For this purpose, we exploit that CCZ-equivalence of graphs can lead to a more efficient implementation of Arithmetization-Oriented primitives.

        We compare the efficiency of \textsf{ArionHash} in \textsf{R1CS} and \textsf{Plonk} settings with other hash functions such as \textsc{Poseidon}, \texttt{Anemoi} and \textsc{Griffin}.
        For demonstrating the practical efficiency of \textsf{ArionHash} we implemented it with the zkSNARK libraries \texttt{libsnark} and Dusk Network \textsf{Plonk}.
        Our result shows that \textsf{ArionHash} is significantly faster than \textsc{Poseidon} - a hash function designed for zero-knowledge proof systems.
        We also found that an aggressive version of \textsf{ArionHash} is considerably faster than \texttt{Anemoi} and \textsc{Griffin} in a practical zkSNARK setting.
    \end{abstract}

    \section{Introduction}
    With the advancement of Zero-Knowledge (ZK), Multi-Party Computation (MPC) and Fully Homomorphic Encryption (FHE) in recent years new efficiency measures for symmetric-key primitives allowing efficient implementation in these schemes, namely low multiplicative complexity and low multiplicative depth, have been introduced.
    The block ciphers, permutations and hash functions with low multiplicative complexity are also referred to as Arithmetization-Oriented (AO) primitives.
    A significant number of these new types of AO primitives are defined over large finite fields of prime order $p \ggg 2$ for target applications.
    Our focus in this paper will be such a low multiplicative complexity construction over $\Fp$ for large primes.
    Some generic definitions and results in this paper are applicable to any odd prime, thus we describe these results and definitions accordingly.
    However, the security of the construction(s) will be analyzed only for large primes.

    To put this paper into context with previous AO constructions we give a short overview of their developments.
    The AO primitives proposed in the literature until now can be categorized into three generations.
    \begin{enumerate}[labelindent=*,leftmargin=*,label=Gen \Roman*:]
        \item \LowMC \cite{EC:ARSTZ15}, \MiMC \cite{AC:AGRRT16}

        \item \Hades \cite{EC:GLRRS20}, \Poseidon \cite{USENIX:GKRRS21}, \GMiMC \cite{ESORICS:AGPRRRRS19}, Rescue-Prime \cite{ToSC:AABDS20}

        \item \ReinforcedConcrete \cite{ReinforcedConcrete}, \Griffin \cite{Griffin}, \Anemoi \cite{Anemoi}, \Arion (this paper)
    \end{enumerate}

    The first generation consists of constructions which demonstrated that one can construct secure and efficient ciphers and hash functions with low degree primitives at round level.
    In particular, \LowMC introduced the partial Substitution Permutation Network (SPN) strategy in AO.

    In the second generation researchers tried to obtain further efficiency improvements from Feistel and (partial) SPNs to obtain new efficient primitives.
    Moreover, more focus was given on constructions native in large prime fields $\Fp$ rather than $\F_{2^n}$.
    This resulted in \Hades which combines full and partial SPNs over $\Fp$, and its derived sponge function \Poseidon which is now a widely deployed hash function for ZK applications.

    The current third generation adopts new design principles which neither reduce to the Feistel nor the SPN that culminated in the Generalized Triangular Dynamical System (GTDS) \cite{GTDS}.
    Moreover, this generation diverted from the consensus that one needs low degree polynomials to instantiate a secure and efficient AO primitive.

    In this paper we propose new AO primitives - \Arion (block cipher) and the hash function derived from it \ArionHash.
    At round level \Arion (and \ArionHash) like \Griffin, utilize(s) a polynomial of very high degree in one branch and low degree polynomials in the remaining branches to significantly cut the number of necessary rounds compared to the previous generations.
    \Anemoi also utilizes a high degree permutation, the so-called open \texttt{Flystel}, at round level, but to limit the number of constraints in a prover circuit the authors proved that the open \texttt{Flystel} is CCZ-equivalent (cf.\ \cite{Carlet-CCZ} and \cite[§4.2]{Anemoi}) to a low degree function, the so-called closed \texttt{Flystel}.
    Lastly, \ReinforcedConcrete is the first AO hash function that utilizes look-up tables which significantly reduces the number of necessary rounds of \ReinforcedConcrete and consequently also the number of constraints in a prover circuit.

    \subsection{Our Results}
    In this paper we propose the block cipher \Arion and the hash function \ArionHash (\Cref{Sec: permutation}), using the Generalized Triangular Dynamical System \cite{GTDS}.
    The block cipher and hash function are constructed over $\mathbb{F}_p$ with the target to achieve low multiplicative complexity in a prover circuit.
    Utilizing the structure of GTDS enables us to provide a systematic security analysis of the newly proposed block cipher and hash function. The GTDS structure also allows us to choose the best suited parameters for the efficiency.
    We provide extensive security analysis of the proposed block cipher and hash function against state-of-the-art cryptanalysis techniques to justify their security (\Cref{Sec: security analysis}).
    Our construction validates the soundness of the generic GTDS structure that uses polynomial dynamical system for constructing cryptographic permutations over finite fields.

    Although \Arion and \ArionHash are defined on arbitrary finite fields $\F_p$, the parameters of the block cipher and hash function are chosen in such way to be compatible with the primes chosen for the target ZK application namely, for $\text{BLS}12$ and $\text{BN}254$ curves.
    We propose aggressive versions of \Arion and \ArionHash namely, \aArion and \aArionHash.
    The difference between \Arion (and \ArionHash) and its aggressive version is that the former avoids a recently proposed probabilistic Gr\"obner basis \cite{Faugere-SubCubic} attack (\Cref{Sec: algebraic analysis} and Appendix C in the full version of the paper \cite{Arion}).

    To demonstrate and compare the efficiencies of our constructions (\Cref{Sec: performance}) we implemented them using the zkSNARK libraries \libsnark \cite{libsnark}, a \verb!C++! library used in the privacy protecting digital currency Zcash \cite{Zcash}, and Dusk Network \Plonk \cite{Dusk-Plonk}, a \verb|Rust| library used in the privacy-oriented blockchain protocol Dusk.
    Our results show that \ArionHash is significantly (2x) faster than \Poseidon\ - an efficient hash function designed for zkSNARK applications.
    The efficiency of \ArionHash is comparable to the recently proposed (but not yet published at a peer-reviewed venue) hash functions \Anemoi and \Griffin.
    We find that \aArionHash for practical choices of parameters in a Merkle tree mode of hashing is faster than \Griffin.
    We also reveal that CCZ-equivalence between the graphs of the \ArionHash GTDS and another closely related GTDS leads to a more efficient implementation of \ArionHash in ZK applications compared to the naive circuit for \ArionHash (\Cref{Sec: reducing the constraints}).

    Our public GitHub repository
    \begin{center}
        \url{https://github.com/sca-research/Arion}
    \end{center}
    contains reference implementations in \verb!SageMath!, \verb!C++! and \verb|Rust|, our \verb!OSCAR! implementation to perform Gr\"obner basis experiments, and our \verb!SageMath! code to estimate the security levels of \Arion \& \ArionHash.

    \section{The (Keyed) Permutation and Hash Function}\label{Sec: permutation}
    \subsection{Overview of the Design Rationale}
    Before we define \Arion and \ArionHash, we quickly summarize the design rationale behind our construction.
    \begin{itemize}
        \item By utilizing the GTDS to instantiate the permutation we aim to achieve fast degree growth in each component like in SPNs and non-linear mixing between the components as in Feistel network.
        Our GTDS, see \Cref{Def: GTDS}, incorporates the strength of both SPN and Feistel in a single primitive at round level.

        \item It follows from the generic security analysis in \cite[§5]{GTDS} that the univariate permutations, the SPN part, of the GTDS determine worst case security bounds against differential and linear cryptanalysis.
        Hence, we chose parameters that minimize these bounds.

        \item To thwart interpolation attacks we opted for a design that can achieve a degree overflow in the input variables in the first round, see \Cref{Lem: degrees in GTDS} and \Cref{Tab: degree overflow}.
        This is achieved in the SPN part of the GTDS by applying a low degree univariate permutation polynomial $p_1$ in all branches except the last one and by applying a high-degree inverse permutation $p_2^{-1}$ in the last branch.

        \item We opted for a linear layer that mixes all branches in every round. This is achieved with a circulant matrix which has only non-zero entries.

        \item For the high degree inverse permutation, the naive circuit for $p_2^{-1} (x) = y$ introduces many multiplicative constraints, though one can always transform such a circuit into a circuit for $x = p_2 (y)$ in constant time, see \Cref{Sec: reducing the constraints}.
        This trick plays a fundamental role in the efficiency of \ArionHash circuits.
    \end{itemize}

    \subsection{Keyed Permutation}
    We start with the definition of the generalized triangular dynamical system of \Arion.
    \begin{defn}[GTDS of \Arion]\label{Def: GTDS}
        Let $p \in \mathbb{Z}_{> 4}$ be a prime, and let $\Fp$ be the field with $p$ elements.
        Let $n, d_1, d_2, e \in \mathbb{Z}_{> 1}$ be integers such that
        \begin{enumerate}[label=(\roman*)]
            \item $d_1$ is the smallest positive integer such that $\gcd \left( d_1, p - 1 \right) = 1$,

            \item $d_2$ is an arbitrary integer such that $\gcd \left( d_2, p - 1 \right) = 1$, and

            \item $e \cdot d_2 \equiv 1 \mod p - 1$.
        \end{enumerate}
        For $1 \leq i \leq n - 1$ let $\alpha_{i, 1}, \alpha_{i, 2}, \beta_{i}\in \Fp$ be such that $\alpha_{i, 1}^2 - 4 \cdot \alpha_{i, 2}$ is a quadratic non-residue modulo $p$.
        The generalized triangular dynamical system $\mathcal{F}_\Arion = \{ f_1, \dots, f_n \}$ of {\Arion} is defined as
        \begin{equation*}
            \begin{split}
                f_i (x_1, \dots, x_n) &= x_i^{d_1} \cdot g_i (\sigma_{i + 1, n}) + h_i (\sigma_{i + 1, n}), \qquad 1 \leq i \leq n - 1, \\
                f_n (x_1, \dots, x_n) &= x_n^{e},
            \end{split}
        \end{equation*}
        where
        \begin{equation*}
            \begin{split}
                g_i (x) &= x^2 + \alpha_{i, 1} \cdot x + \alpha_{i, 2}, \\
                h_i (x) &= x^2 + \beta_i \cdot x, \\
                \sigma_{i + 1, n} &= \sum_{j = i + 1}^{n} x_j + f_j (x_1, \dots, x_n).
            \end{split}
        \end{equation*}
    \end{defn}

    Note that the GTDS $\mathcal{F} = \{ f_1, \dots, f_n \}$ must be considered as ordered tuple of polynomials since in general the order of the $f_i$'s cannot be interchanged.
    Since $\alpha_{i, 1}^2 - 4 \cdot \alpha_{i, 2}$ is a non-residue modulo $p$ for all $1 \leq i \leq n - 1$ the polynomials $g_i$ do not have any zeros over $\Fp$, therefore we can invert the GTDS with the procedure described in \cite[Proposition~8, Corollary~9]{GTDS}.
    In \Cref{Tab: efficient evaluation} we propose suitable exponents for $d_2$ which can be evaluated with at most $9$ multiplications.

    All exponents are chosen so that \Arion and \ArionHash provide at least 128 bit security against Gr\"obner basis attacks while minimizing the number of multiplicative constraints in a prover circuit, see \Cref{Sec: algebraic analysis,Sec: reducing the constraints}.
    \begin{table}[hb]
        \centering
        \caption{Efficient evaluation of exponents $d_2 \in \{ 121, 123, 125, 129, 161, 193, 195, 257 \}$.}
        \label{Tab: efficient evaluation}
        \resizebox{0.8\textwidth}{!}{
            \begin{tabular}{ M{7mm} | M{7.5cm} | M{2.5cm} }
                \toprule
                $d_2$ & Evaluation chain & Number of Multiplications \\
                \midrule

                $121$ & $y = \left( x^2 \right)^2$, $z = \left( y^2 \cdot y \right)^2$, $x^{121} = \left( z^2 \right)^2 \cdot z \cdot x$                                  & $9$ \\[7pt]
                $123$ & $y = x^2 \cdot x$, $z = \left( \left( y^2 \right)^2 \right)^2$, $x^{123} = \left( z^2 \right)^2 \cdot z \cdot y$                                  & $9$ \\[7pt]
                $125$ & $y = \left( x^2 \right)^2 \cdot x$, $z = \left( \left( y^2 \right)^2 \right)^2$, $x^{125} = z^2 \cdot z \cdot y$                                  & $9$ \\[7pt]
                $129$ & $y = \left( \left( \left( x^2 \right)^2 \right)^2 \right)^2$, $z = \left( \left( y^2 \right)^2 \right)^2$, $x^{129} = z \cdot x$                  & $8$ \\[7pt]
                $161$ & $y = \left( x^2 \right)^2 \cdot x$, $z = \left(\left( y^2 \right)^2 \right)^2$, $x^{161} = \left( z^2 \right)^2 \cdot x$                          & $9$ \\[7pt]
                $193$ & $y = \left( x^2 \right) \cdot x$, $z = \left( \left( \left( y^2 \right)^2 \right)^2 \right)^2$, $x^{193} = \left( z^2 \right)^2 \cdot x$          & $9$ \\[7pt]
                $195$ & $y = \left( x^2 \right) \cdot x$, $z = \left( \left( \left( y^2 \right)^2 \right)^2 \right)^2$, $x^{195} = \left( z^2 \right)^2 \cdot y$          & $9$ \\[7pt]
                $257$ & $y = \left( \left( \left( x^2 \right)^2 \right)^2 \right)^2$, $z = \left( \left( \left( y^2 \right)^2 \right)^2 \right)^2$, $x^{257} = z \cdot x$ & $9$ \\

                \bottomrule
            \end{tabular}
        }
    \end{table}

    Let us compute the degrees of the polynomials in the GTDS.
    \begin{lem}\label{Lem: degrees in GTDS}
        Let $n, d_1, e \geq 1$ be integers, and let $\mathcal{F}_\Arion = \{ f_1, \dots, f_n \}$ be an \Arion GTDS.
        Then
        \begin{equation*}
            \degree{f_i} = 2^{n - i} \cdot (d_1 + e) - d_1.
        \end{equation*}
    \end{lem}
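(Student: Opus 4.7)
The plan is to prove the formula by descending induction on $i$, from $i = n$ down to $i = 1$. The base case is immediate: $f_n(x_1, \dots, x_n) = x_n^e$ has degree $e$, which matches $2^{n-n}(d_1 + e) - d_1 = e$.

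For the inductive step I first want to record a structural observation that makes the computation clean. By unfolding the definition, each $f_j$ depends only on the variables $x_j, x_{j+1}, \ldots, x_n$; equivalently, the variable $x_i$ does not occur in $\sigma_{i+1,n}$, nor therefore in $g_i(\sigma_{i+1,n})$ or $h_i(\sigma_{i+1,n})$. Assuming the formula holds for every $j > i$, the degrees $\deg(f_j) = 2^{n-j}(d_1 + e) - d_1$ are strictly decreasing in $j$, so the maximum is achieved at $j = i+1$, giving
\begin{equation*}
    \deg(\sigma_{i+1,n}) \;=\; \deg(f_{i+1}) \;=\; 2^{n-i-1}(d_1 + e) - d_1,
\end{equation*}
since the linear summands $x_j$ are dominated by this value under the assumption $d_1, e \geq 2$.

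Next, because $g_i$ and $h_i$ are monic quadratics, both $g_i(\sigma_{i+1,n})$ and $h_i(\sigma_{i+1,n})$ have degree $2 \cdot \deg(\sigma_{i+1,n}) = 2^{n-i}(d_1 + e) - 2 d_1$. Multiplying by $x_i^{d_1}$ then yields
\begin{equation*}
    \deg\!\bigl(x_i^{d_1} \cdot g_i(\sigma_{i+1,n})\bigr) \;=\; d_1 + 2^{n-i}(d_1 + e) - 2 d_1 \;=\; 2^{n-i}(d_1 + e) - d_1.
\end{equation*}

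The only delicate point, which I expect to be the main obstacle, is to argue that no cancellation can lower the degree of $f_i = x_i^{d_1} g_i(\sigma_{i+1,n}) + h_i(\sigma_{i+1,n})$. Two independent arguments both suffice. First, the two summands have different degrees: the second has degree $2^{n-i}(d_1+e) - 2 d_1$, which is strictly smaller than that of the first by $d_1 > 0$. Second, even within the first summand, monicity of $g_i$ ensures its leading term is $\sigma_{i+1,n}^2$, so the leading term of $x_i^{d_1} g_i(\sigma_{i+1,n})$ is $x_i^{d_1}$ times the square of the leading term of $f_{i+1}$; since $x_i$ does not appear in $h_i(\sigma_{i+1,n})$ by the structural observation above, no monomial from the second summand can cancel it. The advertised equality $\deg(f_i) = 2^{n-i}(d_1+e) - d_1$ follows, closing the induction.
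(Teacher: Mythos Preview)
Your proof is correct and follows essentially the same approach as the paper's: descending induction on $i$, with the key observation that the top-degree part of $f_i$ comes from $x_i^{d_1}$ times the square of (the leading part of) $f_{i+1}$. Your write-up is in fact more careful than the paper's, which simply asserts that ``the leading monomial of $f_{i-1}$ is the leading term of $x_{i-1}^{d_1}\cdot f_i^2$'' without spelling out the non-cancellation argument that you provide.
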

    \begin{proof}
        We perform an upwards induction, for $n$ and $n - 1$ the claim is clear.
        Suppose the claim is true for indices greater than or equal to $i$, i.e.\ $\degree{f_i} = \degree{f_i} = 2^{n - i} \cdot (d_1 + e) - d_1$.
        By construction, the leading monomial of $f_{i - 1}$ is the leading term of the polynomial $x_{i - 1}^{d_1} \cdot f_i^2$.
        Thus,
        \begin{align*}
            \degree{f_{i - 1}}
            &= \degree{x_{i - 1}^{d_1} \cdot f_i^2}
            = d_1 + 2 \cdot \left( 2^{n - i} \cdot (d_1 + e) - d_1 \right) \\
            &= 2^{n - (i - 1)} \cdot (d_1 + e) - d_1,
        \end{align*}
        which proves the claim. \qed
    \end{proof}

    To introduce mixing between the blocks we chose a circulant matrix whose product with a vector can be efficiently evaluated.
    \begin{defn}[Affine layer of \Arion]\label{Def: affine layer}
        Let $p \in \mathbb{Z}$ be a prime and let $\Fp$ be the field with $p$ elements.
        The affine layer of {\Arion} is defined as
        \begin{equation*}
            \mathcal{L}_\mathbf{c}: \Fpn \to \Fpn, \quad \mathbf{x} \mapsto \circulant \left(1, \dots, n\right) \mathbf{x} + \mathbf{c},
        \end{equation*}
        where $\circulant \left( 1, \dots, n \right) \in \Fpnxn$ is the circulant matrix\footnote{
            We shortly recall the definition of (right) circulant matrices: Let $k$ be a field and let $\mathbf{v} = (v_1, \dots, v_n) \in k^n$, then the circulant matrix of $\mathbf{v}$ is defined as
            \begin{equation*}
                \circulant (\mathbf{v}) =
                \begin{pmatrix}
                    v_1 & v_2 & \ldots & v_{n - 1} & v_n        \\
                    v_n & v_1 & \ldots & v_{n - 2} & v_{n - 1}  \\
                    &     & \ddots &           &            \\
                    v_2 & v_3 & \ldots & v_n       & v_1
                \end{pmatrix}
                .
            \end{equation*}
        } with entries $1, \dots, n$ in the first row and $\mathbf{c} \in \Fpn$ is a constant vector.
    \end{defn}
    \begin{rem}
        For any prime number $p \in \mathbb{Z}$ with $p > 130$ and $n = 2, 3, 4$ the matrix $\circulant \left( 1, \dots, n \right)$ is a MDS matrix over $\Fp$.
    \end{rem}

    The following algorithm provides an efficient way to evaluate the matrix-vector product for $\circulant \left( 1, \dots, n \right)$.
    \begin{algorithm}[H]
        \caption{Efficient evaluation of matrix-vector product}
        \label{Alg: efficient algorithm}
        \begin{algorithmic}[1]
            \Statex \textbf{Input}
            \StateX $\mathbf{v} = \left( v_1, \dots, v_n \right)^\intercal \in \Fpn$

            \Statex \textbf{Output}
            \StateX $\circulant \left( 1, \dots, n \right) \mathbf{v} \in \Fpn$

            \State Initialize $\mathbf{w} = \left( 0, \dots, 0 \right) \in \Fpn$.

            \State Compute $\sigma = \sum_{i = 1}^{n} v_i$ and set $w_1 = \sigma + \sum_{i = 1}^{n} (i - 1) \cdot v_i$.

            \State Set $i = 2$.

            \While{$i \leq n$}

            \State Set $w_i = w_{i - 1} - \sigma + n \cdot v_{i - 1}$.

            \State $i = i + 1$.

            \EndWhile

            \State \Return $\mathbf{w}$
        \end{algorithmic}
    \end{algorithm}

    To define a keyed permutation we need a key addition which we denote as
    \begin{equation*}
        \mathcal{K}_\mathbf{k} : \Fpn \times \Fpn \to \Fpn, \qquad (\mathbf{x}, \mathbf{k}) \mapsto \mathbf{x} + \mathbf{k}.
    \end{equation*}
    The keyed permutation \Arion is now defined as follows.
    \begin{defn}[\Arion]
        Let $p \in \mathbb{Z}$ be a prime and let $\Fp$ be the field with $p$ elements, and let $n > 1$ and $r \geq 1$ be integers.
        For $1 \leq i \leq r$ let $\mathcal{F}^{(i)}_\Arion : \Fpn \to \Fpn$ be an \Arion GTDS and for $1 \leq i \leq r$ let $\mathcal{L}_{\mathbf{c}_i}: \Fpn \to \Fpn$ be affine layers from \Cref{Def: affine layer}.
        The $i$\textsuperscript{th} round function of {\Arion} is defined as
        \begin{equation*}
            \begin{split}
                \mathcal{R}_{\mathbf{k}}^{(i)}: \Fpn &\times \Fpn \to \Fpn, \\
                \left( \mathbf{x}, \mathbf{k} \right) &\mapsto \mathcal{K}_\mathbf{k} \circ \mathcal{L}_{\mathbf{c}_i} \circ \mathcal{F}^{(i)}_\Arion \left( \mathbf{x} \right).
            \end{split}
        \end{equation*}
        Then \Arion is defined as the following composition
        \begin{equation*}
            \begin{split}
                \Arion: \Fpn \times \Fp^{n \times (r + 1)} &\to \Fpn, \\
                \big( \mathbf{x}, \mathbf{k}_0, \mathbf{k}_1, \dots, \mathbf{k}_r \big) &\mapsto \mathcal{R}_{\mathbf{k}_r}^{(r)} \circ \cdots \circ \mathcal{R}_{\mathbf{k}_1}^{(1)} \circ \mathcal{L}_{\mathbf{0}} \circ \mathcal{K}_{\mathbf{k}_0} \left( \mathbf{x} \right).
            \end{split}
        \end{equation*}
    \end{defn}
    Further, we denote with \Arionpi the unkeyed permutation, i.e.\ \Arion is instantiated with the key $\mathbf{k}_0 = \ldots = \mathbf{k}_r = \mathbf{0}$.

    Since our final aim is to construct a hash function using the \Arionpi, we analyze \Arion only for keys $\mathbf{k}_j = \mathbf{k}$, where $\mathbf{k} \in \Fpn$, in every round.
    We do not give any key scheduling algorithm for keys whose sizes are larger than the block size. Instantiation of \Arion with such a key is not the main topic of this paper.

    \subsection{Hash Function}
    For the hash function \ArionHash over $\Fpn$ we instantiate \Arionpi in sponge mode \cite{Sponge,EC:BDPV08}.
    The state size $n = r + c$ is split into the rate part $r$ and the capacity part $c$.
    In \cite[Theorem~2]{EC:BDPV08} it has been proven that for a random permutation the sponge construction is indistinguishable from a random oracle up to $\min \left\{ p^r, p^{c / 2} \right\}$ queries.
    Therefore, to provide $\kappa$ bits of security, $p^r \geq 2^\kappa$ and $p^{c / 2} \geq 2^\kappa$, we require that
    \begin{equation}
        r \geq \frac{\kappa}{\log_2 \left( p \right)}, \qquad\text{and}\qquad c \geq \frac{2 \cdot \kappa}{\log_2 \left( p \right)}.
    \end{equation}
    Given an input message $m$ we choose a similar padding rule as for \Poseidon \cite[§4.2]{USENIX:GKRRS21}, we add the smallest number of zeros $< r$ such that the size of $m \mid\mid 0^\ast$ is a multiple of $r$.
    If we have to pad the message, then we replace the initial value $\texttt{IV} \in \Fp^c$ with $|m| \mid\mid \texttt{IV}' \in \Fp^c$, where $|m| \in \Fp$ is the size of the input message $m$ and $\texttt{IV}' \in \Fp^{c - 1}$ is an initial value.

    \subsection{Instantiations}
    Target primes for \Arion \& \ArionHash are the $250$ bit prime numbers $\text{BLS}12$ and $\text{BN}254$\footnote{
        ${\scriptstyle\text{BLS}12 = 0x73eda753299d7d483339d80809a1d80553bda402fffe5bfeffffffff00000001}$, \\
        ${\scriptstyle\text{BN}254 = 0x30644e72e131a029b85045b68181585d2833e84879b9709143e1f593f0000001}$.}.
    Since degree growth of the \Arion GTDS is dominated by the power permutation in the bottom component we list the smallest integer $m \in \mathbb{Z}$ such that $m \cdot e \geq p$ in \Cref{Tab: degree overflow}.
    \begin{table}[H]
        \centering
        \caption{Smallest positive integer so that $m \in \mathbb{Z}$ such that $m \cdot e \geq p$ for $\text{BLS}12$ and $\text{BN}254$.}
        \label{Tab: degree overflow}
        \begin{tabular}{ M{7mm} | M{1.2cm} | M{1.2cm} }
            \toprule
            & \multicolumn{2}{ c }{$\left\lceil p / e \right\rceil$} \\
            \midrule
            $d_2$ & $\text{BLS}12$ & $\text{BN}254$ \\
            \midrule
            $121$ & n.a.\ & $3$   \\
            $123$ & n.a.\ & n.a.\ \\
            $125$ & $3$   & $2$   \\
            $129$ & n.a.\ & n.a.\ \\
            $161$ & $3$   & $4$   \\
            $193$ & $13$  & $3$   \\
            $195$ & n.a.\ & n.a.\ \\
            $257$ & $3$   & $2$   \\
            \bottomrule
        \end{tabular}
    \end{table}
    Therefore, by \Cref{Lem: degrees in GTDS} for $n = 3$ and all exponents except $d_2 = 193$ the \Arion GTDS surpasses degree $p$ in the first component.

    In \Cref{Tab: Arion parameters} we provide the parameters for \Arion and \ArionHash and their aggressive versions \aArion and \aArionHash with $d_1, d_2 \in \mathbb{Z}$ such that $\gcd \left( d_i, p - 1 \right) = 1$, $d_1 = 3, 5$ and $121 \leq d_2 \leq 257$.
    The number of rounds for \Arion and \ArionHash are chosen to provide 128 bit security against the most efficient probabilistic algorithm (available till date) for polynomial system solving in a Gr\"obner basis attack on \ArionHash.
    Since $2^{\frac{250}{2}} = 2^{125}$ we consider all possible rate-capacity pairs $n = c + r$ suitable for \ArionHash over $\text{BLS}12$ and $\text{BN}254$.
    As hash output of \ArionHash over $\text{BLS}12$ and $\text{BN}254$ we recommend to use a single $\Fp$ element.
    \begin{table}[H]
        \centering
        \caption{\Arion, \ArionHash, \aArion and \aArionHash parameters over primes $p \geq 2^{250}$ with $d_1, d_2 \in \mathbb{Z}$ such that $\gcd \left( d_i, p - 1 \right) = 1$ and $121 \leq d_2 \leq 257$ and 128 bit security.}
        \label{Tab: Arion parameters}
        \begin{tabular}[t]{ M{1.2cm} | M{3cm} | M{3.5cm}  }
            \toprule
            & \multicolumn{2}{ c }{Rounds} \\
            \midrule

            Blocks & \Arion \& \ArionHash & \aArion \& \aArionHash \\
            \midrule

            \multicolumn{3}{ c }{$d_1 = 3$} \\
            \midrule

            $3$ & $6$ & $5$ \\
            $4$ & $6$ & $4$ \\
            $5$ & $5$ & $4$ \\
            $6$ & $5$ & $4$ \\
            $8$ & $4$ & $4$ \\

            \midrule
            \multicolumn{3}{ c }{$d_1 = 5$} \\
            \midrule

            $3$ & $6$ & $4$ \\
            $4$ & $5$ & $4$ \\
            $5$ & $5$ & $4$ \\
            $6$ & $5$ & $4$ \\
            $8$ & $4$ & $4$ \\

            \bottomrule
        \end{tabular}
    \end{table}
    The number of rounds for \aArion and \aArionHash are chosen to provide 128 bit security against the most efficient deterministic algorithm for polynomial system solving in a Gr\"obner basis attack on \ArionHash.
    For more details on the security with respect to Gr\"obner basis attacks we refer to Appendix C in the full version of the paper \cite{Arion}.

    \section{Security Analysis of \Arion}\label{Sec: security analysis}
    \subsection{Statistical Cryptanalysis}
    \subsubsection{Differential Cryptanalysis.}
    In differential cryptanalysis \cite{JC:BihSha91} and its variants the propagation of input differences through the rounds of a block cipher or hash function is exploited to recover the secret key or to construct a collision.
    For the \Arion GTDS the probability that an input difference $\boldsymbol{\Delta x} \in \Fqn \setminus \{ \mathbf{0} \}$ propagates to the output difference is $\boldsymbol{\Delta y} \in \Fqn$ is bounded by (see \cite[Theorem~18, Corollary~19]{GTDS})
    \begin{equation}\label{Equ: differntial probability bound}
        \prob \left[ \mathcal{F}_\Arion \!: \boldsymbol{\Delta x} \to \boldsymbol{\Delta y} \right] \leq \left( \frac{d_2}{p} \right)^{\wt \left( \boldsymbol{\Delta x} \right)} \leq \frac{d_2}{p},
    \end{equation}
    where $\wt: \Fqn \to \mathbb{Z}$ denotes Hamming weight.
    For $p \geq 2^{250}$ and $d_2 \leq 2^9$ this probability is bounded by $2^{-241 \cdot \wt \left( \boldsymbol{\Delta x} \right)} \leq 2^{-241}$.
    Under the assumption that the rounds of \Arion are statistically independent we can estimate the probability of any non-trivial differential trail via $2^{-241 \cdot r}$.
    Moreover, even if an adversary can search a restricted differential hull of size $2^{120}$ between the $2$\textsuperscript{nd} and the $r$\textsuperscript{th} round, then two rounds are already sufficient to provide $128$ bit security against differential cryptanalysis.
    For more details we refer to Appendix A.1 in the full version of the paper \cite{Arion}.

    Note that a small differential probability also rules out the boomerang attack \cite{FSE:Wagner99,C:JouPey07} which exploits two complementary differential patterns that span the cipher, of which one must at least cover two rounds.

    \subsubsection{Truncated Differential \& Rebound Cryptanalysis.}
    In a truncated differential attack \cite{FSE:Knudsen94} an attacker can only predict parts of the difference between pairs of text.
    We expect that the \Arion GTDS admits truncated differentials of Hamming weight $1$ with probability $1$ for the first round.
    On the other hand, if $\wt (\mathbf{v}) = 1$, then we have that $\wt \big( \circulant (1, \dots, n) \mathbf{v} \big) = n$.
    Therefore, such a truncated differential activates all inputs in the second round of \Arion.
    Hence, for $p \geq 2^{250}$ and $d_2 \leq 2^{9}$ the differential probability for the second round is bounded by $2^{-250 \cdot n}$.
    Even if an adversary can search restricted differential hulls of size $2^{120}$ after the first round, this probability and \Cref{Equ: differntial probability bound} nullify truncated differential attacks within the $128$ bit security target.

    In a rebound attack \cite{AC:LMRRS09,FSE:MRST09} an adversary connects two (truncated) differentials in the middle of a cipher or hash function.
    Probability $1$ truncated differentials can cover at most one round of \Arion, so $r - 2$ rounds can be covered with an inside-out approach.
    By our previous analysis we do not expect that a successful rebound attack can be mounted on $4$ or more rounds on \Arion \& \ArionHash within the $128$ bit security target.

    For more details we refer to Appendix A.2 in the full version of the paper \cite{Arion}.

    \subsubsection{Linear Cryptanalysis.}
    Linear cryptanalysis \cite{EC:Matsui93,SAC:BaiSteVau07} utilizes affine approximations of round functions for a sample of known plaintexts.
    For any additive character $\chi: \Fq \to \mathbb{C}$ and any affine approximation $\mathbf{a}, \mathbf{b} \in \Fqn \setminus \{ \mathbf{0} \}$, the linear probability of the \Arion GTDS is bounded by (see \cite[Theorem~24, Corollary~25]{GTDS})
    \begin{equation}
        \LP_{\mathcal{F}_\Arion} (\chi, \mathbf{a}, \mathbf{b}) \leq \frac{\left( d_2 - 1 \right)^2}{q}.
    \end{equation}
    Therefore, for $p \geq 2^{250}$ and $d_2 \leq 2^9$ this probability is bounded by $2^{-232}$, and henceforth under the assumption of statistically independent rounds of \Arion the linear probability of any non-trivial linear trail is bounded by $2^{-232 \cdot r}$.
    Moreover, even if an adversary can search a restricted linear hull of size $2^{120}$ between the $2$\textsuperscript{nd} and the $r$\textsuperscript{th} round, then two rounds are already sufficient to provide $128$ bit security against linear cryptanalysis.
    For more details we refer to Appendix A.3 in the full version of the paper \cite{Arion}.

    \subsection{Algebraic Cryptanalysis}\label{Sec: algebraic analysis}
    \subsubsection{Interpolation \& Integral Cryptanalysis.}
    Interpolation attacks \cite{FSE:JakKnu97} construct the polynomial vector representing a cipher without knowledge of the secret key.
    If such an attack is successful against a cipher, then an adversary can encrypt any plaintext without knowledge of the secret key.
    Recall that any function $F: \Fqn \to \Fq$ can be represented by a polynomial $f \in \Fp [\mathbf{X}_n] = \Fq [x_1, \dots, x_n] / \left( x_1^q - x_1, \dots, x_n^q - x_n \right)$, thus at most $q^n$ monomials can be present in $f$.
    After the first round of \Arionpi we expect that the terms
    \begin{equation}
        \left( \sum_{i = 1}^{n} i \cdot x_i \right)^e + \sum_{i = 1}^{n} i \cdot x_i
    \end{equation}
    are present in every branch.
    After another application of the round function we expect to produce the terms
    \begin{equation}
        \left( \left( \sum_{i = 1}^{n} i \cdot x_i \right)^e + \sum_{i = 1}^{n} i \cdot x_i \right)^e \mod \left( x_1^p - x_1, \dots, x_n^p - x_n \right)
    \end{equation}
    in every branch.
    By our specification $e$ is the inverse exponent of a relatively low degree permutation, therefore we expect that after two rounds almost all monomials from $\Fp [\mathbf{X}_n]$ are present in every component of \Arion.
    For more details we refer to Appendix B.1 in the full version of the paper \cite{Arion}.

    For a polynomial $f \in \Fq [x_1, \dots, x_n]$ an integral distinguisher \cite{FSE:KnuWag02,C:BCDELLNPSTW20} exploits that for any affine subspace $V \subset \Fqn$ with $\degree{f} < \dim \left( V \right) \cdot \left( q - 1 \right)$ one has that
    \begin{equation}
        \sum_{\mathbf{x} \in V} f (\mathbf{x}) = 0.
    \end{equation}
    If almost all monomials are present in \Arionpi, then $\degree{\text{\Arionpi}} \approx n \cdot (q - 1)$ in every component, so only $V = \Fqn$ is a suitable subspace for an integral distinguisher.
    Therefore, we do not expect that non-trivial integral distinguishers exist for \Arionpi.
    For more details we refer to Appendix B.2 in the full version of the paper \cite{Arion}.

    \subsubsection{Gr\"obner Basis Analysis.}
    In a Gr\"obner basis attack \cite{Buchberger,Cox-Ideals} one models a cipher or hash function as fully determined polynomial system and then solves for the key or preimage.
    For Gr\"obner basis analysis of \Arion \& \ArionHash we assume that a degree reverse lexicographic (DRL) Gr\"obner basis can be found in $\mathcal{O} (1)$.
    We base the security of \Arion \& \ArionHash solely on the complexity of solving their polynomial systems via state of the art deterministic and probabilistic Gr\"obner basis conversion algorithms \cite{Faugere-FGLM,Faugere-SparseFGLM,Faugere-SubCubic} combined with the univariate polynomial solving algorithm of Bariant et al.\ \cite[§3.1]{ToSC:BBLP22}.
    With this methods, solving a fully determined polynomial system over a finite field $\Fq$  with known DRL Gr\"obner basis via deterministic methods requires
    \begin{equation}
        \mathcal{O} \left( n \cdot d^\omega + d \cdot \log \left( q \right) \cdot \log \left( d \right) \cdot \log \big( \log \left( d \right) \big) + d \cdot \log \left( d \right)^2 \cdot \log \big( \log \left( d \right) \big) \right)
    \end{equation}
    field operations, and with deterministic methods
    \begin{equation}
        \mathcal{O} \left( \sqrt{n} \cdot d^{2 + \frac{n - 1}{n}} + d \cdot \log \left( q \right) \cdot \log \left( d \right) \cdot \log \big( \log \left( d \right) \big) + d \cdot \log \left( d \right)^2 \cdot \log \big( \log \left( d \right) \big) \right)
    \end{equation}
    field operations, where $q$ is the size of the finite field, $n$ is the number of variables, $d$ is the $\Fq$-vector space dimension of the polynomial ring modulo the polynomial system and $2 \leq \omega < 2.3727$ is a linear algebra constant.
    We conjecture that the quotient space dimension of \Arion grows or bounded by
    \begin{equation}
        \dim_{\Fp} \left( \mathcal{F}_\Arion \right) (n, r, d_1, d_2) = \left( d_2 \cdot \left( d_1 + 2 \right)^{n - 1} \right)^r,
    \end{equation}
    and for \ArionHash we conjecture that the dimension grows or is bounded by
    \begin{equation}
        \dim_{\Fp} \left( \mathcal{F}_\ArionHash \right) (n, r, d_1, d_2) = \Big( 2^{n - 1} \cdot d_2 \cdot \left( d_1 + 1 \right) - d_1 \cdot d_2 \Big)^r.
    \end{equation}
    Round numbers for \Arion \& \ArionHash in \Cref{Tab: Arion parameters} are chosen to resist deterministic as well as probabilistic Gr\"obner basis attacks against an ideal adversary with $\omega = 2$ within the 128 bit security target.
    Round numbers for \aArion \& \aArionHash in \Cref{Tab: Arion parameters} are chosen to resist only deterministic Gr\"obner basis attacks within the 128 bit security target.

    For \ArionHash one can set up a collision polynomial system by connecting two preimage polynomial systems.
    Note that this polynomial system is in general not fully determined, therefore an adversary has to randomly guess some variables before solving the system.
    If an adversary guesses output variables of the sponge until the collision polynomial system is fully determined, then we conjecture that the quotient space dimension of the collision polynomial system grows or is bounded by
    \begin{equation}
        \dim_{\Fp} \left( \mathcal{F}_{\ArionHash, coll} \right) \left( n, r, d_1, d_2 \right) = \left( \dim_{\Fp} \left( \mathcal{F}_\ArionHash \right) \left( n, r, d_1, d_2 \right) \right)^2.
    \end{equation}
    Thus, we do not expect a collision Gr\"obner basis attack to be more performative than a preimage attack.
    For more details we refer to Appendix C in the full version of the paper \cite{Arion}.

    \section{Performance Evaluation}\label{Sec: performance}
    In this section, we compare various instances of \ArionHash, \Anemoi, \Griffin and \Poseidon with respect to \RoneCS (\Cref{Sec: R1CS}) and \Plonk (\Cref{Sec: Plonk}).
    For starters, we discuss the theoretical foundation of an efficient implementation of an \ArionHash circuit.
    In the \Anemoi proposal it was revealed that CCZ-equivalence is a route to construct high degree permutations that can be verified with CCZ-equivalent low degree functions \cite[§4.1]{Anemoi}.
    In \Cref{Sec: reducing the constraints} we follow this approach to prove that a \ArionHash circuit can be transformed into an efficient circuit that avoids the computation of $x^e$ via an affine transformation.

    \subsection{Reducing the Number of Constraints}\label{Sec: reducing the constraints}
    By definition of the \Arion GTDS a prover circuit will have to verify that
    \begin{equation}\label{Equ: naive circuit}
        y = x^e,
    \end{equation}
    though since $e$ induces the inverse power permutation to $d_2 \in \{ 121, 123, 125, 129, \allowbreak 161, 193, 195, 257 \}$ the naive circuit for \Cref{Equ: naive circuit} will introduce many constraints.
    On the other hand, from a prover's perspective \Cref{Equ: naive circuit} is equivalent to
    \begin{equation}\label{Equ: equivalent circuit}
        y^{d_2} = \left( x^e \right)^{d_2} = x,
    \end{equation}
    for all $x \in \Fp$.
    Thus, in an implementation to reduce the number of multiplicative constraints we are well advised to implement the equivalent circuit instead of the naive circuit.
    We also would like to note that the same trick was applied in \texttt{Griffin} \cite{Griffin} to reduce the number of constraints.

    In the design of \Anemoi \cite[§4]{Anemoi} a new tool was introduced to reduce the number constraints for an \Anemoi circuit: CCZ-equivalence \cite{Carlet-CCZ}.
    The authors have found a high degree permutation, the open \texttt{Flystel}, which is CCZ-equivalent to a low degree function, the closed \texttt{Flystel}.
    Consequently, this can be exploited to significantly reduce the number of constraints in a prover circuit (cf.\ \cite[Corollary~2]{Anemoi}).
    Let us now formalize the trick in \Cref{Equ: equivalent circuit} in terms of CCZ-equivalence.
    \begin{defn}
        Let $\Fq$ be a finite field, and let $F, G: \Fqn \to \Fqm$ be functions.
        \begin{enumerate}
            \item The graph of $F$ is defined as
            \begin{equation*}
                \Gamma_F = \Big\{ \big(\mathbf{x}, F (\mathbf{x}) \big) \mid \mathbf{x} \in \Fqn \Big\}.
            \end{equation*}

            \item $F$ and $G$ are said to be CCZ-equivalent if there exists an affine permutation $\mathcal{A}$ of $\Fqn \times \Fqm$ such that
            \begin{equation*}
                \Gamma_F = \mathcal{A} (\Gamma_G).
            \end{equation*}
        \end{enumerate}
    \end{defn}

    Now let us describe a GTDS that is equivalent to the \Arion GTDS.
    \begin{prop}\label{Prop: CCZ-equivalence}
        Let $\Fp$ be a prime field, and let $n, d_1, d_2, e \in \mathbb{Z}_{> 1}$ be integers such that
        \begin{enumerate}[label=(\roman*)]
            \item $d_1$ is the smallest positive integer such that $\gcd \left( d_1, p - 1 \right) = 1$,

            \item $d_2$ is an arbitrary integer such that $\gcd \left( d_2, p - 1 \right) = 1$, and

            \item $e \cdot d_2 \equiv 1 \mod p - 1$.
        \end{enumerate}
        Let $\mathcal{F}_\Arion = \{ f_1, \dots, f_n \}$ be the \Arion GTDS, let $g_i, h_i \in \Fq [x]$ be the
        polynomials that define $\mathcal{F}_\Arion$, and let the GTDS
        $\mathcal{G} = \{ \hat{f}_1, \dots, \hat{f}_n \}$ be defined as
        \begin{equation*}
            \begin{split}
                \hat{f}_i (x_1, \dots, x_n) &= x_i^{d_1} \cdot g_i (\tau_{i + 1, n}) +
                h_i (\tau_{i + 1, n}), \qquad 1 \leq i \leq n - 1, \\
                \hat{f}_n (x_1, \dots, x_n) &= x_n^{d_2},
            \end{split}
        \end{equation*}
        where
        \begin{equation*}
            \tau_{i + 1, n} = \sum_{j = i + 1}^{n} x_j + \hat{f}_j (x_1, \dots, x_n).
        \end{equation*}
        Then $\mathcal{F}_\Arion$ is CCZ-equivalent to $\mathcal{G}$.
    \end{prop}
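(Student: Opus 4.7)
The plan is to exhibit an explicit affine permutation $\mathcal{A}$ of $\Fpn \times \Fpn$ that witnesses CCZ-equivalence, namely the linear involution
\begin{equation*}
    \mathcal{A}: (x_1, \dots, x_{n-1}, x_n, y_1, \dots, y_{n-1}, y_n) \mapsto (x_1, \dots, x_{n-1}, y_n, y_1, \dots, y_{n-1}, x_n)
\end{equation*}
which simply swaps the $n$-th input slot with the $n$-th output slot. Since $e \cdot d_2 \equiv 1 \bmod p-1$ and $z^p = z$ holds for every $z \in \Fp$ (including $z=0$), the maps $z \mapsto z^e$ and $z \mapsto z^{d_2}$ are mutually inverse permutations of $\Fp$, so this swap is a bijection on the last coordinates precisely along the graph.

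First I would reparametrize $\Gamma_\mathcal{G}$ using the substitution $x_n = z^e$, so that the generic point becomes $(x_1, \dots, x_{n-1}, z^e, \hat f_1, \dots, \hat f_{n-1}, z)$, because $\hat f_n(x) = x_n^{d_2} = z^{e d_2} = z$. Applying $\mathcal{A}$ then yields $(x_1, \dots, x_{n-1}, z, \hat f_1(\ldots, z^e), \dots, \hat f_{n-1}(\ldots, z^e), z^e)$, and I want to match this against the generic element $(x_1, \dots, x_{n-1}, z, f_1(\ldots, z), \dots, f_{n-1}(\ldots, z), z^e)$ of $\Gamma_{\mathcal{F}_\Arion}$. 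The $n$-th input and the $n$-th output already coincide, so only the middle components remain.

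The core of the argument is a downward induction on $i$ establishing
\begin{equation*}
    \hat f_i(x_1, \dots, x_{n-1}, z^e) = f_i(x_1, \dots, x_{n-1}, z) \quad \text{for all } 1 \leq i \leq n-1,
\end{equation*}
together with the auxiliary identity $\tau_{i+1, n}\big|_{x_n = z^e} = \sigma_{i+1, n}\big|_{x_n = z}$. The base case $i = n-1$ comes from $\tau_{n, n}|_{x_n = z^e} = z^e + (z^e)^{d_2} = z^e + z = z + z^e = \sigma_{n, n}|_{x_n = z}$, which uses exactly the congruence $e d_2 \equiv 1 \bmod p-1$. Since $g_i$ and $h_i$ are the same polynomials in both GTDS, equality of the argument implies equality of $g_i(\cdot) + h_i(\cdot)$ terms, and the inductive hypothesis for $j > i$ immediately propagates to $\tau_{i+1,n}|_{x_n = z^e} = \sigma_{i+1,n}|_{x_n = z}$ summand by summand, whence $\hat f_i = f_i$ under the substitution. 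Coupled with the parametrization shift above, this gives $\mathcal{A}(\Gamma_\mathcal{G}) = \Gamma_{\mathcal{F}_\Arion}$.

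The only subtle point, and the place where one has to pay attention, is the interplay between the substitution $x_n = z^e$ and the recursive definition of $\tau_{i+1,n}$: one must verify that the identity $x_n + \hat f_n = z^e + z$ obtained via $z^{ed_2} = z$ carries through for the input $x_n = 0$ as well, which it does because both sides are $0$ there. Everything else reduces to bookkeeping on the recursive definitions and observing that $\mathcal{A}$ is manifestly a linear bijection, hence an affine permutation, so the CCZ-equivalence follows.
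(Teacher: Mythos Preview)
Your proof is correct and follows essentially the same approach as the paper: both use the linear involution swapping the $n$-th and $(2n)$-th coordinates, reparametrize one of the graphs via the inverse power permutation, and then verify $\sigma_{i+1,n} = \tau_{i+1,n}$ by a downward induction starting from the identity $z^{e d_2} = z$. The only cosmetic difference is direction: the paper substitutes $x_n = \hat{x}_n^{d_2}$ into $\Gamma_{\mathcal{F}_\Arion}$ and applies $\mathcal{A}$ to land in $\Gamma_\mathcal{G}$, whereas you substitute $x_n = z^e$ into $\Gamma_\mathcal{G}$ and apply $\mathcal{A}$ to land in $\Gamma_{\mathcal{F}_\Arion}$; since $\mathcal{A}$ is an involution these are equivalent.
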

    \begin{proof}
        We consider the affine permutation $\mathcal{A}: \Fq^{2n} \to \Fq^{2n}$ that swaps the $n$\textsuperscript{th} element with the $(2n)$\textsuperscript{th} element, moreover we consider the substitution $\mathbf{x} = \big( \hat{x}_1, \dots, \hat{x}_{n - 1}, \hat{x}_n^{d_2} \big)$.
        Now we apply the affine permutation to $\Gamma_{\mathcal{F}_\Arion}$ which yields
        \begin{equation*}
            \mathcal{A} \big( \mathbf{x}, \mathcal{F}_\Arion (\mathbf{x}) \big) =
            \begin{pmatrix}
                \{ \hat{x}_i \}_{1 \leq i \leq n - 1} \\[2pt]
                \hat{x}_n^{e \cdot d_2} \\[2pt]
                \Big\{ f_i \left( \hat{x}_i, \dots, \hat{x}_n^{d_2} \right) \Big\}_{1 \leq i \leq n - 1} \\[2pt]
                \hat{x}_n^{d_2}
            \end{pmatrix}
            .
        \end{equation*}
        By construction of $d_2$ and $e$ we have that $x^{e \cdot d_2} = x$ for every $x \in \Fp$.
        Let's now investigate what happens to the $f_i$'s.
        Starting with $f_{n - 1}$, we have that
        \begin{equation*}
            \sigma_{n, n} \left( \hat{x}_n^{d_2} \right) = \hat{x}_n^{d_2} + \hat{x}_n^{e \cdot d_2} = \hat{x}_n + \hat{x}_n^{d_2} = \tau_{n, n} (\hat{x}_n),
        \end{equation*}
        for all $\hat{x}_n \in \Fp$, and therefore
        \begin{equation*}
            f_{n - 1} \left( \hat{x}_{n - 1}, \hat{x}_n^{d_2} \right) = \hat{f}_{n - 1} (\hat{x}_{n - 1}, \hat{x}_n).
        \end{equation*}
        Inductively, we now go through all the branches to conclude that $f_i \left( \hat{x}_i, \dots, \hat{x}_n^{d_2} \right) = \hat{f}_i (\hat{x}_i, \dots, \hat{x}_n)$ for all $1 \leq i \leq n - 1$ which proves that $\mathcal{A} \big( \mathbf{x}, \mathcal{F} (\mathbf{x}) \big) = \big( \boldsymbol{\hat{x}}, \mathcal{G} (\boldsymbol{\hat{x}}) \big)$. \qed
    \end{proof}

    \begin{cor}\label{Cor: prover transformation}
        Verifying that $(y_1, \dots, y_n) = \mathcal{F} (x_1, \dots, x_n)$ is equivalent to verifying that
        $(y_1, \dots, y_{n - 1}, x_n) = \mathcal{G} (x_1, \dots, x_{n - 1}, y_n)$.
    \end{cor}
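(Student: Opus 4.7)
The plan is to derive the corollary directly from the CCZ-equivalence established in the preceding Proposition, essentially repackaging its content rather than proving anything new. Recall that the affine permutation $\mathcal{A}$ used in that proof is the involution on $\mathbb{F}_p^{2n}$ that swaps the $n$-th and $(2n)$-th coordinates, and that the Proposition establishes $\mathcal{A}(\Gamma_{\mathcal{F}_\Arion}) = \Gamma_{\mathcal{G}}$ under the substitution identifying $\hat{x}_n$ with $y_n$ (so that $\hat{x}_n^{d_2} = y_n^{d_2} = x_n^{e \cdot d_2} = x_n$).

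First, I would translate the statement $(y_1, \dots, y_n) = \mathcal{F}_\Arion(x_1, \dots, x_n)$ into the graph-theoretic assertion that the tuple $(x_1, \dots, x_n, y_1, \dots, y_n)$ lies in $\Gamma_{\mathcal{F}_\Arion}$. Applying $\mathcal{A}$ to such a tuple yields $(x_1, \dots, x_{n-1}, y_n, y_1, \dots, y_{n-1}, x_n)$. By the Proposition this point lies in $\Gamma_{\mathcal{G}}$, which by definition means exactly $(y_1, \dots, y_{n-1}, x_n) = \mathcal{G}(x_1, \dots, x_{n-1}, y_n)$.

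For the converse direction I would invoke the fact that $\mathcal{A}$ is an involution, so $\mathcal{A}(\Gamma_{\mathcal{G}}) = \Gamma_{\mathcal{F}_\Arion}$ as well. Running exactly the same argument in reverse then shows that the second identity implies the first, establishing the equivalence.

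There is essentially no obstacle: the heavy lifting — identifying the correct affine permutation and verifying the branch-by-branch equality $f_i(\hat{x}_i, \dots, \hat{x}_n^{d_2}) = \hat{f}_i(\hat{x}_i, \dots, \hat{x}_n)$ via the induction on $\sigma_{i+1,n}$ versus $\tau_{i+1,n}$ — has already been carried out inside the Proposition. The only point one must be careful about is bookkeeping of coordinate positions under $\mathcal{A}$, namely noticing that the swap simultaneously moves $x_n$ into the output block and $y_n$ into the input block, which is exactly what turns the equation $y_n = x_n^e$ into the more circuit-friendly equation $y_n^{d_2} = x_n$.
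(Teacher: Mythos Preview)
Your proposal is correct and matches the paper's approach: the paper states the corollary immediately after the proposition with no separate proof, treating it as a direct unpacking of the CCZ-equivalence $\mathcal{A}(\Gamma_{\mathcal{F}_\Arion}) = \Gamma_{\mathcal{G}}$ via the coordinate-swap involution $\mathcal{A}$, exactly as you spell out.
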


    Note that it follows from \cite[Theorem~18, 24]{GTDS} that the \Arion GTDS $\mathcal{F}$ and its CCZ-equivalent GTDS $\mathcal{G}$ from \Cref{Prop: CCZ-equivalence} are in the same security class with respect to differential and linear cryptanalysis.
    Unlike as for \Anemoi the CCZ-equivalent GTDS $\mathcal{G}$ is not a low degree function,
    though when implementing it as prover circuit we never have use multiplications to compute $\tau_{i + 1, n}$.

    \subsection{\RoneCS Performance of \ArionHash}\label{Sec: R1CS}
    Estimating the number of multiplicative constraints in a \RoneCS circuit for \ArionHash is straightforward.
    \begin{lem}
        Let $\Fp$ be a finite field, let $r, n \geq 2$ be integers, and let $\ArionHash$ with $r$ rounds be
        defined over $\Fpn$.
        For $i = 1, 2$ denote with $d_{i, \text{inc}}$ the minimal number of multiplications to compute the univariate power
        permutation $x^{d_i}$.
        Then a prover \RoneCS circuit for \ArionHash needs
        \begin{equation*}
            N_\ArionHash = r \cdot \big( \left( n - 1 \right) \cdot \left( d_{1, \text{inc}} + 2 \right) + d_{2, \text{inc}} \big)
        \end{equation*}
        multiplicative constraints.
    \end{lem}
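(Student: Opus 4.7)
\bigskip

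\noindent\textbf{Proof plan.} The strategy is to count multiplicative constraints round by round, component by component, using that the affine layer $\mathcal{L}_{\mathbf{c}_i}$ and the key addition $\mathcal{K}_\mathbf{k}$ are $\Fp$-affine and hence contribute zero multiplicative constraints. Thus the whole count reduces to counting the constraints incurred by a single evaluation of the GTDS $\mathcal{F}_\Arion$, then multiplying by the number of rounds $r$.

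\bigskip

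\noindent For the bottom branch $f_n(x_1,\dots,x_n)=x_n^e$ the naive circuit would require $e_{\text{inc}}$ multiplications, which is huge since $e$ is the inverse exponent of a low-degree permutation. Here I would invoke \Cref{Cor: prover transformation}: instead of verifying $y_n=x_n^e$ the prover verifies the CCZ-equivalent relation $x_n=y_n^{d_2}$, which costs exactly $d_{2,\text{inc}}$ multiplicative constraints. This accounts for the $d_{2,\text{inc}}$ summand.

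\bigskip

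\noindent For each of the upper branches $1\le i\le n-1$, the polynomial
\begin{equation*}
f_i(x_1,\dots,x_n)=x_i^{d_1}\cdot g_i(\sigma_{i+1,n})+h_i(\sigma_{i+1,n}),
\end{equation*}
with $g_i(t)=t^2+\alpha_{i,1}t+\alpha_{i,2}$ and $h_i(t)=t^2+\beta_i t$, can be written as
\begin{equation*}
f_i=x_i^{d_1}\cdot\bigl(\sigma_{i+1,n}^2+\alpha_{i,1}\sigma_{i+1,n}+\alpha_{i,2}\bigr)+\sigma_{i+1,n}^2+\beta_i\sigma_{i+1,n}.
\end{equation*}
Here $\sigma_{i+1,n}$ is obtained by $\Fp$-linear combinations of already-computed wires, so it is free. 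Then one multiplication computes $\sigma_{i+1,n}^2$ (reused for both $g_i$ and $h_i$), $d_{1,\text{inc}}$ multiplications compute $x_i^{d_1}$, and one final multiplication forms the product $x_i^{d_1}\cdot g_i(\sigma_{i+1,n})$; every remaining operation is an $\Fp$-affine combination and is absorbed. This yields $d_{1,\text{inc}}+2$ multiplicative constraints per upper branch.

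\bigskip

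\noindent Summing over the $n-1$ upper branches and adding the contribution of the bottom branch gives $(n-1)(d_{1,\text{inc}}+2)+d_{2,\text{inc}}$ constraints per round, and the $r$-fold composition yields the claimed total. The only mild subtlety is to check that sharing $\sigma_{i+1,n}^2$ between $g_i$ and $h_i$ is legitimate, i.e.\ that the affine pieces of $g_i$ and $h_i$ do not force extra constraints; this is immediate since both are evaluated on the same argument. No other step presents any obstacle.
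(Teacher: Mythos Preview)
Your proposal is correct and follows essentially the same approach as the paper's proof: invoke \Cref{Cor: prover transformation} for the bottom branch to get $d_{2,\text{inc}}$ constraints, and in each of the remaining $n-1$ branches count $d_{1,\text{inc}}$ for the power permutation, one constraint for the shared square $\sigma_{i+1,n}^2$ feeding both $g_i$ and $h_i$, and one for the product $x_i^{d_1}\cdot g_i$. If anything, your write-up is slightly more explicit than the paper's about why a single squaring suffices for both $g_i$ and $h_i$.
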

    \begin{proof}
        By \Cref{Cor: prover transformation} one needs $d_{2, \text{inc}}$ constraints in the
        $n$\textsuperscript{th} branch.
        In each of the remaining $n - 1$ branches one needs $d_{1, \text{inc}}$ constraints for the power
        permutation, $1$ constraints for the computation of $g_i$ and $h_i$ and $1$ multiplication for
        the product of the power permutation and $g_i$. \qed
    \end{proof}

    Analog the number of \RoneCS constraints for \Anemoi, \Griffin and \Poseidon (cf.\ \cite[\S 7]{Anemoi}, \cite[\S 7.2]{Griffin} and \cite{USENIX:GKRRS21}) are given by
    \begin{align}
        N_\Griffin &= 2 \cdot r \cdot \left( d_{inc} + n - 2 \right), \\
        N_\Anemoi &= \frac{r \cdot n}{2} \cdot \left( d_{inc} + 2\right), \\
        N_\Poseidon &= d_{inc} \cdot \left( n \cdot r_f + r_p \right).
    \end{align}
    In \Cref{Tab: round numbers} we compiled the round numbers of the hash functions.
    \clearpage % to fix overfull box issue
    \begin{table}[H]
        \centering
        \caption{Round numbers for \Anemoi \cite[§A.4]{Anemoi}, \ArionHash (\Cref{Tab: Arion parameters}), \Griffin  \cite[Table~2]{Griffin} and \Poseidon \cite[Table~1]{USENIX:GKRRS21} for $256$ bit prime fields and 128 bit security with $d_2 \in \{ 121, 123, 125, 161, 193, 195, 257 \}$.}
        \label{Tab: round numbers}
        \resizebox{0.7\textwidth}{!}{
            \begin{tabular}{ M{5mm} | M{1.6cm} | M{1.9cm} | M{1.4cm} | M{1.2cm} | M{2.5cm} }
                \toprule
                \multicolumn{6}{ c }{Rounds} \\
                \midrule

                & \ArionHash & \aArionHash & \Griffin & \Anemoi & \Poseidon \\
                \midrule

                $n$ & \multicolumn{5}{ c }{$d_1 = 3$} \\
                \midrule

                $3$ & $6$ & $5$ & $16$ &      & $r_f = 8,\ r_p = 84$ \\
                $4$ & $6$ & $4$ & $14$ & $12$ & $r_f = 8,\ r_p = 84$ \\
                $5$ & $4$ &     &      &      & $r_f = 8,\ r_p = 84$ \\
                $6$ & $5$ & $4$ &      & $10$ & $r_f = 8,\ r_p = 84$ \\
                $8$ & $4$ & $4$ & $11$ & $10$ & $r_f = 8,\ r_p = 84$ \\

                \midrule
                $n$ & \multicolumn{5}{ c }{$d_1 = 5$} \\
                \midrule

                $3$ & $6$ & $4$ & $12$ &      & $r_f = 8,\ r_p = 56$ \\
                $4$ & $5$ & $4$ & $11$ & $12$ & $r_f = 8,\ r_p = 56$ \\
                $5$ & $5$ & $4$ &      &      & $r_f = 8,\ r_p = 56$ \\
                $6$ & $5$ & $4$ &      & $10$ & $r_f = 8,\ r_p = 56$ \\
                $8$ & $4$ & $4$ & $9$  & $10$ & $r_f = 8,\ r_p = 56$ \\

                \bottomrule
            \end{tabular}
        }
    \end{table}
    In \Cref{Tab: R1CS constraints} we compare the theoretical number of constraints for \RoneCS of various hash functions.
    \begin{table}[H]
        \centering
        \caption{\RoneCS constraint comparison $256$ bit prime fields and 128 bit security with $d_2 \in \{ 121, 123, 125, 161, 193, 195, 257 \}$.
            Round numbers for \Anemoi, \Griffin and \Poseidon are taken from \cite[§A.4]{Anemoi}, \cite[Table~1]{Griffin} and \cite[Table~1]{USENIX:GKRRS21}.}
        \label{Tab: R1CS constraints}
        \resizebox{0.7\textwidth}{!}{
            \begin{tabular}{ M{5mm} | M{1.6cm} | M{1.9cm} | M{1.4cm} | M{1.2cm} | M{1.6cm} }
                \toprule
                \multicolumn{6}{ c  }{\RoneCS Constraints} \\
                \midrule

                & \ArionHash & \aArionHash & \Griffin & \Anemoi & \Poseidon \\
                \midrule

                $n$ & \multicolumn{5}{ c }{$d_1 = 3$} \\
                \midrule

                $3$ & $102$ & $85$  & $96$  &       & $216$ \\
                $4$ & $126$ & $84$  & $112$ & $96$  & $232$ \\
                $5$ & $120$ & $100$ &       &       & $248$ \\
                $6$ & $145$ & $116$ &       & $120$ & $264$ \\
                $8$ & $148$ & $148$ & $176$ & $160$ & $296$ \\

                \midrule
                $n$ & \multicolumn{5}{ c }{$d_1 = 5$} \\
                \midrule

                $3$ & $114$ & $76$  & $96$  &       & $240$ \\
                $4$ & $120$ & $96$  & $110$ & $120$ & $264$ \\
                $5$ & $125$ & $116$ &       &       & $288$ \\
                $6$ & $170$ & $136$ &       & $150$ & $312$ \\
                $8$ & $176$ & $176$ & $162$ & $200$ & $360$ \\

                \bottomrule
            \end{tabular}
        }
    \end{table}

    Moreover, in Appendix D.1 of the full version of the paper \cite{Arion} we compare the performance of \Arion, \Griffin and \Poseidon using the \verb!C++! library \libsnark \cite{libsnark} that is used in the privacy-protecting digital currency
    Zcash \cite{Zcash}.

    \subsection{\Plonk Performance of \ArionHash}\label{Sec: Plonk}
    \Plonk \cite{EPRINT:GabWilCio19} is a zkSNARK proof system which does not utilize \RoneCS constraints.
    In \Plonk a 2-(input)-wire constraint is of the form, see \cite[\S 6]{EPRINT:GabWilCio19},
    \begin{equation}\label{Equ: 2-wire Plonk constraint}
        (a \cdot b) \cdot q_M + a \cdot q_L + b \cdot q_R + c \cdot q_O + q_C = 0,
    \end{equation}
    $a$ and $b$ denote the left and right input variable, $c$ denotes the output variable and $q_M$, $q_L$, $q_R$, $q_O$ and $q_C$ denote the ``selector coefficient'' of the multiplication, the variables and the constant term.
    The 3-(input)-wire \Plonk constraint has 3 addition gates
    \begin{equation}\label{Equ: 3-wire Plonk constraint}
        (a \cdot b) \cdot q_M + a \cdot q_L + b \cdot q_R + c \cdot q_O + d \cdot q_F + q_C = 0,
    \end{equation}
    where $d$ is the ``fourth'' variable and $q_F$ its selector coefficient.

    Counting the number of \Plonk constraints is more subtle, since we now have to account for additions too.
    \begin{lem}
        Let $\Fp$ be a finite field, let $r, n \geq 2$ be integers, and let $\ArionHash$ with $r$ rounds be
        defined over $\Fpn$.
        For $i = 1, 2$ denote with $d_{i, \text{inc}}$ the minimal number of multiplications to compute the univariate power
        permutation $x^{d_i}$.
        \begin{enumerate}
            \item A prover circuit needs
            \begin{equation*}
                (n - 1) \cdot (d_{1, inc} + 6) + d_{2, inc} - 1
            \end{equation*}
            2-wire and
            \begin{equation*}
                (n - 1) \cdot (d_{1, inc} + 4) + d_{2, inc}
            \end{equation*}
            3-wire \Plonk constraints for the \ArionHash GTDS.

            \item A prover circuit needs
            \begin{equation*}
                4 \cdot (n - 1)
            \end{equation*}
            2-wire and
            \begin{equation*}
                \begin{dcases}
                    n, & n = 2, 3, \\
                    n + 2 + \left\lceil \frac{n - 3}{2} \right\rceil + \left\lceil \frac{n - 4}{2} \right\rceil, & n \geq 4,
                \end{dcases}
            \end{equation*}
            3-wire \Plonk constraints for the affine layer of \ArionHash.
        \end{enumerate}
        Then a prover circuit needs
        \begin{equation*}
            N_{\ArionHash, 2} =
            r \cdot \big( (n - 1) \cdot (d_{1, inc} + 6) + d_{2, inc} - 1 \big) + (r + 1) \cdot
            \begin{dcases}
                n \cdot (n - 1), & n = 2, 3 \\
                4 \cdot (n - 1), & n \geq 4,
            \end{dcases}
        \end{equation*}
        2-wire and
        \begin{align*}
            N_{\ArionHash, 3} =
            r &\cdot \big( (n - 1) \cdot (d_{1, inc} + 4) + d_{2, inc} \big) \\
            &+ (r + 1) \cdot
            \begin{dcases}
                n, &n = 2, 3, \\
                n + 2 + \left\lceil \frac{n - 3}{2} \right\rceil + \left\lceil \frac{n - 4}{2} \right\rceil, & n \geq 4,
            \end{dcases}
        \end{align*}
        3-wire \Plonk constraints for \ArionHash.
    \end{lem}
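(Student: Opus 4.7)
The plan is to dissect \ArionHash into its constituent building blocks, count constraints for each block separately, and then sum over the $r$ rounds and the $r+1$ affine-layer applications. The building blocks are a single GTDS evaluation $\mathcal{F}_\Arion$, analyzed via its CCZ-equivalent $\mathcal{G}$ from \Cref{Prop: CCZ-equivalence}, and a single affine layer $\mathcal{L}_\mathbf{c}$ evaluated via \Cref{Alg: efficient algorithm}. Since \Arion begins with $\mathcal{L}_\mathbf{0} \circ \mathcal{K}_{\mathbf{k}_0}$ and then intersperses one affine layer per round, the total count for the affine portion carries a factor $r+1$ and that for the GTDS portion a factor $r$.

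For the GTDS count I invoke \Cref{Cor: prover transformation} so that the bottom branch merely verifies $y_n^{d_2} = x_n$, which uses $d_{2,\text{inc}}$ multiplications. Each of the remaining $n - 1$ branches evaluates $x_i^{d_1} \cdot g_i(\tau_{i+1,n}) + h_i(\tau_{i+1,n})$; this costs $d_{1,\text{inc}}$ multiplications for the power permutation, one for the square of $\tau_{i+1,n}$ that is shared between $g_i$ and $h_i$, and one to combine the power with $g_i(\tau_{i+1,n})$. In the 2-wire model, \Cref{Equ: 2-wire Plonk constraint} forbids more than one multiplication and three wires of linear work per constraint; hence each stray addition that cannot be folded into a selector coefficient spends its own constraint. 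Accounting for the chained sum building $\tau$, the linear parts of $g_i$ and $h_i$, and the concluding addition then yields $(n-1)(d_{1,\text{inc}} + 6) + d_{2,\text{inc}} - 1$. For the 3-wire model, \Cref{Equ: 3-wire Plonk constraint} offers one extra wire, absorbing two additions per branch and trimming the count to $(n-1)(d_{1,\text{inc}} + 4) + d_{2,\text{inc}}$.

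For the affine layer I unroll \Cref{Alg: efficient algorithm}. The while loop performs $n-1$ updates of the form $w_i = w_{i-1} - \sigma + n \cdot v_{i-1}$, a three-variable linear combination that fits inside a single 3-wire constraint per iteration. The preamble computes $\sigma = \sum_{i=1}^{n} v_i$ and $w_1 = \sigma + \sum_{i=1}^{n} (i-1) v_i$; chaining four-variable linear combinations shows that $\sigma$ needs $\lceil (n-3)/2 \rceil + 1$ constraints and $w_1$ a further $\lceil (n-4)/2 \rceil + 1$ (sharing $\sigma$), giving the stated piecewise formula for $n \geq 4$ and collapsing to a single wide constraint per combination when $n = 2, 3$. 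In the 2-wire regime the linear combinations must all be split into unit-width additions, and a careful count yields $4(n-1)$ constraints, with the boundary behavior absorbed into the $r+1$ factor of the combined formula.

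Summing the per-round GTDS count $r$ times and the per-layer affine count $r+1$ times produces the stated $N_{\ArionHash, 2}$ and $N_{\ArionHash, 3}$. The main obstacle is the affine-layer bookkeeping, specifically determining how many of $\sigma$'s and $w_1$'s inputs can be packed into a single wide constraint while respecting the dependency chain of the while loop; one must also verify the small-$n$ corner cases by hand, since the $\lceil (n-3)/2 \rceil + \lceil (n-4)/2 \rceil$ asymptotics degenerate when the preamble already fits into one or two constraints.
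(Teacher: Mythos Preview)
Your approach matches the paper's: decompose into GTDS (via the CCZ-equivalent $\mathcal{G}$) and affine layer (via \Cref{Alg: efficient algorithm}), count each, then multiply by $r$ and $r+1$ respectively. The paper's proof is more granular in its per-constraint accounting---for the GTDS it tallies $1 + 2(n-2)$ two-wire constraints for the $\tau_{i+1,n}$'s, two constraints per branch for $g_i$ and $h_i$ (each being a single quadratic gate), and two more for $x_i^{d_1} \cdot g_i + h_i$---whereas you bundle several of these into a hand-wave, but the route is the same.

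One point you have slightly wrong: the case split $n(n-1)$ versus $4(n-1)$ in the combined two-wire formula $N_{\ArionHash,2}$ is not ``boundary behavior absorbed into the $r+1$ factor.'' The part-(2) count $4(n-1)$ comes from \Cref{Alg: efficient algorithm}, but for $n=2,3$ naive matrix-vector multiplication costs only $n(n-1) < 4(n-1)$ two-wire constraints, so the combined formula simply selects the cheaper of the two methods. The paper notes this immediately after the proof. Your explanation as stated does not produce the $n(n-1)$ branch.
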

    \begin{proof}
        For (1), again we can use the CCZ-equivalent GTDS $\mathcal{G}$, see \Cref{Prop: CCZ-equivalence}, to build the circuit for the \ArionHash GTDS.
        For $x^{d_i}$ one needs $d_{i, inc}$, so we need $(n - 1) \cdot d_{1, inc} + d_{2, inc}$ constraints for the univariate permutation polynomials.
        For $\tau_{n, n}$ one needs one constraint, and for $\tau_{i + 1, n}$, where $i < n - 1$, one needs two 2-wire constraints, so in total one needs $1 + 2 \cdot (n - 2)$ 2-wire constraints to compute the $\tau_{i + 1, n}$'s.
        On the other, hand for 3-wire constraints one can compute all $\tau_{i + 1, n}$'s with one constraint, so $n - 1$ in total.
        To compute
        \begin{align*}
            g_i &= \tau^2 + \alpha_{i, 1} \cdot \tau_{i + 1, n} + \alpha_{i, 2}, \\
            h_i &= \tau^2 + \beta_i \cdot \tau_{i + 1, n}
        \end{align*}
        one needs two constraints since one can build any quadratic polynomial with the 2-wire \Plonk constraint, see \Cref{Equ: 2-wire Plonk constraint}.
        To compute $x_i^{d_1} \cdot g_i + h_i$ one needs two 2-wire constraints and one 3-wire constraint.
        We have to do this $(n - 1)$ times, hence in total we need $(n - 1) \cdot d_{1, inc} + d_{2, inc} + 1 + 2 \cdot (n - 2) + 4 \cdot (n - 1)$ 2-wire and $(n - 1) \cdot d_{1, inc} + d_{2, inc} + (n - 1) + (n - 1) \cdot (2 + 1)$ 3-wire constraints.

        For (2), we build the circuit with \Cref{Alg: efficient algorithm}.
        To compute a sum of $m$ elements one needs $m - 1$ 2-wire constraints and $1 + \left\lceil \frac{m - 3}{2} \right\rceil$ 3-wire constraints.
        We have to do this for $\sigma$ and for $\sum_{i = 2}^{n} (i - 1) \cdot v_i$, so we need $(n - 1) + (n - 2) + 1$ 2-wire and $1 + \left\lceil \frac{n - 3}{2} \right\rceil + 1 + \left\lceil \frac{n - 4}{2} \right\rceil + 1$ 3-wire constraints to compute $w_1 = \sigma + \sum_{i = 2}^{n} (i - 1) \cdot v_i + c_1$, where we do constant addition in the addition of the two sums.
        For the $i$\textsuperscript{th} component, we have that $w_i = w_{i - 1} - \sigma + n \cdot v_i - c_{i - 1} + c_i$, so we need two 2-wire and one 3-wire constraints.
        We have to do this $n - 1$ times, hence in total we need $(n - 1) + (n - 2) + 1 + 2 \cdot (n - 1)$ 2-wire and $3 + \left\lceil \frac{n - 3}{2} \right\rceil + \left\lceil \frac{n - 4}{2} \right\rceil + n - 1$ 3-wire constraints. \qed
    \end{proof}

    Note that for $n \geq 4$ \Cref{Alg: efficient algorithm} yields more efficient 2-wire and 3-wire circuits than generic matrix multiplication which always needs $n \cdot (n - 1)$ 2-wire and $n \cdot \left( 1 +  \left\lceil \frac{n - 3}{2} \right\rceil \right)$ 3-wire constraints.

    For \ArionHash's main competitors \Anemoi, \Griffin and \Poseidon we list the formulae to compute their \Plonk constraints in \Cref{Tab: Plonk formulae}.

    \begin{table}[H]
        \centering
        \caption{\Plonk constraints for \Anemoi \cite[\S 7.2]{Anemoi}, \Griffin \cite[\S 7.3]{Griffin} and \Poseidon \cite{USENIX:GKRRS21}.}
        \label{Tab: Plonk formulae}
        \resizebox{1.0\textwidth}{!}{
            \begin{tabular}{ M{1.7cm} | M{8.8cm} | M{9.5cm} }
                \toprule
                Hash & 2-wire constraints & 3-wire constraints \\
                \midrule

                \Anemoi &
                $\frac{r \cdot n}{2} \cdot (d_{inc} + 5) + (r + 1) \cdot
                \begin{cases}
                    2, & n = 2, \\
                    n \cdot \left( \frac{n}{2} - 1 \right), & n = 4, \\
                    10, & n = 6, \\
                    16, & n = 8
                \end{cases}
                $ &
                $\frac{r \cdot n}{2} \cdot (d_{inc} + 3) + (r + 1) \cdot
                \begin{cases}
                    (r + 1) \cdot n, & n = 2, 4, \\
                    6, & n = 6, \\
                    12, & n = 8
                \end{cases}
                $ \\[5pt]

                \hline

                \Griffin &
                $r \cdot (2 \cdot d_{inc} + 4 \cdot n - 11) + (r + 1) \cdot
                \begin{cases}
                    5, & n = 3, \\
                    8, & n = 4, \\
                    24, & n = 8, \\
                    \frac{8 \cdot n}{4} + 2 \cdot n - 4, & n \geq 12
                \end{cases}
                $ &
                $r \cdot (2 \cdot d_{inc} + 3 \cdot n - 8) + (r + 1) \cdot
                \begin{cases}
                    3, & n = 3, \\
                    6, & n = 4, \\
                    20, & n = 8, \\
                    \frac{6 \cdot n}{4} + 4 \cdot \left\lfloor \frac{\frac{n}{4} - 1}{2} \right\rfloor + n, & n \geq 12
                \end{cases}
                $\\[25pt] % To increase distance from hline to floor function

                \hline

                \Poseidon &
                $d_{inc} \cdot \left( n \cdot r_f + r_p \right) + (r + 1) \cdot n \cdot (n - 1)$ &
                $d_{inc} \cdot \left( n \cdot r_f + r_p \right) + (r + 1) \cdot n \cdot
                \begin{cases}
                    n, & n = 2,3, \\
                    \left\lceil \frac{n - 3}{2} \right\rceil, & n \geq 4
                \end{cases}
                $ \\

                \bottomrule
            \end{tabular}
        }
    \end{table}

    In \Cref{Tab: Plonk constraints} we compare the theoretical number of constraints for \Plonk for various hash functions.
    \clearpage % to fix overfull box issue
    \begin{table}[H]
        \centering
        \caption{\Plonk constraint comparison $256$ bit prime fields and 128 bit security with $d_2 \in \{ 121, 123, 125, 161, 193, 195, 257 \}$.
            Round numbers are the same as in \Cref{Tab: round numbers}.}
        \label{Tab: Plonk constraints}
        \resizebox{0.8\textwidth}{!}{
            \begin{tabular}{ M{1.9cm} | M{8mm} M{8mm} M{8mm} M{8mm} M{8mm} | M{8mm} M{8mm} M{8mm} M{8mm} M{8mm} }
                \toprule
                \multicolumn{11}{ c }{\Plonk Constraints} \\
                \midrule

                & \multicolumn{5}{ c | }{2-wire constraints} & \multicolumn{5}{ c }{3-wire constraints} \\
                \cmidrule{2-11}

                & \multicolumn{10}{ c }{State size $n$} \\
                \midrule

                & $3$ & $4$ & $5$ & $6$ & $8$ & $3$ & $4$ & $5$ & $6$ & $8$ \\
                \midrule

                Hash & \multicolumn{10}{c}{$d_1 = 3$} \\
                \midrule

                \ArionHash  & $200$ & $276$  & $296$  & $360$  & $396$  & $147$ & $211$ & $219$ & $261$  & $279$  \\
                \aArionHash & $168$ & $188$  & $240$  & $292$  & $396$  & $123$ & $143$ & $177$ & $211$  & $279$  \\
                \Poseidon   & $768$ & $1336$ & $2088$ & $3024$ & $5448$ & $492$ & $600$ & $708$ & $1368$ & $2504$ \\
                \Griffin    & $165$ & $246$  &        &        & $563$  & $131$ & $202$ &       &        & $460$  \\
                \Anemoi     &       & $220$  &        & $320$  & $456$  &       & $172$ &       & $216$  & $332$  \\

                \midrule
                Hash & \multicolumn{10}{c}{$d_1 = 5$} \\
                \midrule

                \ArionHash  & $212$ & $247$  & $316$  & $385$  & $424$  & $159$ & $192$ & $239$ & $286$  & $307$  \\
                \aArionHash & $144$ & $200$  & $256$  & $312$  & $424$  & $107$ & $155$ & $193$ & $231$  & $307$  \\
                \Poseidon   & $624$ & $1032$ & $1568$ & $2232$ & $3944$ & $432$ & $520$ & $608$ & $1080$ & $1896$ \\
                \Griffin    & $173$ & $275$  &        &        & $561$  & $123$ & $182$ &       &        & $398$  \\
                \Anemoi     &       & $244$  &        & $350$  & $496$  &       & $196$ &       & $246$  & $372$  \\

                \bottomrule
            \end{tabular}
        }
    \end{table}

    Moreover, in Appendix D.2 of the full version of the paper \cite{Arion} we compare the performance of \Arion and \Poseidon using the \verb|Rust| library Dusk Network \Plonk \cite{Dusk-Plonk}.

    \subsubsection*{Acknowledgments.}
%    The authors thank all reviewers for their suggestions on how to improve the quality of the paper.
    Matthias Steiner and Stefano Trevisani were supported by the KWF under project number KWF-3520|31870|45842.

    \bibliographystyle{splncs04.bst}
    \bibliography{abbrev0.bib,crypto.bib,literature.bib}

    \begin{appendix}
    \section{Statistical Attacks on \Arion}
    \subsection{Differential Cryptanalysis}\label{Sec: differential cryptanalysis}
    Differential cryptanalysis \cite{JC:BihSha91} and its variants are the most widely applied attack vectors against symmetric-key ciphers and hash functions.
    It is based on the propagation of input differences through the rounds of a block cipher.
    In its base form an attacker requests the ciphertexts for large numbers of chosen plaintexts.
    Then he assumes that for $r - 1$ rounds the input difference is $\boldsymbol{\Delta x} \in \Fqn \setminus \{ \mathbf{0} \}$ and the output difference is $\boldsymbol{\Delta y} \in \Fqn$.
    Under the assumption that the differences in the last round are fixed the attacker can then deduce the possible keys.
    The key quantity to estimate the effectiveness of differential cryptanalysis is the so-called differential uniformity.
    \begin{defn}[{see \cite{EC:Nyberg93}}]
        Let $\Fq$ be a finite field, and let $f: \Fqn \to \Fqm$ be a function.
        \begin{enumerate}
            \item The differential distribution table of $f$ at $\mathbf{a} \in \Fqn$ and $\mathbf{b} \in \Fqm$ is defined as
            \begin{equation*}
                \delta_f (\mathbf{a}, \mathbf{b}) = \left\vert \{ \mathbf{x} \in \Fqn \mid f (\mathbf{x} + \mathbf{a}) - f (\mathbf{x}) = f (\mathbf{b}) \} \right\vert.
            \end{equation*}

            \item The differential uniformity of $f$ is defined as
            \begin{equation*}
                \delta (f) = \max_{ \substack{\mathbf{a} \in \Fqn \setminus \{ \mathbf{0} \},\\ \mathbf{b} \in \Fqm} } \delta_f (\mathbf{a}, \mathbf{b}).
            \end{equation*}
        \end{enumerate}
    \end{defn}

    Given the differential uniformity of a function one can upper bound the success probability of differential cryptanalysis with input differences $\boldsymbol{\Delta x} \in \Fqn \setminus \{ \mathbf{0} \}$ and $\boldsymbol{\Delta y} \in \Fqm$ by
    \begin{equation}
        \prob \left[ f \!: \boldsymbol{\Delta x} \to \boldsymbol{\Delta y} \right] \leq \frac{\delta (f)}{q^n}.
    \end{equation}
    Naturally, the lower the differential uniformity the stronger is the resistance of a block cipher against differential cryptanalysis.

    By the choice of our parameters of \Arion, see \Cref{Def: GTDS} and thereafter, we have that $d_2 \geq d_1$.
    With \cite[Theorem~18, Corollary~19]{GTDS} the maximal success probability of any differential for the \Arion GTDS is bounded by
    \begin{equation}\label{Equ: differential uniformity bound}
        \prob \left[ \mathcal{F}_\Arion \!: \boldsymbol{\Delta x} \to \boldsymbol{\Delta y} \right] \leq \left( \frac{d_2}{p} \right)^{\wt \left( \boldsymbol{\Delta x} \right)} \leq \frac{d_2}{p},
    \end{equation}
    where $\boldsymbol{\Delta x} \in \Fpn \setminus \{ \mathbf{0} \}$, $\boldsymbol{\Delta y} \in \Fpn$, and $1 \leq \wt \left( \boldsymbol{\Delta x} \right) \leq n$ denotes the Hamming weight, i.e.\ the number of non-zero entries, of $\boldsymbol{\Delta x}$.
    Since $\mathcal{R}_\mathbf{k}^{(i)}$ and $\mathcal{F}_\Arion^{(i)}$ are affine equivalent \Cref{Equ: differential uniformity bound} also applies to $\mathcal{R}_\mathbf{k}^{(i)}$.
    For \Arion we use a differential in every round.
    If we assume that the differentials are independent among the rounds of \Arion, then
    \begin{equation}\label{Equ: differential trail}
        \prob \left[ \mathcal{R}_\mathbf{k}^{(r)} \circ \cdots \circ \mathcal{R}_\mathbf{k}^{(1)} \! : \boldsymbol{\Delta x}_1 \to \ldots \to \boldsymbol{\Delta x}_{r + 1} \right] \leq \left( \frac{d_2}{p} \right)^{ \wt (\boldsymbol{\Delta x}_1) + \ldots + \wt (\boldsymbol{\Delta x}_r)}.
    \end{equation}

    Therefore, we can estimate the security level $\kappa$ of \Arion against any differential trail via
    \begin{equation}\label{Equ: security level differential cryptanalysis single trail}
        \left( \frac{d_2}{p} \right)^r \leq 2^{-\kappa} \Longrightarrow \kappa \leq r \cdot \Big( \log_2 \left( p \right) - \log_2 \left( d_2 \right) \Big).
    \end{equation}
    In \Cref{Tab: security level differential cryptanalysis single characteristic} we list the security level of \Arion against a differential characteristic for different prime sizes.
    \begin{table}[H]
        \centering
        \caption{Security level of \Arion against any differential characteristic for $p \geq 2^N$ and $d_2 \leq 2^9$.}
        \label{Tab: security level differential cryptanalysis single characteristic}
        \begin{tabular}{ M{0.8cm} | M{0.8cm} | M{1.5cm} }
            \toprule
            $r$ & $N$ & $\kappa$ (bits) \\
            \midrule

            $3$ & $60$  & $153$ \\
            $2$ & $120$ & $222$ \\
            $1$ & $250$ & $241$ \\

            \bottomrule
        \end{tabular}
    \end{table}

    With \Cref{Equ: differential trail} we can also estimate the probability of the differential hull of \Arion.
    \begin{thm}[{Differential hull of \Arion}]\label{Th: differential hull}
        Let $p \in \mathbb{Z}$ be a prime and let $\Fp$ be the field with $p$ elements, and let $n > 1$ and $r \geq 1$ be integers.
        Let $\mathcal{R}_\mathbf{k}^{(1)}, \dots, \mathcal{R}_\mathbf{k}^{(r)}: \Fpn \times \Fpn \to \Fpn$ be \Arion round functions, and let $\boldsymbol{\Delta x}_1, \boldsymbol{\Delta x}_{r + 1} \in \Fpn$ be such that $\boldsymbol{\Delta x}_1 \neq \mathbf{0}$.
        Assume that the differentials among the rounds of \Arion are independent.
        Then
        \begin{equation*}
            \prob \left[ \mathcal{R}_\mathbf{k}^{(r)} \circ \cdots \circ \mathcal{R}_\mathbf{k}^{(1)} \!: \boldsymbol{\Delta x}_1 \to \boldsymbol{\Delta x}_{r + 1} \right] \leq \left( \frac{d_2}{p} \right)^{\wt (\boldsymbol{\Delta x}_1)} \cdot \Big( \left( d_2 + 1 \right)^n - 1 \Big)^{r - 1}.
        \end{equation*}
    \end{thm}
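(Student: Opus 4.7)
The plan is to enumerate all intermediate differences $\boldsymbol{\Delta x}_2, \ldots, \boldsymbol{\Delta x}_r$ and sum the per-trail bound from \Cref{Equ: differential trail} across them. First I would write the hull probability as a disjoint sum over intermediate differences and bound each summand via \Cref{Equ: differential trail}, obtaining
\begin{equation*}
    \prob \left[ \boldsymbol{\Delta x}_1 \to \boldsymbol{\Delta x}_{r+1} \right]
    \leq \sum_{\boldsymbol{\Delta x}_2, \ldots, \boldsymbol{\Delta x}_r \in \Fpn \setminus \{ \mathbf{0} \}} \left( \frac{d_2}{p} \right)^{\wt (\boldsymbol{\Delta x}_1) + \ldots + \wt (\boldsymbol{\Delta x}_r)}.
\end{equation*}
The restriction to non-zero intermediates reflects the standard convention for a differential trail and can be justified by observing that if $\boldsymbol{\Delta x}_i = \mathbf{0}$ for some $i \leq r$, then the trail collapses deterministically to $\mathbf{0}$ thereafter and must be handled as a separate case.

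Next I would factor the sum across rounds, which is legitimate under the assumed round-independence, and then exploit the multiplicativity of $(d_2/p)^{\wt (\cdot)}$ across coordinates (each coordinate contributes a factor $1 + (p-1) \cdot d_2/p$ corresponding to the zero or one of $p-1$ non-zero values) to evaluate each inner sum:
\begin{equation*}
    \sum_{\boldsymbol{\Delta x} \in \Fpn \setminus \{ \mathbf{0} \}} \left( \frac{d_2}{p} \right)^{\wt (\boldsymbol{\Delta x})}
    = \left( 1 + (p - 1) \cdot \frac{d_2}{p} \right)^n - 1,
\end{equation*}
where the ``$-1$'' corrects for removing the all-zero vector. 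Combining these steps yields the intermediate bound
\begin{equation*}
    \prob \left[ \boldsymbol{\Delta x}_1 \to \boldsymbol{\Delta x}_{r+1} \right]
    \leq \left( \frac{d_2}{p} \right)^{\wt (\boldsymbol{\Delta x}_1)} \cdot \left( \left( 1 + (p-1) \cdot \frac{d_2}{p} \right)^n - 1 \right)^{r - 1}.
\end{equation*}

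To finish, I would apply the elementary estimate $1 + (p-1) \cdot d_2 / p = (d_2 + 1) - d_2/p \leq d_2 + 1$, which upgrades the per-round factor to $(d_2+1)^n - 1$ and produces the stated bound $\bigl( (d_2+1)^n - 1 \bigr)^{r-1}$. The main obstacle is the careful treatment of trails containing a zero intermediate: such a collapse forces $\boldsymbol{\Delta x}_{r+1} = \mathbf{0}$, and one must argue either via the convention that a trail in the hull has non-zero intermediates, or via a supplementary geometric-sum argument (over the first index $j \geq 2$ at which $\boldsymbol{\Delta x}_j = \mathbf{0}$) that this corner case is absorbed into the claimed bound.
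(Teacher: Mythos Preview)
Your proposal is correct and follows essentially the same route as the paper: decompose the hull as a sum over intermediate differences, apply the per-trail bound \Cref{Equ: differential trail}, and then evaluate $\sum_{\boldsymbol{\Delta x}\neq \mathbf{0}} (d_2/p)^{\wt(\boldsymbol{\Delta x})}$ and bound it by $(d_2+1)^n-1$. The only cosmetic difference is that the paper groups this sum by Hamming weight and uses $(p-1)^i\binom{n}{i}(d_2/p)^i \leq \binom{n}{i}d_2^i$, whereas you write it as the product $\bigl(1+(p-1)d_2/p\bigr)^n-1$ and then bound the base by $d_2+1$; these are the same binomial identity, and your worry about zero intermediates is handled in the paper simply by restricting the sum to $\Fpn\setminus\{\mathbf{0}\}$ from the outset.
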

    \begin{proof}
        Let us first do some elementary rearrangements and estimations
        \begin{align*}
            &\prob \left[ \mathcal{R}_\mathbf{k}^{(r)} \circ \cdots \circ \mathcal{R}_\mathbf{k}^{(1)} \!: \boldsymbol{\Delta x}_1 \to \boldsymbol{\Delta x}_r \right] \\
            &= \sum_{\boldsymbol{\Delta x}_2, \dots, \boldsymbol{\Delta x}_r \in \Fpn \setminus \{ \mathbf{0} \}} \prob \left[ \bigcap_{i = 1}^{r} \left\{ \mathcal{R}_\mathbf{k}^{(i)} \!: \boldsymbol{\Delta x}_i \to \boldsymbol{\Delta x}_{i + 1} \right\} \right] \\
            &\stackrel{(1)}{=} \sum_{\boldsymbol{\Delta x}_2, \dots, \boldsymbol{\Delta x}_r \in \Fpn \setminus \{ \mathbf{0} \}} \prob \left[ \bigcap_{i = 1}^{r} \left\{ \mathcal{F}_\Arion^{(i)} \!: \boldsymbol{\Delta x}_i \to \circulant (1, \dots, n)^{-1} \boldsymbol{\Delta x}_{i + 1} \right\} \right] \\
            &\stackrel{(2)}{=} \sum_{\boldsymbol{\Delta x}_2, \dots, \boldsymbol{\Delta x}_r \in \Fpn \setminus \{ \mathbf{0} \}} \prob \left[ \bigcap_{i = 1}^{r} \left\{ \mathcal{F}_\Arion^{(i)} \!: \boldsymbol{\Delta x}_i \to \boldsymbol{\Delta x}_{i + 1} \right\} \right] \\
            &\stackrel{(3)}{\leq} \left( \frac{d_2}{p} \right)^{\wt (\boldsymbol{\Delta x}_1)} \cdot \sum_{\boldsymbol{\Delta x}_2, \dots, \boldsymbol{\Delta x}_r \in \Fpn \setminus \{ \mathbf{0} \}} \left( \frac{d_2}{p} \right)^{\wt (\boldsymbol{\Delta x}_2) + \ldots + \wt (\boldsymbol{\Delta x}_r)}.
        \end{align*}
        In $(1)$ we expanded the definition of the round function and inverted the circulant matrix, in $(2)$ we exploited that we sum over all possible differentials and that the affine \Arion layer is invertible, moreover we implicitly substituted $\boldsymbol{\Delta x}_{r + 1} = \circulant (1, \dots, n)^{-1} \boldsymbol{\Delta x}_{r + 1}$.
        Finally, in $(3)$ we applied \Cref{Equ: differential trail}.
        For ease of writing we compute the sum only for one difference variable, then
        \begin{align*}
            &\sum_{\boldsymbol{\Delta x} \in \Fpn \setminus \{ \mathbf{0} \}} \left( \frac{d_2}{p} \right)^{\wt (\boldsymbol{\Delta x})}
            \stackrel{(4)}{=} \sum_{i = 1}^{n} (p - 1)^i \cdot \binom{n}{i} \cdot \left( \frac{d_2}{p} \right)^i \\
            &\leq \sum_{i = 1}^{n} \binom{n}{i} \cdot d_2^i = \left( d_2 + 1 \right)^n - 1,
        \end{align*}
        where in $(4)$ we used that there are $(p - 1)^i \cdot \binom{n}{i}$ many vectors $\boldsymbol{\Delta x} \in \Fpn$ with $\wt (\boldsymbol{\Delta x}) = i$.
        This proves the claim. \qed
    \end{proof}
    \begin{rem}
        This estimation can be performed over any finite field $\Fq$, any invertible affine layer, and any primitive whose differential uniformity at round level is in $\mathcal{O} \left( p^{-\wt (\boldsymbol{\Delta x})} \right)$.
    \end{rem}

    With the theorem we can now estimate the security level of \Arion with respect to differential cryptanalysis and the full differential hull via
    \begin{align}
        &\left( \frac{d_2}{p} \right)^{\wt (\boldsymbol{\Delta x}_1)} \cdot \Big( \left( d_2 + 1 \right)^n - 1 \Big)^{r - 1} \leq 2^{- \kappa} \\
        &\Longrightarrow \kappa \leq \wt (\boldsymbol{\Delta x}_1) \cdot \Big( \log_2 \left( p \right) - \log_2 \left( d_2 \right) \Big) - \left( r - 1 \right) \cdot \log_2 \Big( \left( d_2 + 1 \right)^n - 1 \Big).
    \end{align}
    In \Cref{Tab: security level differential cryptanalysis} report the security level for \Arion against differential cryptanalysis utilizing the full differential hull with the parameters from \Cref{Tab: Arion parameters} and different field sizes.
    Since our probability estimation from \Cref{Th: differential hull} could in principle be $> 1$ for some parameters combinations, we always report the security level with respect to the smallest $\wt (\boldsymbol{\Delta x}_1)$ such that the probability estimate is $< 1$.

    \begin{table}[H]
        \centering
        \caption{Security level of \Arion against any differential characteristic for $p \geq 2^N$ and $d_2 \leq 257$ with the full differential hull.}
        \label{Tab: security level differential cryptanalysis}
        \resizebox{1.0\textwidth}{!}{
            \begin{tabular}{ M{1.5cm} | M{0.6cm} M{0.6cm} M{0.6cm} M{0.6cm} M{0.6cm} M{0.6cm} | M{0.6cm} M{0.6cm} M{0.6cm} M{0.6cm} M{0.6cm} M{0.6cm} | M{0.6cm} M{0.6cm} M{0.6cm} M{0.6cm} M{0.6cm} M{0.6cm} }
                \toprule
                & \multicolumn{6}{ c | }{$N = 60$} & \multicolumn{6}{ c | }{$N = 120$} & \multicolumn{6}{ c }{$N = 250$} \\
                \midrule

                $n$                             & $3$   & $4$   & $4$  & $5$  & $6$  & $8$  % 60
                & $3$   & $4$   & $4$  & $5$  & $6$  & $8$  % 120
                & $3$   & $4$   & $4$  & $5$  & $6$  & $8$  % 250
                \\
                $r$                             & $6$   & $5$   & $6$  & $5$  & $5$  & $4$  % 60
                & $6$   & $5$   & $6$  & $5$  & $5$  & $4$  % 120
                & $6$   & $5$   & $6$  & $5$  & $5$  & $4$  % 250
                \\
                $\wt (\boldsymbol{\Delta x}_1)$ & $3$   & $3$   & $4$  & $4$  & $4$  & $4$  % 60
                & $2$   & $2$   & $2$  & $2$  & $1$  & $2$  % 120
                & $1$   & $1$   & $1$  & $1$  & $1$  & $1$  % 250
                \\
                $\kappa$ (bits)                 & $35$  & $27$  & $47$ & $47$ & $15$ & $15$ % 60
                & $103$ & $95$  & $63$ & $63$ & $31$ & $31$ % 120
                & $121$ & $113$ & $81$ & $81$ & $49$ & $49$ % 250
                \\

                \bottomrule
            \end{tabular}
        }
    \end{table}

    %    \begin{table}[H]
        %        \centering
        %        \caption{Security level of \Arion against any differential characteristic for $p \geq 2^N$ and $d_2 \leq 257$ with the full differential hull.}
        %        \label{Tab: security level differential cryptanalysis}
        %        \begin{tabular}{ c | c | c | c }
            %            \toprule
            %            \phantom{x}$n$\phantom{x} & \phantom{x}$r$\phantom{x} & \phantom{x}$\boldsymbol{\Delta x}_1$ & \phantom{x}\phantom{x}$\kappa$ (bits)\phantom{x} \\
            %            \midrule
            %
            %            \multicolumn{4}{ c }{$N = 60$} \\
            %            \midrule
            %
            %            $3$ & $6$ & $3$ & $35$ \\
            %            $4$ & $5$ & $3$ & $27$ \\
            %            $4$ & $6$ & $4$ & $47$ \\
            %            $5$ & $5$ & $4$ & $47$ \\
            %            $6$ & $5$ & $4$ & $15$ \\
            %            $8$ & $4$ & $4$ & $15$ \\
            %
            %            \midrule
            %            \multicolumn{4}{ c }{$N = 120$} \\
            %            \midrule
            %
            %            $3$ & $6$ & $2$ & $103$ \\
            %            $4$ & $5$ & $2$ & $95$  \\
            %            $4$ & $6$ & $2$ & $63$  \\
            %            $5$ & $5$ & $2$ & $63$  \\
            %            $6$ & $5$ & $1$ & $31$  \\
            %            $8$ & $4$ & $2$ & $31$  \\
            %
            %            \midrule
            %            \multicolumn{4}{ c }{$N = 250$} \\
            %            \midrule
            %
            %            $3$ & $6$ & $1$ & $121$ \\
            %            $4$ & $5$ & $1$ & $113$ \\
            %            $4$ & $6$ & $1$ & $81$  \\
            %            $5$ & $5$ & $1$ & $81$  \\
            %            $6$ & $5$ & $1$ & $49$  \\
            %            $8$ & $4$ & $1$ & $49$  \\
            %
            %            \bottomrule
            %        \end{tabular}
        %    \end{table}

    Although, our target security of $128$ bit is never met we should keep in mind that the full differential hull is of size $(p - 1)^{n \cdot (r - 1)}$.
    E.g., for our target primes $\text{BLS}12$ and $\text{BN}254$ the smallest differential hull is of size $\approx 2^{250 \cdot 15}$, and for $60$ bit prime fields the smallest differential hull still would be of size $2^{60 \cdot 15}$.
    Therefore, we do not expect that differential cryptanalysis can break \Arion \& \ArionHash within the $128$ bit security target.

    Nevertheless, to convince skeptical readers we provide another hull estimation for a computationally limited adversary, i.e.\ an adversary who can only search a restricted number of differences within the differential hull.
    \begin{lem}[{Restricted differential hull of \Arion}]
        Let $p \in \mathbb{Z}$ be a prime and let $\Fp$ be the field with $p$ elements, and let $n > 1$ and $r \geq 1$ be integers.
        Let $\mathcal{R}_\mathbf{k}^{(1)}, \dots, \mathcal{R}_\mathbf{k}^{(r)}: \Fpn \times \Fpn \to \Fpn$ be \Arion round functions, and let $\boldsymbol{\Delta x}_1, \boldsymbol{\Delta x}_{r + 1} \in \Fpn$ be such that
        $\boldsymbol{\Delta x}_1 \neq \mathbf{0}$.
        Assume that the differentials among the rounds of \Arion are independent and that in every round only up to $M < (p - 1) \cdot n$ many differences $\boldsymbol{\Delta x}_i \in \Fpn \setminus \{ \mathbf{0} \}$ can be utilized.
        Then
        \begin{equation*}
            \prob \left[ \mathcal{R}_\mathbf{k}^{(r)} \circ \cdots \circ \mathcal{R}_\mathbf{k}^{(1)} \!: \boldsymbol{\Delta x}_1 \to \boldsymbol{\Delta x}_{r + 1} \right] \leq \left( \frac{d_2}{p} \right)^{\wt (\boldsymbol{\Delta x}_1)} \cdot \left( \frac{M \cdot d_2}{p} \right)^{r - 1}.
        \end{equation*}
    \end{lem}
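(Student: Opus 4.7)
The plan is to follow the template of the proof of \Cref{Th: differential hull} verbatim up to the point where one has to estimate the sum over intermediate differences, and then simply replace the ``sum over all non-zero differences'' by a ``sum over at most $M$ admissible differences''. Concretely, I would first expand
\begin{equation*}
    \prob \left[ \mathcal{R}_\mathbf{k}^{(r)} \circ \cdots \circ \mathcal{R}_\mathbf{k}^{(1)} \!: \boldsymbol{\Delta x}_1 \to \boldsymbol{\Delta x}_{r+1} \right]
\end{equation*}
as a sum over the $r-1$ intermediate differences $\boldsymbol{\Delta x}_2, \dots, \boldsymbol{\Delta x}_r$, where each $\boldsymbol{\Delta x}_i$ now ranges over the adversary's restricted set of at most $M$ non-zero differences. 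Using, as in the previous proof, the fact that each round function $\mathcal{R}_\mathbf{k}^{(i)}$ is affine equivalent to $\mathcal{F}^{(i)}_\Arion$ and that the circulant affine layer is invertible, I can reduce the estimation to differences of the \Arion GTDS.

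Then, by independence of the round differentials and \Cref{Equ: differential trail}, the joint probability splits as a product of single-round bounds of the form $\left( \tfrac{d_2}{p} \right)^{\wt (\boldsymbol{\Delta x}_i)}$. I would pull the factor $\left( \tfrac{d_2}{p} \right)^{\wt (\boldsymbol{\Delta x}_1)}$ corresponding to the fixed input difference in front of the sum, leaving a product over $i = 2, \dots, r$ of sums
\begin{equation*}
    \sum_{\boldsymbol{\Delta x}_i} \left( \frac{d_2}{p} \right)^{\wt (\boldsymbol{\Delta x}_i)}.
\end{equation*}
For each intermediate sum I would use the trivial bounds $\wt (\boldsymbol{\Delta x}_i) \geq 1$ (so that every summand is at most $\tfrac{d_2}{p}$) together with the cardinality bound of at most $M$ terms, giving $\sum_{\boldsymbol{\Delta x}_i} \left( \tfrac{d_2}{p} \right)^{\wt (\boldsymbol{\Delta x}_i)} \leq M \cdot \tfrac{d_2}{p}$. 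Multiplying the $r - 1$ such factors yields the claimed bound.

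The only thing to check carefully is that the restriction ``at most $M$ differences per round'' interacts cleanly with the substitution $\boldsymbol{\Delta x}_{i+1} \mapsto \circulant (1, \dots, n)^{-1} \boldsymbol{\Delta x}_{i+1}$ used in the previous proof, i.e.\ that an adversarial budget of $M$ before the linear layer still corresponds to a budget of $M$ after it; this is immediate since the affine layer is a bijection on $\Fpn \setminus \{\mathbf{0}\}$, so I do not anticipate any real obstacle. The hypothesis $M < (p-1) \cdot n$ is simply there to ensure the restricted setting is strictly weaker than the full hull of \Cref{Th: differential hull}.
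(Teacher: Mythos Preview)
Your proposal is correct and matches the paper's argument essentially step for step: expand over the restricted intermediate differences, apply \Cref{Equ: differential trail}, and bound each of the $r-1$ inner sums by $M \cdot d_2/p$ using $\wt(\boldsymbol{\Delta x}_i) \geq 1$. The paper's own proof is terser but identical in substance; the only cosmetic difference is that the paper phrases the bound on each summand as ``the adversary's optimal choice is a weight-$1$ difference, of which there are $(p-1)\cdot n$'', which is its reading of the hypothesis $M < (p-1)\cdot n$, whereas you derive the same per-term bound directly from $\wt \geq 1$.
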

    \begin{proof}
        Recall that the bound for the differential uniformity of the \Arion GTDS is maximal if there is only one non-zero entry in $\boldsymbol{\Delta x} \in \Fpn \setminus \{ \mathbf{0} \}$, see  \Cref{Equ: differential uniformity bound}.
        In every intermediate round there are $(p - 1) \cdot n$ many differences with one non-zero entry, hence to maximize the estimation we first sum over those elements.
        With \Cref{Equ: differential trail} we then obtain the claimed inequality. \qed
    \end{proof}

    Provided that $M < \frac{p}{d_2}$, then the inequality is always less than $1$.
    For the restricted differential hull we can then estimate the security level of \Arion via
    \begin{align}
        &M^{r - 1} \cdot \left( \frac{d_2}{p} \right)^r \leq 2^{-\kappa} \\
        &\Longrightarrow \kappa \leq r \cdot \Big( \log_2 \left( p \right) - \log_2 \left( d_2 \right) \Big) - \left( r - 1 \right) \cdot \log_2 \left( M \right).
    \end{align}
    In \Cref{Tab: security level differential cryptanalysis restricted hull} we report round number for various field sizes that achieve at least $128$ bit security for a restricted hull of size $\sqrt{p}$ at round level.

    \begin{table}[H]
        \centering
        \caption{Security level of \Arion against differential cryptanalysis for $p \geq 2^N$ and $d_2 \leq 257$ with a restricted differential hull of size $2^{N / 2}$ at round level.}
        \label{Tab: security level differential cryptanalysis restricted hull}
        \begin{tabular}{ M{0.8cm} | M{0.8cm} | M{1.5cm} }
            \toprule
            $r$ & $N$   & $\kappa$ (bits) \\
            \midrule

            $5$ & $60$  & $139$ \\
            $4$ & $120$ & $267$ \\
            $3$ & $250$ & $475$ \\

            \bottomrule
        \end{tabular}
    \end{table}

    \subsection{Truncated Differential and Rebound Attacks}\label{Sec: truncated differential cryptanalysis}
    In a truncated differential attack \cite{FSE:Knudsen94} an attacker can only predict parts of the difference between pairs of text.
    We expect that the \Arion GTDS admits truncated differentials with probability $1$.
    For example $\left( 0, \dots, 0, \alpha \right)^\intercal \xrightarrow{\mathcal{F}_\Arion} \circulant \left( 1, \dots, n \right) \left( 0, \dots, 0, \beta \right)^\intercal$ where $\alpha, \beta \in \Fpx$.
    Note that for any vector $\mathbf{v} \in \Fpn$ with $\wt (\mathbf{v})$ one always has that $\wt \big( \circulant \left(1, \dots, n \right) \mathbf{v} \big) = n$.
    Therefore, for any truncated differential of the \Arion GTDS where only one element is active in the input and output one has that all components are active after application of the affine layer of \Arion.
    For truncated differentials we then estimate the security level of \Arion via
    \begin{equation}
        M \cdot \left( \frac{d_2}{p} \right)^n \leq 2^{-\kappa} \Longrightarrow \kappa \leq n \cdot \left( \log_2 \left( p \right) - \log_2 \left( d_2 \right) \right) - \log_2 \left( M \right),
    \end{equation}
    where $M$ denotes the size of the restricted differential hull available to a computationally limited adversary.
    In \Cref{Tab: security level truncated differential cryptanalysis of weight 1} we report the security level for various parameters.
    For primes of size $p \geq 2^{120}$ one full round is already sufficient to achieve $128$ bit security for a restricted differential hull of size $M \geq 2^{120}$.
    On the other hand for $p \geq 2^{60}$ the security target is not met.
    Therefore, if one would like to instantiate \Arion \& \ArionHash over such a prime one also has to consider later rounds.
    \clearpage % to fix box issues
    \begin{table}[H]
        \centering
        \caption{Security level of \Arion against truncated differential cryptanalysis with weight $1$ truncated differentials in the first round and for $p \geq 2^N$ and $d_2 \leq 257$ with a restricted differential hull of size $2^{M}$.}
        \label{Tab: security level truncated differential cryptanalysis of weight 1}
        \begin{tabular}{ M{0.8cm} | M{0.8cm} | M{0.8cm} | M{1.5cm} }
            \toprule
            $N$ & $n$ & $M$ & $\kappa$ (bits) \\
            \midrule

            $60$  & $4$ & $100$ & $107$ \\
            $120$ & $3$ & $120$ & $215$ \\
            $250$ & $3$ & $250$ & $475$ \\

            \bottomrule
        \end{tabular}
    \end{table}

    Suppose an adversary can cover two rounds with a truncated differential of probability $1$, his best bet is to search the remaining rounds for input/output differentials of weight $1$.
    For such an adversary we can then estimate the security level as
    \begin{equation}
        \left( M \cdot \frac{d_2}{p} \right)^{r - 2} \leq 2^{-\kappa} \Longrightarrow \kappa \leq \left( r - 2 \right) \cdot \left( \log_2 \left( p \right) - \log_2 \left( d_2 \right) - \log_2 \left( M \right) \right),
    \end{equation}
    where $M$ denotes the size of the restricted differential hull available to a computationally limited adversary at round level.
    In \Cref{Tab: two round truncated differential cryptanalysis} we report the security level for various parameters.
    For primes of size $p \geq 2^{250}$ four rounds of \Arion are sufficient to protect against two round truncated differentials of probability $1$ within the $128$ bit security target.
    On the other hand, for primes of size $p \geq 2^{60}$ one needs at least $6$ rounds to meet the $128$ bit security target.
    \begin{table}[H]
        \centering
        \caption{Security level of \Arion against two round truncated differentials for $p \geq 2^N$ and $d_2 \leq 257$ with a restricted differential hull of size $2^{N / 2}$ at round level.}
        \label{Tab: two round truncated differential cryptanalysis}
        \begin{tabular}{ M{0.8cm} | M{0.8cm} | M{1.5cm} }
            \toprule
            $N$ & $r - 2$ & $\kappa$ (bits) \\
            \midrule

            $60$  & $4$ & $87$  \\
            $120$ & $3$ & $155$ \\
            $250$ & $2$ & $233$ \\

            \bottomrule
        \end{tabular}
    \end{table}

    In a rebound attack \cite{AC:LMRRS09,FSE:MRST09} one has to find two input/output pairs such that the inputs satisfy a certain (truncated) input difference and the outputs satisfy a certain (truncated) output difference.
    Such an attack can be split into two phases: an \textit{inbound} and an \textit{outbound} phase.
    Let $P_\Arion: \Fqn \to \Fqn$ be the target permutation, then we split it into three sub-parts $P_\Arion = P_{fw} \circ P_{in} \circ P_{out}$.
    The inbound phase is placed in the middle preceded and followed by the two outbound phases.
    Then, in the outbound phase two high-probability (truncated) differential trails are constructed which are connected with the inbound phase.
    Since a truncated differential with probability $1$ can only cover a single round an attacker can cover only $r - 2$ rounds with an inside-out approach.
    By \Cref{Tab: two round truncated differential cryptanalysis}, for our target primes $\text{BLS}12$ and $\text{BN}254$ two inbound rounds are sufficient to achieve the $128$ bit security target.
    So in total $4$ rounds are sufficient to nullify this attack vector.

    \subsection{Linear Cryptanalysis}\label{Sec: linear cryptanalysis}
    In linear cryptanalysis \cite{EC:Matsui93,SAC:BaiSteVau07} one tries to discover affine approximations of round functions for a sample of known plaintexts.
    The key quantity to estimate the effectiveness of linear cryptanalysis is the so-called correlation.
    We denote with $\braket{\mathbf{a}, \mathbf{b}} = \sum_{i = 1}^{n} a_i \cdot b_i$ the scalar product over $\Fqn$.
    \begin{defn}[{see \cite[Definition~6, 15]{SAC:BaiSteVau07}}]\label{Def: correlation}
        Let $\Fq$ be a finite field, let $n \geq 1$, let $\chi: \Fq \to \mathbb{C}$ be a non-trivial additive character, let $F: \Fqn \to \Fqn$ be a function, and let $\mathbf{a}, \mathbf{b} \in \Fqn$.
        \begin{enumerate}
            \item\label{Item: correlation} The correlation for the character $\chi$ of the linear approximation $(\mathbf{a}, \mathbf{b})$ of $F$ is defined as
            \begin{equation*}
                \CORR_F (\chi, \mathbf{a}, \mathbf{b}) = \frac{1}{q^n} \cdot \sum_{\mathbf{x} \in \Fqn} \chi \Big( \big< \mathbf{a}, F (\mathbf{x}) \big> + \braket{\mathbf{b}, \mathbf{x}} \! \Big).
            \end{equation*}

            \item The linear probability for the character $\chi$ of the linear approximation $(\mathbf{a}, \mathbf{b})$ of $F$ is defined as
            \begin{equation*}
                \LP_F (\chi, \mathbf{a}, \mathbf{b}) = \left| \CORR_F (\chi, \mathbf{a}, \mathbf{b})  \right|^2.
            \end{equation*}
        \end{enumerate}
    \end{defn}
    \begin{rem}
        To be precise Baign\`eres et al.\ \cite{SAC:BaiSteVau07} defined linear cryptanalysis over arbitrary abelian groups, in particular for maximal generality they defined the correlation with respect to two additive characters $\chi, \psi: \Fq \to \mathbb{C}$ as
        \begin{equation}\label{Equ: original correlation definition}
            \CORR_F (\chi, \psi, \mathbf{a}, \mathbf{b}) = \frac{1}{q^n} \cdot \sum_{\mathbf{x} \in \Fqn} \chi \Big( \big< \mathbf{a}, F (\mathbf{x}) \big> \Big) \cdot \psi \Big( \big< \mathbf{b}, \mathbf{x} \big> \! \Big).
        \end{equation}
        Let $\Fq$ be a finite field of characteristic $p$, and let $\Tr: \Fq \to \Fp$ be the absolute trace function, see \cite[2.22.~Definition]{Niederreiter-FiniteFields}.
        For all $x \in \Fq$ we define the function $\chi_1$ as
        \begin{equation*}
            \chi_1 (x) = \exp \left(\frac{2 \cdot \pi \cdot i}{p} \cdot \Tr (x) \right).
        \end{equation*}
        Then for every non-trivial additive character $\chi: \Fq \to \mathbb{C}$ there exist $a \in \Fq^\times$ such that $\chi (x) = \chi_1 (a \cdot x)$, see \cite[5.7.~Theorem]{Niederreiter-FiniteFields}.
        Therefore, after an appropriate rescaling that we either absorb into $\mathbf{a}$ or $\mathbf{b}$ we can transform \Cref{Equ: original correlation definition} into \Cref{Def: correlation} \ref{Item: correlation}.
    \end{rem}

    For \Arion we approximate every round by affine functions, and we call the tuple $\Omega = ( \mathbf{a}_0, \dots, \mathbf{a}_r) \subset \left( \Fpn \right)^{r + 1}$ a linear trail for \Arion, where $(\mathbf{a}_{i - 1}, \mathbf{a}_i)$ is the affine approximation of the $i$\textsuperscript{th} round.
    Note that for $\mathbf{A} \in \Fpnxn$ and $\mathbf{c} \in \Fpn$ one has that $\LP_{\mathbf{A} F + \mathbf{c}} (\chi, \mathbf{a}, \mathbf{b}) = \LP_F (\chi, \mathbf{A}^\intercal \mathbf{a}, \mathbf{b})$.
    I.e., to bound the correlation of a single round it suffices to bound the correlation of the \Arion GTDS.
    By \cite[Theorem~24, Corollary~25]{GTDS} we have that
    \begin{equation}\label{Equ: Arion GTDS linear probability}
        \LP_{\mathcal{F}_\Arion} (\chi, \mathbf{a}, \mathbf{b}) \leq
        \begin{dcases}
            1, & \mathbf{a}, \mathbf{b} = \mathbf{0}, \\
            \frac{\left( d_2 - 1 \right)^2}{q}, & \text{else}.
        \end{dcases}
    \end{equation}
    If we assume that the approximations among the rounds of \Arion are independent, then
    \begin{equation}
        \LP_{\Arion} (\chi, \Omega) \leq \left( \frac{\left( d_2 - 1 \right)^2}{q} \right)^r.
    \end{equation}
    If a distinguisher is limited to $D$ queries, then under heuristic assumptions Baign\`eres et al.\ proved \cite[Theorem~7]{SAC:BaiSteVau07} that the advantage of a distinguisher, which we call success probability, for a single linear trail is lower bounded by
    \begin{equation}
        p_{success} \succeq  1 - e^{-\frac{D}{4} \cdot \LP_{\Arion} (\chi, \Omega)} \leq 1 - e^{-\frac{D}{4} \cdot \left( \frac{\left( d_2 - 1 \right)^2}{q} \right)^r}.
    \end{equation}
    For $p \geq 2^N$ and $d_2 \leq 2^M$ we approximate $\exp (-x) \approx 1 - x$ and estimate the security level $\kappa$ of \Arion against a linear trail $\Omega$ as
    \begin{align}
        p_{success} &\approx \frac{D}{4} \cdot \left( \frac{\left( d_2 - 1 \right)^2}{q} \right)^r \leq \frac{D}{4} \cdot 2^{r \cdot (2 \cdot M - N)} \leq 2^{-\kappa} \label{Equ: security level linear cryptanalysis} \\
        \Longrightarrow \kappa &\leq 2 + r \cdot \left( N - 2 \cdot M \right) - \log_2 \left( D \right).
    \end{align}

    Now recall that \Arion is supposed to be instantiated with a $64$ bit prime number $p \gtrsim 2^{64}$ and that $d_2 \in \{ 121, 123, 125, 129, 161, 193, 195, 257 \}$, then \Cref{Equ: security level linear cryptanalysis} implies the following security levels for \Arion.
    \begin{table}[H]
        \centering
        \caption{Security level of \Arion against any linear trail for $p \geq 2^N$, $d_2 \leq 2^9$ and data amount $2^M$.}
        \label{Tab: security level linear cryptanalysis}
        \begin{tabular}{ c | c  c  c | c c | c c }
            \toprule
            & \multicolumn{3}{ c | }{$N = 60$} & \multicolumn{2}{ c | }{$N = 120$} & \multicolumn{2}{ c }{$N = 250$} \\
            \midrule

            \phantom{x}$r$\phantom{x}             & $5$   & $6$   & $8$   & $3$   & $4$   & $2$   & $3$ \\
            \phantom{x}$M$\phantom{x}             & $60$  & $120$ & $180$ & $120$ & $240$ & $250$ & $500$ \\
            \phantom{x}$\kappa$ (bits)\phantom{x} & \phantom{x}$152$\phantom{x} & \phantom{x}$134$\phantom{x} & \phantom{x}$158$\phantom{x} & \phantom{x}$188$\phantom{x} & \phantom{x}$170$\phantom{x} & \phantom{x}$216$\phantom{x} & \phantom{x}$198$\phantom{x} \\

            \bottomrule
        \end{tabular}
    \end{table}

    For a computationally limited adversary we next derive an estimation of the probability of a restricted linear hull.
    \begin{lem}[{Restricted linear hull of \Arion}]
        Let $p \in \mathbb{Z}$ be a prime and let $\Fp$ be the field with $p$ elements, let $n > 1$ and $r \geq 1$ be integers, and let $\chi: \Fp \to \mathbb{C}$ be a non-trivial additive character.
        Let $\mathcal{R}_\mathbf{k}^{(1)}, \dots, \mathcal{R}_\mathbf{k}^{(r)}: \Fpn \times \Fpn \to \Fpn$ be \Arion round functions, and let $(\mathbf{a}_1, \mathbf{a}_r) \in \left( \Fpn \setminus \{ \mathbf{0} \} \right)^2 \in \left( \Fpn \setminus \{ \mathbf{0} \} \right)^2$.
        Assume that the linear approximations among the rounds of \Arion are independent and that in every round only up to $M < (p - 1) \cdot n$ many approximations $(\mathbf{a}_i, \mathbf{a}_{i + 1}) \in \left( \Fpn \setminus \{ \mathbf{0} \} \right)^2$ can be utilized.
        Then
        \begin{equation*}
            \LP_{\mathcal{R}_\mathbf{k}^{(r)} \circ \cdots \circ \mathcal{R}_\mathbf{k}^{(1)}} (\chi, \mathbf{a}_1, \mathbf{a}_r) \leq M^{r - 1} \cdot \left( \frac{\left( d_2 - 1 \right)^2}{p} \right)^r.
        \end{equation*}
    \end{lem}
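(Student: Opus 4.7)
The plan is to imitate the structure of the proof of \Cref{Th: differential hull}, transposing it from the differential setting to the linear one via the standard linear-hull decomposition for composed functions under the independent-rounds hypothesis. First, I would invoke the linear-hull identity expressing $\LP$ of a composition as a sum over intermediate masks of products of per-round $\LP$'s. Using that correlations of compositions satisfy $\CORR_{F \circ G}(\chi, \mathbf{a}, \mathbf{c}) = \sum_{\mathbf{b}} \CORR_F(\chi, \mathbf{a}, \mathbf{b}) \cdot \CORR_G(\chi, \mathbf{b}, \mathbf{c})$, squaring this identity together with the independence of the round approximations produces
\begin{equation*}
\LP_{\mathcal{R}_\mathbf{k}^{(r)} \circ \cdots \circ \mathcal{R}_\mathbf{k}^{(1)}}(\chi, \mathbf{a}_1, \mathbf{a}_{r+1}) = \sum_{\mathbf{a}_2, \dots, \mathbf{a}_r \in \Fpn} \prod_{i=1}^{r} \LP_{\mathcal{R}_\mathbf{k}^{(i)}}(\chi, \mathbf{a}_i, \mathbf{a}_{i+1}),
\end{equation*}
in direct parallel to the decomposition exploited in the differential case.

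Next, since each round function is the \Arion GTDS composed with the invertible affine layer and a key addition, the affine invariance of $\LP$ reduces $\LP_{\mathcal{R}_\mathbf{k}^{(i)}}(\chi, \mathbf{a}_i, \mathbf{a}_{i+1})$ to $\LP_{\mathcal{F}_\Arion^{(i)}}(\chi, \mathbf{a}_i, \mathbf{L}^\intercal \mathbf{a}_{i+1})$, where $\mathbf{L} = \circulant(1, \dots, n)$; the key addition contributes only a unimodular phase factor that disappears upon taking absolute squares. The bound from \Cref{Equ: Arion GTDS linear probability} then gives $\LP_{\mathcal{F}_\Arion^{(i)}}(\chi, \mathbf{a}_i, \mathbf{L}^\intercal \mathbf{a}_{i+1}) \leq (d_2 - 1)^2 / p$ whenever both masks are non-zero, which is forced in every summand of interest because $\mathbf{a}_1 \neq \mathbf{0}$ and $\mathbf{L}$ is invertible.

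Finally, I would apply the restricted-hull hypothesis: in each of the $r - 1$ intermediate positions only up to $M$ mask values are effectively available to the adversary, so the outer summation is truncated to at most $M^{r-1}$ tuples $(\mathbf{a}_2, \dots, \mathbf{a}_r)$. Multiplying this $M^{r-1}$ bound on the number of summands with the uniform bound $(d_2 - 1)^2 / p$ on each of the $r$ factors inside any such term yields exactly $M^{r-1} \cdot \bigl( (d_2 - 1)^2 / p \bigr)^r$.

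The main obstacle is formalising cleanly what ``in every round only up to $M$ approximations can be utilized'' means as a precise upper bound on the effective linear hull rather than on the true one. The cleanest route is to interpret the distinguishable quantity as the partial hull-sum restricted to the $M^{r-1}$ admissible intermediate tuples, after which the worst-case per-summand bound finishes the proof; this is also the implicit interpretation used in the analogous restricted differential-hull lemma. The composition identity, the affine reduction, and the per-round correlation bound are otherwise routine or already available in the paper.
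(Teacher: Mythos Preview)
Your proposal is correct and follows essentially the same route as the paper: invoke the linear-hull decomposition under the independent-rounds assumption, reduce each per-round $\LP$ to that of the \Arion GTDS via the affine invariance $\LP_{\mathbf{A}F+\mathbf{c}}(\chi,\mathbf{a},\mathbf{b})=\LP_F(\chi,\mathbf{A}^\intercal\mathbf{a},\mathbf{b})$, apply the bound \Cref{Equ: Arion GTDS linear probability}, and then truncate the sum to $M^{r-1}$ admissible intermediate tuples. The only cosmetic difference is the placement of the transpose (the paper applies $\circulant(1,\dots,n)^\intercal$ to the output mask $\mathbf{a}_i$ rather than the input mask $\mathbf{a}_{i+1}$, consistent with $\mathcal{R}=\mathcal{K}\circ\mathcal{L}\circ\mathcal{F}$), but this does not affect the argument.
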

    \begin{proof}
        By the independence of rounds we have that
        \begin{align*}
            \LP_{\mathcal{R}_\mathbf{k}^{(r)} \circ \cdots \circ \mathcal{R}_\mathbf{k}^{(1)}}
            &= \sum_{\mathbf{a}_2, \dots, \mathbf{a}_{r - 1}} \prod_{i = 1}^{r} \ \LP_{\mathcal{R}_\mathbf{k}^{(i)}} (\chi, \mathbf{a}_i, \mathbf{a}_{i + 1}) \\
            &= \sum_{\mathbf{a}_2, \dots, \mathbf{a}_{r - 1}} \prod_{i = 1}^{r} \ \LP_{\mathcal{F}^{(i)}} \left( \chi, \circulant (1, \dots, n)^\intercal \mathbf{a}_i, \mathbf{a}_{i + 1} \right) \\
            &\leq \sum_{\mathbf{a}_2, \dots, \mathbf{a}_{r - 1}} \left( \frac{\left( d_2 - 1 \right)^2}{p} \right)^r
            \leq M^{r - 1} \cdot \left( \frac{\left( d_2 - 1 \right)^2}{p} \right)^r
        \end{align*}
        where the last two inequalities follow from \Cref{Equ: Arion GTDS linear probability} and the assumption that only $M$ approximations can be considered per round. \qed
    \end{proof}

    Provided that $M < \frac{p}{\left( d_2 - 1 \right)^2}$, then the inequality is always less than $1$.
    For the restricted differential hull we can then estimate the security level of \Arion via
    \begin{align}
        &M^{r - 1} \cdot \left( \frac{\left( d_2 - 1 \right)^2}{p} \right)^r \leq 2^{-\kappa} \\
        &\Longrightarrow \kappa \leq r \cdot \Big( \log_2 \left( p \right) - 2 \cdot \log_2 \left( d_2 - 1 \right) \Big) - \left( r - 1 \right) \cdot \log_2 \left( M \right).
    \end{align}
    In \Cref{Tab: security level linear cryptanalysis restricted hull} we report round number for various field sizes that achieve at least $128$ bit security for a restricted hull of size $\sqrt{p}$ at round level.

    \begin{table}[H]
        \centering
        \caption{Security level of \Arion against linear cryptanalysis for $p \geq 2^N$ and $d_2 \leq 257$ with a restricted linear hull of size $2^{N / 2}$ at round level.}
        \label{Tab: security level linear cryptanalysis restricted hull}
        \begin{tabular}{ M{0.8cm} | M{0.8cm} | M{1.5cm} }
            \toprule
            $r$ & $N$   & $\kappa$ (bits) \\
            \midrule

            $5$ & $60$  & $139$ \\
            $4$ & $120$ & $267$ \\
            $3$ & $250$ & $475$ \\

            \bottomrule
        \end{tabular}
    \end{table}

    \subsection{Other Statistical Attacks}
    By the definition of $\circulant \left( 1, \dots, n \right)$ a difference in one component can affect the whole state by a single round function call.
    Therefore, impossible differentials \cite{EC:BihBirSha99} and zero-correlation \cite{FSE:BogWan12,Bogdanov-Linear} can hardly be mounted on $3$ or more rounds.

    Boomerang attacks \cite{FSE:Wagner99,C:JouPey07} search for quartets that satisfy two differential paths simultaneously.
    For our target primes $\text{BLS}12$ and $\text{BN}254$ no differentials with high probability exist for more than two rounds, see \Cref{Equ: differential trail}.
    Therefore, a boomerang attack can hardly be mounted on $4$ rounds of \Arion \& \ArionHash.

    \section{Algebraic Attacks on \Arion}
    \subsection{Higher-Order Differential \& Interpolation Attacks}\label{Sec: interpolation attacks}
    Interpolation attacks \cite{FSE:JakKnu97} construct the polynomial vector representing a cipher without knowledge of the secret key.
    If such an attack is successful against a cipher, then an adversary can encrypt any plaintext without knowledge of the secret key.
    For a hash function the interpolated polynomial vector can be exploited to set up collision or forgery attacks.
    The cost of interpolating a polynomial depends on the number of monomials present in the polynomial vector representing the cipher function.
    Recall that any function $F: \Fqn \to \Fq$ can be represented by a unique polynomial $f \in \Fq [\mathbf{X}_n] = \Fq [x_1, \dots, x_n] / \left( x_1^q - x_1, \dots, x_n^q - x_n \right)$, thus at most $q^n$ monomials can be present in $f$.
    Clearly, if $f$ is dense, then an interpolation attack cannot be done faster than exhaustive search.

    Let $\mathcal{R}^{(i)} \subset \Fp [\mathbf{X}_n]$ denote the polynomial vector of the \Arionpi round function, we say that $\mathcal{R}^{(i + 1)} \circ \mathcal{R}^{(i)}$ has a degree overflow if we have to reduce with at least one of the field equations to compute the unique representation in $\Fp [\mathbf{X}_n]$.
    As we saw in \Cref{Tab: degree overflow}, for the fields $\text{BLS}12$ and $\text{BN}254$ some of our specified \Arion parameters already achieve a degree overflow in the first round.
    Moreover, after the first round we expect that terms
    \begin{equation}
        \left( \sum_{i = 1}^{n} i \cdot x_i \right)^e + \sum_{i = 1}^{n} i \cdot x_i
    \end{equation}
    are present in every branch.
    After another application of the round function we expect to produce the terms
    \begin{equation}\label{Equ: high degree terms}
        \left( \left( \sum_{i = 1}^{n} i \cdot x_i \right)^e + \sum_{i = 1}^{n} i \cdot x_i \right)^e \mod \left( x_1^p - x_1, \dots, x_n^p - x_n \right)
    \end{equation}
    in every branch.
    By our specification $e$ is the inverse exponent of a relatively low degree permutation, therefore we expect that the degrees of some of the aforementioned terms are close to $p - 2$.
    In particular, this is the case if $e^2 \geq p$.
    In addition, we also expect that a big fraction of the monomials in $\Fp [\mathbf{X}_n]$ is present in the components of the polynomial vector of \Arionpi after at least two iterations.
    Although an adversary can nullify the circulant matrix that is applied to the input vector in some attack scenarios, he cannot do so for later rounds.
    Therefore, we expect that at least after three rounds terms similar to \Cref{Equ: high degree terms} are present in the polynomial vector of \Arionpi.
    Further, to frustrate Meet-in-the-Middle (MitM) attacks we require that the number of rounds $r \geq 4$.

    We implemented \Arion in \verb!SageMath! \cite{SageMath} to compute the density of \Arionpi for small primes.
    Our findings in \Cref{Tab: density experiment} suggest that after two rounds \Arionpi has already almost full density over $\Fp$.
    %    \clearpage % to fix box issues
    \begin{table}[H]
        \centering
        \caption{Observed minimum density for \Arionpi for small primes, $n = 3, 4, 5$ and $d_1, d_2 = 3, 5$.}
        \label{Tab: density experiment}
        \begin{tabular}{ M{5mm} | M{5mm} | M{2.8cm} | M{2.3cm} | M{2.8cm} }
            \toprule
            $p$ & $r$ & Minimum density after $2$ rounds & Degree after $2$ rounds & Univariate degree after $2$ rounds \\
            \midrule

            $11$ & $6$ & $\geq 82 \%$ & $n \cdot (p - 1) - 1$ & $p - 1$ \\
            $13$ & $6$ & $\geq 91 \%$ & $n \cdot (p - 1) - 1$ & $p - 1$ \\
            $17$ & $6$ & $\geq 91 \%$ & $n \cdot (p - 1) - 1$ & $p - 1$ \\
            $19$ & $6$ & $\geq 92 \%$ & $n \cdot (p - 1) - 1$ & $p - 1$ \\
            $23$ & $6$ & $\geq 90 \%$ & $n \cdot (p - 1) - 1$ & $p - 1$ \\

            \bottomrule
        \end{tabular}
    \end{table}

    If after two rounds the density of \Arionpi is $\geq 0.8 \cdot p^n$, then for $p = 2^{60}$ and $n = 3$ already more than $2^{128}$ terms are present in \Arionpi.
    Therefore, we expect that \Arion resists interpolation attacks with the full key.
    An adversary can improve the capabilities of an interpolation attack by guessing parts of the key.
    If he can guess $n - 1$ parts correctly, then we expect that all univariate polynomials in \Arion have degree close to $p$.
    To retrieve the remaining key an adversary has to factor at least one of the polynomials, due to the high degree his best choice to perform a greatest common divisor computation.
    We can then estimate the complexity with $\mathcal{O} \big( p \cdot \log (p) \big)$.
    Therefore, for our target primes $\text{BLS}12$ and $\text{BN}254$ this complexity always exceeds the $128$ bit security target after two rounds.

    For an interpolation attack on \ArionHash we have a similar scenario as if we guessed parts of the key of \Arion.
    At worst only one input of \ArionHash is unknown to the adversary, he then still has to factor a polynomial of degree close to $p$, as mentioned before we do not expect that this defeats the $128$ bit security target after two rounds.

    Higher-order differential attacks \cite{FSE:Knudsen94,Lai-HigherOrder,C:BCDELLNPSTW20} exploit that higher differentials will vanish at some point.
    Since the density of \Arionpi exceeds $\geq 0.8 \cdot p^n$ after two rounds and its univariate degrees are close to $p$, we do not expect that higher-order differentials and distinguishers on \Arion \& \ArionHash can undermine the $128$ bit security target for our target primes $\text{BLS}12$ and $\text{BN}254$ after two rounds.

    \subsection{Integral Attacks}\label{Sec: integral attacks}
    The notion of integral cryptanalysis was introduced by Knudsen \& Wagner \cite{FSE:KnuWag02} and generalized to arbitrary finite fields by Beyne et al.\ \cite{C:BCDELLNPSTW20}.
    It is based on the following properties of polynomial valued functions.
    \begin{prop}[{see \cite[Proposition~1, 2]{C:BCDELLNPSTW20}}]
        Let $\Fq$ be a finite field, and let $F: \Fqn \to \Fq$ be a polynomial valued function.
        \begin{enumerate}
            \item Let $V \subset \Fqn$ be an affine subspace of dimension $k$.
            If $\degree{F} < k \cdot \left( q - 1 \right)$, then
            \begin{equation*}
                \sum_{\mathbf{x} \in V} F (\mathbf{x}) = 0.
            \end{equation*}

            \item Let $G_1, \dots, G_n \subset \Fq^\times$ be multiplicative subgroups, and let $G = \bigtimes_{i = 1}^{n} G_i$ their product group.
            If $\deg_{x_i} \left( F \right) < \left| G_i \right|$ for all $1 \leq i \leq n$, then
            \begin{equation*}
                \sum_{\mathbf{x} \in G} F (\mathbf{x}) - \left| G \right| \cdot F(\mathbf{0}) = 0.
            \end{equation*}
        \end{enumerate}
    \end{prop}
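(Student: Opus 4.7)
The plan is to reduce both claims to evaluating the power sum $S_d(H) := \sum_{t \in H} t^d$ over two kinds of sets $H \subset \Fq$. The two classical identities I intend to invoke are
$$\sum_{t \in \Fq} t^d = \begin{cases} -1 & \text{if } d > 0 \text{ and } (q-1) \mid d, \\ 0 & \text{otherwise,} \end{cases}$$
for the additive case, and, for a multiplicative subgroup $H \leq \Fqx$, the orthogonality relation $\sum_{t \in H} t^d = |H|$ if $|H|$ divides $d$ and $0$ otherwise (the latter follows because $t \mapsto t^d$ is a character of $H$, and non-trivial characters sum to zero). Once these are in hand, the rest is monomial bookkeeping.

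For part (1), I would first reduce the affine subspace $V$ to a linear one by a shift: fix any $\mathbf{v}_0 \in V$, a basis $\mathbf{w}_1, \dots, \mathbf{w}_k$ of the associated direction, and parametrise $V$ by $\mathbf{x}(\mathbf{t}) = \mathbf{v}_0 + \sum_{i=1}^k t_i \mathbf{w}_i$ with $\mathbf{t} \in \Fq^k$. The pulled-back polynomial $\tilde F(\mathbf{t}) := F(\mathbf{x}(\mathbf{t}))$ has total degree at most $\degree{F} < k(q-1)$, since substituting affine forms in $\mathbf{t}$ cannot raise the total degree. Expanding $\tilde F$ as a sum of monomials $t_1^{d_1} \cdots t_k^{d_k}$ with $d_1 + \cdots + d_k < k(q-1)$, pigeonhole guarantees that some $d_i < q-1$; the additive identity applied to the $t_i$-sum then kills that monomial, so $\sum_{\mathbf{x} \in V} F(\mathbf{x}) = \sum_{\mathbf{t} \in \Fq^k} \tilde F(\mathbf{t})$ vanishes term by term.

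For part (2), the sum factorises directly over the product group. Writing $F = \sum_{\mathbf{d}} c_{\mathbf{d}} \mathbf{x}^{\mathbf{d}}$, one has
$$\sum_{\mathbf{x} \in G} F(\mathbf{x}) = \sum_{\mathbf{d}} c_{\mathbf{d}} \prod_{i=1}^n \left( \sum_{t \in G_i} t^{d_i} \right).$$
By the multiplicative identity, $\sum_{t \in G_i} t^{d_i}$ equals $|G_i|$ when $|G_i|$ divides $d_i$ and $0$ otherwise. The assumption $\deg_{x_i}(F) < |G_i|$ forces $0 \leq d_i < |G_i|$ in every monomial, so the divisibility condition is met only when $d_i = 0$. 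Consequently only the $\mathbf{d} = \mathbf{0}$ term survives, contributing $c_{\mathbf{0}} \cdot |G| = F(\mathbf{0}) \cdot |G|$, which rearranges to the stated identity.

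The argument contains no real obstacle; the only care needed is to keep the two power-sum identities cleanly separated, since part (1) uses a total-degree bound over all of $\Fq^k$ while part (2) uses per-variable bounds over a product of proper multiplicative subgroups. Trying to unify them into a single lemma obscures the different roles played by the additive identity on $\Fq$ and the multiplicative identity on $G_i$, so the separate treatment above is the cleanest route.
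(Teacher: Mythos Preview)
Your proposal is correct and, for part~(2), essentially identical to the paper's own argument: both expand $F$ into monomials, swap the order of summation, separate off the constant term (which contributes $|G|\cdot F(\mathbf 0)$), and kill every remaining monomial using the character-sum identity $\sum_{t\in G_i}t^{d}=0$ for $0<d<|G_i|$. The only cosmetic difference is that you factor the inner sum completely as $\prod_i \sum_{t\in G_i} t^{d_i}$, while the paper picks a single coordinate with positive exponent and shows that one factor already vanishes; both routes invoke the same cited lemma. For part~(1) the paper does not give an argument at all---it simply cites \cite[Corollary~1]{C:BCDELLNPSTW20}---so your affine-parametrisation plus pigeonhole proof actually supplies more than the paper does, and it is the standard proof one finds in that reference.
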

    \begin{proof}
        (1) was proven in \cite[Corollary~1]{C:BCDELLNPSTW20}, for (2) we index the terms of $F$ by $j$ and then rearrange the sum
        \begin{align*}
            \sum_{\mathbf{x} \in G} F (\mathbf{x})
            &= \sum_{x_1 \in G_1, \dots, x_n \in G_n} \ \sum_{j \geq 0} a_j \cdot x_1^{k_{1, j}} \cdots x_n^{k_{n, j}} \\
            &= \sum_{j \geq 0} \ \sum_{x_1 \in G_1, \dots, x_n \in G_n} a_j \cdot x_1^{k_{1, j}} \cdots x_n^{k_{n, j}}.
        \end{align*}
        For the zero term this sums to $\left| G \right| \cdot F(\mathbf{0})$, now assume that at least one $k_{i, j} > 0$, say $k_{1, j}$, then we further rearrange
        \begin{equation*}
            \sum_{x_1 \in G_1, \dots, x_n \in G_n} a_j \cdot x_1^{k_{1, j}} \cdots x_n^{k_{n, j}} = \sum_{x_2 \in G_1, \dots, x_n \in G_n} \left( \sum_{x_1 \in G_1} a_j \cdot x_1^{k_{1, j}} \cdots x_n^{k_{n, j}} \right).
        \end{equation*}
        For the inner sum on the right-hand side we can consider $x_2, \dots, x_n$ as non-zero field elements, by assumption $k_{1, j} < \left| G_1 \right|$ so by \cite[Proposition 2]{C:BCDELLNPSTW20} the sum vanishes.
        This proves the claim. \qed
    \end{proof}

    Let us investigate the capabilities of integral distinguishers on \Arionpi.
    Without loss of generality we can ignore the application $\circulant (1, \dots, n)$ before the first \Arion GTDS.
    To minimize the degree after the first \Arion GTDS we fix all inputs except $x_1$, then only the first component is non-constant in $x_1$, but after application of the first affine layer all components contain the monomial $x_1^{d_1}$.
    With our findings from interpolation attacks we expect that two additional rounds of \Arionpi are sufficient to increase the degree in $x_1$ of all components to at least $p - 2$.

    Therefore, for our target fields $\text{BLS}12$ and $\text{BN}254$ we do not expect that integral attacks can invalidate our $128$ bit security target after two rounds of \Arion \& \ArionHash.
    On the other hand, if \Arion or \ArionHash are instantiated over prime fields $p < 2^{128}$, then integral attacks could be a non-negligible threat since the degree of a univariate function can never exceed $p$.

    \section{Gr\"obner Basis Analysis}\label{Sec: Groebner basis attacks}
    In a Gr\"obner basis attack \cite{Buchberger,Cox-Ideals} the adversary represents a cryptographic function as fully determined system of polynomial equations and then solves for the solutions of the system.
    Since the system is fully determined at least one solution of the polynomial system must contain the quantity of interest, e.g.\ the key of a block cipher or the preimage of a hash function.
    In general, a Gr\"obner basis attack proceeds in four steps:
    \begin{enumerate}
        \item Model the cryptographic function with a (iterated) system of polynomials.

        \item Compute a Gr\"obner basis with respect to an efficient term order, e.g., the degree reverse lexicographic (DRL) order.

        \item Perform a term order conversion to an elimination order, e.g., the lexicographic (LEX) order.

        \item Solve the univariate equation.
    \end{enumerate}
    Let us for the moment assume that an adversary has already found a Gr\"obner basis and discuss the complexity of the remaining steps.
    Let $k$ be a field, let $I \subset k [x_1, \dots, x_n]$ be a zero-dimensional ideal modeling a cryptographic function, and let $d = \dim_k \left( k [x_1, \dots, x_n] / I \right)$ be the $k$-vector space dimension of the quotient space.
    With the original FGLM algorithm \cite{Faugere-FGLM} the complexity of term order conversion is $\mathcal{O} \left( n \cdot d^3 \right)$, but improved versions with probabilistic methods achieve $\mathcal{O} \left( n \cdot d^\omega \right)$ \cite{Faugere-SubCubic}, where $2 \leq \omega < 2.3727$, and sparse linear algebra algorithms \cite{Faugere-SparseFGLM} achieve $\mathcal{O} \left( \sqrt{n} \cdot d^{2 + \frac{n - 1}{n}} \right)$.
    To find the roots of the univariate polynomial from the elimination Gr\"obner basis\footnote{
        If $I$ is a radical ideal, then the degree of the univariate polynomial in the elimination Gr\"obner is indeed $d$, though for non-radical ideals the degree can be larger than $d$.
        For a simple example in the non-radical case consider $(x^2) \subset k [x]$.
    } $f \in \Fq [x]$ with $d = \degree{f}$ most efficiently we perform a greatest common divisor method that has recently been described in \cite[§3.1]{ToSC:BBLP22}.
    \begin{enumerate}
        \item Compute $g = x^q - x \mod f$.

        The computation of $x^q \mod f$ requires $\mathcal{O} \Big( d \cdot \log \left( q \right) \cdot \log \left( d \right) \cdot \log \big( \log \left( d \right) \big) \Big)$ field operations with a double-and-add algorithm.

        \item Compute $h = \gcd \left( f , g \right)$.

        By construction $h$ has the same roots as $f$ in $\Fq$ since $h = \gcd \left( f, x^q - x \right)$, but its degree is likely to be much lower.

        This step requires $\mathcal{O} \left( d \cdot \log \left( d \right)^2 \cdot \log \big( \log \left( d \right) \big) \right)$ field operations.

        \item Factor $h$.

        In general, the polynomial $f$ coming from a $0$-dimensional Gr\"obner basis has only a few roots in $\Fq$.

        Thus, this step is negligible in complexity.
    \end{enumerate}
    Note that this method is only performative if $\degree{f} < q$ else one has to exchange the roles of $f$ and the field equation.
    Overall solving the polynomial system with probabilistic methods requires
    \begin{equation}\label{Equ: complexity probabilistic solving}
        \mathcal{O} \left( n \cdot d^\omega + d \cdot \log \left( q \right) \cdot \log \left( d \right) \cdot \log \big( \log \left( d \right) \big) + d \cdot \log \left( d \right)^2 \cdot \log \big( \log \left( d \right) \big) \right)
    \end{equation}
    field operations, and with deterministic methods
    \begin{equation}\label{Equ: complexity deterministic solving}
        \mathcal{O} \left( \sqrt{n} \cdot d^{2 + \frac{n - 1}{n}} + d \cdot \log \left( q \right) \cdot \log \left( d \right) \cdot \log \big( \log \left( d \right) \big) + d \cdot \log \left( d \right)^2 \cdot \log \big( \log \left( d \right) \big) \right)
    \end{equation}
    field operations.

    We must stress that we are unable to quantify the success probability of the probabilistic term order conversion.
    The probabilistic analysis of \cite[\S 5.1]{Faugere-SparseFGLM} requires that the ideal is radical and that the homogenization of the DRL Gr\"obner basis is a regular sequence.
    We neither have a proof nor a disproof of these technical requirements for the \Arion and \ArionHash polynomial systems.
    Therefore, we cannot quantify how capable the probabilistic approach is aside from its complexity estimate.

    Let us now discuss the complexity of Gr\"obner basis computations.
    Today, the most efficient algorithms to compute Gr\"obner bases are Faug\`{e}re's linear algebra-based algorithms F4 \cite{Faugere-F4} and F5 \cite{Faugere-F5}.
    The main idea of linear algebra-based Gr\"obner basis algorithms can be traced back to Lazard \cite{Lazard-Groebner}.
    Let $\mathcal{F} = \{ f_1, \dots, f_m \} \subset P = k [x_1, \dots, x_n]$ be a finite set of homogeneous polynomials over a field $k$.
    The \textit{homogeneous Macaulay matrix} $M_d$ of degree $d$ has columns indexed by monomials in $s \in P_d$ and rows indexed by polynomials $t \cdot f_j$, where $t \in P$ is a monomial such that $\degree{t \cdot f_j} = d$.
    The entry of row $t \cdot f_j$ at column $s$ is then the coefficient of the monomial $s$ in $t \cdot f_j$.
    If $\mathcal{F}$ is an inhomogeneous system of polynomials, then one replaces $M_d$ by $M_{\leq d}$ and the degree equality by an inequality.
    If one fixes a term order $>$ on $P$, then by performing Gaussian elimination on $M_d$ respectively $M_{\leq d}$ for a large enough value of $d$ one produces a $>$-Gr\"obner basis of $\mathcal{F}$.
    The least such $d$ is called the solving degree $\solvdeg_> \left( \mathcal{F} \right)$ of $\mathcal{F}$.
    (The notion of solving degree was first introduced in \cite{Ding-SolvingDegree}, though we use the definition of \cite{Caminata-SolvingPolySystems}.)
    The main improvement of F4/5 over Lazard's method is the choice of efficient selection criteria.
    Conceptually, the Macaulay matrix will contain many redundant rows, if one is able to avoid many of these rows with selection criteria, then the running time of an algorithm will improve.
    Nevertheless, with the notion of the solving degree it is possible to upper bound the maximal size of the computational universe of F4/5.
    It is well-known that the number of monomials in $P$ of degree $d$ is given by the binomial coefficient
    \begin{equation}
        N (n, d) = \binom{n + d - 1}{d}.
    \end{equation}
    We can now upper bound the size of the Macaulay matrix by $m \cdot d \cdot N (n, d) \times d \cdot N (n, d)$.
    Overall, we can bound the complexity of Gaussian elimination on the Macaulay matrix $M_{\leq d}$ by
    \begin{equation}\label{Equ: complexity estimate}
        \mathcal{O} \Bigg( \binom{n + d - 1}{d}^{\omega} \Bigg),
    \end{equation}
    where we absorb $m$ and $d$ in the implied constant since in general $N (n, d) \gg m, d$ and $\omega \geq 2$ is a linear algebra constant.

    In the cryptographic literature a generic approach to bound the solving degree is the so-called \textit{Macaulay bound}.
    Assume that $\degree{f_1} \geq \ldots \geq \degree{f_m}$ and let $l = \min \{ m, n + 1 \}$, then the Macaulay bound of $\mathcal{F}$ is given by
    \begin{equation}\label{Equ: Macaulay bound}
        \mb_\mathcal{F} = \sum_{i = 1}^{l} \degree{f_1} + \ldots + \degree{f_l} - l + 1.
    \end{equation}
    Up to date there are two known cases when one indeed has that $\solvdeg_{DRL} \left( \mathcal{F} \right) \leq \mb_\mathcal{F}$:
    \begin{enumerate}
        \item $\mathcal{F}$ is regular \cite{Bardet-AsymptoticIndex,Bardet-AsymptoticDegree}, or

        \item $\mathcal{F}$ is in generic coordinates \cite[Theorem~9, 10]{Caminata-SolvingPolySystems}.
    \end{enumerate}
    We stress that we were unable to prove that one of the two cases applies to \Arion or \ArionHash.
    Therefore, we can only hypothesize that the Macaulay bound upper bounds the solving degree.

    From a designer's perspective, during all our small scale experiments the vector space dimension of the quotient space behaved more stable with respect to the chosen primes, branch sizes and round numbers than the observed solving degree.
    \textbf{Thus, all our extrapolated security claims of \Arion and \ArionHash with respect to Gr\"obner basis attacks are expressed in the complexity of solving for the solutions of the polynomial system after a Gr\"obner basis has been found.}

    Moreover, in all our experiments we observed that the quotient space dimension grows exponentially in $r$ and the base only depends on $n, d_1$ and $d_2$.
    We hypothesize that this behaviour is invariant under the chosen primes and parameters.
    Therefore, we will do an ``educated guess'' from our small scale experiments to obtain the base and extrapolate our findings for our security analysis.

    \subsection{\Arion}
    For the security of \Arion against Gr\"obner basis attacks we consider the following polynomial model:
    \begin{enumerate}[label=(\roman*)]
        \item We do not consider a key schedule, i.e., in every round we add the same key $\mathbf{k} = \left( k_1, \dots, k_n \right) \in \Fpn$.

        \item We use a single plain/cipher pair $\mathbf{p}, \mathbf{c} \in \Fpn$ given by \Arion to set up a fully determined polynomial system $\mathcal{F}$.

        \item\label{Item: naive model} For $1 \leq i \leq r - 1$ we denote with $\mathbf{x}^{(i)} = \left( x_1^{(i)}, \dots, x_n^{(i)} \right)$ the intermediate state variables, in addition we set $\mathbf{x}^{(0)} = \circulant \left( 1, \dots, n \right) \cdot \left( \mathbf{p} + \mathbf{k} \right)$ and $\mathbf{x}^{(r)} = \mathbf{c}$.
        Further, for $1 \leq i \leq r$ we denote with $z^{(i)}$ an auxiliary variable.
        For our polynomial model we choose a slight modification $\tilde{\mathcal{F}} = \{ \tilde{f}_1, \dots, \tilde{f}_n \}$ of the \Arion GTDS $\mathcal{F} = \{ f_1, \dots, f_n \}$, see \Cref{Def: GTDS}, where we set $\tilde{f}_n (x_n) = x_n$.
        For $1 \leq i \leq n - 1$ the $\tilde{f}_i$ follow the same iterative definition as the original polynomials in the GTDS, and we modify $\tilde{\sigma}_{i + 1, n} = x_n + z_n + \sum_{j = i + 1}^{n - 1} x_j + \tilde{f}_j$.
        We consider the following polynomial model as the naive model $\mathcal{F}_\text{naive}$ for \Arion
        \begin{equation*}
            \begin{split}
                \circulant \left( 1, \dots, n \right)
                \tilde{\mathcal{F}} \left( \mathbf{\hat{x}}^{(i - 1)} \right)
                + \mathbf{c}_i + \mathbf{k} -
                \mathbf{x}^{(r)}
                &= \mathbf{0}, \\
                \left( x_n^{(i - 1)} \right)^e - z^{(i)} &= 0,
            \end{split}
        \end{equation*}
        where $\mathbf{\hat{x}}^{(i)} = \left( x_1^{(i)}, \dots, x_{n - 1}^{(i)}, z^{(i)} \right)$.

        \item\label{Item: efficient model} Obviously the naive polynomial system $\mathcal{F}_\text{naive}$ contains high degree equations given by the power permutation $x^e$.
        Though, if we replace the auxiliary equations by
        \begin{equation*}
            x_n^{(i)} - \left( z^{(i)} \right)^{d_2} = 0,
        \end{equation*}
        then we obtain a polynomial system $\mathcal{F}_\Arion$ whose polynomials are of small degree.
        Further, we expect the arithmetic of $\mathcal{F}_\Arion$ to be independent of the chosen prime, i.e., for primes $p, q \in \mathbb{P}$ such that $\gcd \left( d_i, p - 1 \right) = 1 = \gcd \left( d_i, q - 1 \right)$ we expect no notable difference for the complexity of a Gr\"obner basis attack.
    \end{enumerate}

    \begin{lem}
        Let $\Fp$ be a finite field, let $\mathcal{F}_\text{naive}$ and $\mathcal{F}_\Arion$ be the polynomial models from \ref{Item: naive model} and \ref{Item: efficient model}, and let $F$ be the ideal of all field equations in the polynomial ring of $\mathcal{F}_\text{naive}$ and $\mathcal{F}_\Arion$.
        Then
        \begin{equation*}
            \left( \mathcal{F}_\text{naive} \right) + F = \left( \mathcal{F}_\Arion \right) + F.
        \end{equation*}
    \end{lem}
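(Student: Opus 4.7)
The plan is to verify the ideal equality by observing that $\mathcal{F}_\text{naive}$ and $\mathcal{F}_\Arion$ share the same round equations, so only the auxiliary polynomials indexed by the round number $i$ differ. First I reduce the whole claim to the per-round statement: for each $1 \leq i \leq r$,
\[
\bigl((x_n^{(i-1)})^e - z^{(i)}\bigr) + F \;=\; \bigl(x_n^{(i-1)} - (z^{(i)})^{d_2}\bigr) + F,
\]
where I read the efficient auxiliary equation with $x_n^{(i-1)}$ in place of the printed $x_n^{(i)}$, which I take to be a typo: this is the only reading consistent with the inverse-power relation $e \cdot d_2 \equiv 1 \pmod{p-1}$ and with the CCZ-style trick already discussed elsewhere in the paper.

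The engine of both inclusions is a polynomial identity in $\Fp[u]$: since $e \cdot d_2 \equiv 1 \pmod{p-1}$ and $e, d_2 \geq 2$, I can write $e \cdot d_2 = 1 + k \cdot (p-1)$ with $k \geq 1$, and then
\[
u^{e \cdot d_2} - u \;=\; u \cdot \bigl((u^{p-1})^k - 1\bigr) \;=\; (u^p - u) \cdot \sum_{j=0}^{k-1} (u^{p-1})^j,
\]
so $u^{e \cdot d_2} - u$ lies in the principal ideal $(u^p - u) \subset F$. This is the polynomial refinement of Fermat's little theorem that the argument requires.

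For the forward inclusion I factor the difference of $e$-th powers,
\[
(x_n^{(i-1)})^e - \bigl((z^{(i)})^{d_2}\bigr)^e \;=\; \bigl(x_n^{(i-1)} - (z^{(i)})^{d_2}\bigr) \cdot Q_1,
\]
with an explicit cofactor $Q_1$, and then add the $F$-term $(z^{(i)})^{e \cdot d_2} - z^{(i)}$ obtained from the key identity at $u = z^{(i)}$; together they give $(x_n^{(i-1)})^e - z^{(i)} \in \bigl(x_n^{(i-1)} - (z^{(i)})^{d_2}\bigr) + F$. The reverse inclusion is symmetric: I factor the difference of $d_2$-th powers,
\[
\bigl((x_n^{(i-1)})^e\bigr)^{d_2} - (z^{(i)})^{d_2} \;=\; \bigl((x_n^{(i-1)})^e - z^{(i)}\bigr) \cdot Q_2,
\]
and apply the key identity at $u = x_n^{(i-1)}$ to replace $(x_n^{(i-1)})^{e \cdot d_2}$ by $x_n^{(i-1)}$ modulo $F$. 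Aggregating the resulting per-round equalities over $i = 1, \dots, r$ yields the stated equality of ideals.

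The main obstacle I foresee is stating the key identity correctly as a congruence in the polynomial ring rather than as the evaluation-level equality $a^{e \cdot d_2} = a$ on $\Fp$: the latter is not by itself sufficient for an ideal-theoretic statement. Once the factorization of $(u^{p-1})^k - 1$ through $u^{p-1} - 1$ is recorded, $u^p - u$ divides $u^{e \cdot d_2} - u$ on the nose, and the two mutual inclusions reduce to routine polynomial manipulation.
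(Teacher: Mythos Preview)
Your proof is correct and follows essentially the same idea as the paper: raise the auxiliary congruence to the complementary exponent ($d_2$ or $e$) and invoke the field equations to collapse $u^{e d_2}$ to $u$. The paper's proof is terser—it writes out only one inclusion (raising to $d_2$) and leaves the symmetric direction implicit—whereas you spell out both directions, make the factorization $a^m - b^m = (a-b)\cdot Q$ explicit, and are careful to phrase the key step $u^{e d_2} - u \in (u^p - u)$ as a divisibility in the polynomial ring rather than an evaluation on $\Fp$; your flag on the index $x_n^{(i)}$ versus $x_n^{(i-1)}$ is also well taken, as the paper is inconsistent on this point.
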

    \begin{proof}
        By definition, we have that $\left( x_n^{(i)} \right)^e \equiv z^{(i)} \mod \left( \mathcal{F}_\text{naive} \right) + F$, by raising this congruence to the $d_2$\textsuperscript{th} power yields
        \begin{equation*}
            \left( \left( x_n^{(i)} \right)^e \right)^{d_2} \equiv x_n^{(i)} \equiv \left( z^{(i)} \right)^{d_2} \mod \left( \mathcal{F}_\text{naive} \right) + F
        \end{equation*}
        which proves the claim. \qed
    \end{proof}

    I.e., on the solutions that solely come from the finite field $\Fp$, which are the solutions of cryptographic interest, the varieties corresponding to $\mathcal{F}_\text{naive}$ and $\mathcal{F}_\Arion$ coincide, so
    \begin{equation}
        \mathcal{V} \left( \mathcal{F}_\text{naive} \right) \cap \Fpn = \mathcal{V} \left( \mathcal{F}_\Arion \right) \cap \Fpn.
    \end{equation}
    Thus, $\mathcal{F}_\Arion$ is indeed a well-founded model for \Arion.

    To compute the Macaulay bound of $\mathcal{F}_\Arion$ we first apply $\circulant (1, \dots, n)^{-1}$ in every round to cancel the mixing of the components, then we use \Cref{Lem: degrees in GTDS} with $e = 1$ to compute the degree in the $i$\textsuperscript{th} component.
    Further, we have to account for the auxiliary equation, overall we yield the Macaulay bound
    \begin{equation}\label{Equ: cipher Macaulay bound}
        \begin{split}
            \mb_{n, d_1, d_2} (r)
            &= r \cdot \left( d_2 + 1 + \sum_{i = 1}^{n - 1} \left( 2^{n - i} \cdot (d_1 + 1) - d_1 \right) \right) + 1 - r \cdot \left( n + 1 \right) \\
            &= r \cdot \big( d_2 + 2 \cdot(d_1 + 1) \cdot \left( 2^{n - 1} - 1 \right) - (n - 1) \cdot d_1 - n \big) + 1
        \end{split}
    \end{equation}
    We stress that we were unable to prove or disprove that $\mathcal{F}_\Arion$ is regular or in generic coordinates, thus we can only hypothesize the Macaulay bound as measure for the complexity of linear algebra-based Gr\"obner basis algorithms.

    We implemented $\mathcal{F}_\Arion$ in the \verb!OSCAR! computer algebra system \cite{OSCAR} and computed the Gr\"obner basis with its F4 implementation.
    Unfortunately, the log function of F4 only prints the current working degree of the algorithm not its current solving degree.
    As remedy, we estimate the empirical solving degree as follows: We sum up all positive differences of the working degrees between consecutive steps and add this quantity to the largest input polynomial degree.
    After computing the Gr\"obner basis we also computed the $\Fp$-vector space dimension of the quotient ring.
    We conducted our experiments with the primes
    \begin{equation}\label{Equ: primes for experiment}
        \underbrace{
            \begin{aligned}
                p_1 &= 1013, \\
                p_2 &= 10007,
            \end{aligned}
        }_{\gcd \left( 3, p_i - 1 \right) = 1} \qquad\text{and}\qquad
        \underbrace{
            \begin{aligned}
                p_3 &= 1033, \\
                p_4 &= 15013.
            \end{aligned}
        }_{\gcd \left( 5, p_i - 1 \right) = 1}
    \end{equation}
    All computations were performed on an AMD EPYC-Rome (48) CPU with 94 GB RAM.

    In \Cref{Fig: cipher n=2} we record our empirical results for $n = 2$ and in \Cref{Fig: cipher n=3} we record our empirical results for $n = 3$.
    From experiments, we hypothesize that for $r > 1$ the solving degree is indeed bounded by the Macaulay bound and that the quotient space dimension grows exponential in $r$.
    \begin{figure}[H]
        \centering
            \begin{tikzpicture}[declare function={Sd(\n,\d,\e,\r) = \r * (\e + 2 * (\d + 1) * (2^(\n - 1) - 1) - (\n - 1) - \n) + 1;Dim3(\r) = 35^\r;Dim5(\r) = 49^\r;}]
        \pgfplotsset{every axis/.append style={width=0.44\linewidth,title style={align=center}}}
        \begin{axis}[name=axis1,
                     xlabel=$r$,
                     ylabel=$\solvdeg_{DRL} \left( \mathcal{F}_\Arion \right)$,
                     legend entries={$d_1 = 3$, $d_1 = 5$, $\mb_{2, 3, 7} (r)$, $\mb_{2, 5, 7} (r)$},
                     legend style={at={(0.02, 0.98)}, anchor=north west, nodes={scale=0.4, transform shape}},
                     grid=both,
                     grid style={line width=.1pt, draw=black!50}, major grid style={line width=.2pt,draw=black!100},
                     ymin=0,
                     xtick=data,
                     ytick={0, 20, ..., 1000},
                     minor y tick num=3]
            \addplot [color=blue,
                      mark=x] coordinates {
                (1, 14)
                (2, 24)
                (3, 24)
            };
            \addplot [color=cyan,
                      mark=x] coordinates {
                (1, 15)
                (2, 26)
                (3, 26)
            };
            \addplot [domain=1:3,
                      color=red] {
                Sd(2, 3, 7, x)
            };
            \addplot [domain=1:3,
                      color=purple] {
                Sd(2, 5, 7, x)
            };
        \end{axis}
        \begin{axis}[at={(axis1.outer north east)},
                     anchor=outer north west,
                     name=axis2,
                     xlabel=$r$,
                     ylabel=$\dim_{\mathbb{F}_p} \left( \mathcal{F}_\Arion \right)$,
                     legend entries={$d_1 = 3$, $d_1 = 5$, $35^r$, $49^r$},
                     legend style={at={(0.02, 0.98)}, anchor=north west, nodes={scale=0.4, transform shape}},
                     grid=both,
                     grid style={line width=.1pt, draw=black!50}, major grid style={line width=.2pt,draw=black!100},
                     ymode=log,
                     log basis y=10,
                     xtick=data]
            \addplot [color=blue,
                      mark=x] coordinates {
                (1, 35)
                (2, 1225)
                (3, 42875)
            };
            \addplot [color=cyan,
                      mark=x] coordinates {
                (1, 49)
                (2, 2401)
                (3, 117649)
            };
            \addplot [domain=1:3,
                      color=red] {
                Dim3(x)
            };

            \addplot [domain=1:3,
                      color=purple] {
                Dim5(x)
            };
        \end{axis}
    \end{tikzpicture}
        \caption{Experimental solving degree and vector space dimension of the quotient ring for \Arion with $n = 2$ and $d_2 = 7$.}
        \label{Fig: cipher n=2}
    \end{figure}
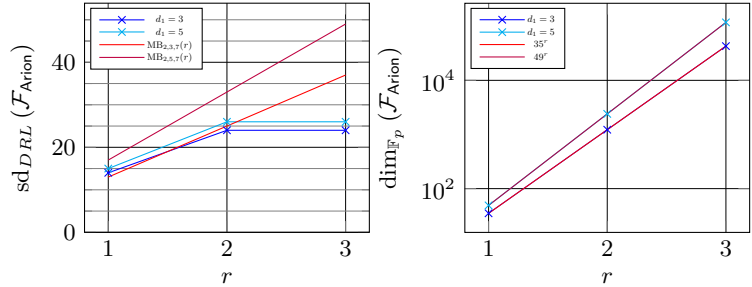
    \clearpage % to fix box issues
    \begin{figure}[H]
        \centering
            \begin{tikzpicture}[declare function={Sd(\n,\d,\e,\r) = \r * (\e + 2 * (\d + 1) * (2^(\n - 1) - 1) - (\n - 1) - \n) + 1;Dim3(\r) = 175^\r;Dim5(\r)=343^\r;}]
        \pgfplotsset{every axis/.append style={width=0.4\linewidth,title style={align=center}}}
        \begin{axis}[name=axis1,
                     xlabel=$r$,
                     ylabel=$\solvdeg_{DRL} \left( \mathcal{F}_\Arion \right)$,
                     legend entries={$d_1 = 3$, $d_1 = 5$, $\mb_{3, 3, 7} (r)$, $\mb_{3, 5, 7} (r)$},
                     legend style={at={(0.02, 0.98)}, anchor=north west, nodes={scale=0.4, transform shape}},
                     grid=both,
                     grid style={line width=.1pt, draw=black!50}, major grid style={line width=.2pt,draw=black!100},
                     ymin=0,
                     xtick=data,
                     ytick={0, 40, ..., 1000},
                     minor y tick num=3]
            \addplot [color=blue,
                      mark=x] coordinates {
                (1, 31)
                (2, 45)
            };
            \addplot [color=cyan,
                      mark=x] coordinates {
                (1, 43)
                (2, 61)
            };
            \addplot [domain=1:2,
                      color=red] {
                Sd(3, 3, 7, x)
            };
            \addplot [domain=1:2,
                      color=purple] {
                Sd(3, 5, 7, x)
            };
        \end{axis}
        \begin{axis}[at={(axis1.outer north east)},
                     anchor=outer north west,
                     name=axis2,
                     xlabel=$r$,
                     ylabel=$\dim_{\Fp} \left( \mathcal{F}_\Arion \right)$,
                     legend entries={$d_1 = 3$, $d_1 = 5$, $175^r$},
                     legend style={at={(0.02, 0.98)}, anchor=north west, nodes={scale=0.4, transform shape}},
                     grid=both,
                     grid style={line width=.1pt, draw=black!50}, major grid style={line width=.2pt,draw=black!100},
                     ymode=log,
                     log basis y=10,
                     xtick=data]
            \addplot [color=blue,
                      mark=x] coordinates {
                (1, 175)
                (2, 30625)
            };
            \addplot [color=cyan,
                      mark=x] coordinates {
                (1, 343)
                (2, 117649)
            };
            \addplot [domain=1:2,
                      color=red] {
                Dim3(x)
            };
            \addplot [domain=1:2,
                      color=purple] {
                Dim5(x)
            };
        \end{axis}
    \end{tikzpicture}
        \caption{Experimental solving degree and vector space dimension of the quotient ring for \Arion with $n = 3$ and $d_2 = 7$.}
        \label{Fig: cipher n=3}
    \end{figure}

    To better understand the growth of the base of the quotient space dimension we computed the quotient space dimension for $n \leq 4$ and $r = 1$, see \Cref{Tab: cipher empirical quotient space dimension}.
    \begin{table}[H]
        \centering
        \caption{Empirical growth of the vector space dimension of the quotient space of \Arion with $r = 1$.}
        \label{Tab: cipher empirical quotient space dimension}
        \begin{tabular}{ M{5mm} | M{5mm} | M{7mm} | M{1.6cm} }
            \toprule
            $n$ & $d_1$ & $d_2$ & $\dim_{\Fp, \text{emp}}$ \\
            \midrule

            $2$ & $3$ & $7$   & $35$    \\
            $2$ & $3$ & $257$ & $1285$  \\
            $3$ & $3$ & $7$   & $175$   \\
            $4$ & $3$ & $7$   & $875$   \\

            $2$ & $5$ & $7$   & $49$    \\
            $2$ & $5$ & $257$ & $1799$  \\
            $3$ & $5$ & $7$   & $343$   \\

            \bottomrule
        \end{tabular}
    \end{table}

    From our experiments we conjecture that the quotient space dimension grows or is bounded via
    \begin{equation}\label{Equ: cipher quotient space dimension}
        \dim_{\Fp} \left( \mathcal{F}_\Arion \right) (n, r, d_1, d_2) = \left( d_2 \cdot \left( d_1 + 2 \right)^{n - 1} \right)^r.
    \end{equation}

    We use two approaches to estimate the cost of Gr\"obner basis computations for $\mathcal{F}_\Arion$.
    First, note that if we exclude one polynomial from $\mathcal{F}_\Arion$, then we do not have a zero-dimensional polynomial system anymore.
    Therefore, the highest non-trivial lower bound for the solving degree of $\mathcal{F}_\Arion$ has to be the highest degree in the \Arion polynomial system, i.e.,
    \begin{equation}\label{Equ: minimal solving degree}
        \max \left\{ d_2, 2^{n - 1} \cdot (d_1 + 1) - d_1 \right\} \leq \solvdeg_{DRL} \left( \mathcal{F}_\Arion \right).
    \end{equation}
    We combine this lower bound with \Cref{Equ: complexity estimate} to derive the ``minimal possible`` Gr\"obner basis complexity estimate.
    Second, under the assumption that the Macaulay bound always upper bounds the solving degree of \Arion we can combine \Cref{Equ: complexity estimate,Equ: Macaulay bound} to derive the ``maximal possible'' Gr\"obner basis complexity estimate.
    For ease of computation we approximated the binomial coefficient with
    \begin{equation}
        \binom{n}{k} \approx \sqrt{\frac{n}{\pi \cdot k \cdot (n - k)}} \cdot 2^{n \cdot H_2 (k / n)},
    \end{equation}
    where $H_2 (p) = -p \cdot \log_2 \left( p \right) - \left( 1 - p \right) \cdot \log_2 \left( 1 - p \right)$ denotes the binary entropy (cf.\ \cite[Lemma~17.5.1]{Cover-InformationTheory}).
    To estimate the cost of system solving we plug \Cref{Equ: cipher quotient space dimension} into \Cref{Equ: complexity probabilistic solving,Equ: complexity deterministic solving}.
    For all estimates we assume that our adversary has an optimal Gaussian elimination algorithm with $\omega = 2$.

    \textbf{We base the security of \Arion against Gr\"obner basis attacks solely on the complexity of solving the polynomial system.}
    I.e., even if an adversary can compute a Gr\"obner basis in $\mathcal{O} (1)$ it must be computationally infeasible to find a solution of the polynomial system.
    Thus, we estimate the security level $\kappa$ of \Arion against a Gr\"obner basis attack via
    \begin{equation}
        \kappa \leq \log_2 \big( \mathcal{O} \left( \text{System solving} \right) \big).
    \end{equation}
    \begin{table}[H]
        \centering
        \caption{Empirical cost estimation of Gr\"obner basis attacks on \Arion for primes $p \geq 2^{60}$ and $d_2 \geq 121$.
            The column GB min contains the complexity of Gr\"obner basis computations estimated with the highest polynomial degree in the \Arion polynomial system.
            The column GB MB contains the complexity Gr\"obner basis computations estimated via the Macaulay bound.
            We assume that the adversary has an optimal Gaussian elimination algorithm with $\omega = 2$.}
        \label{Tab: cipher Groebner basis complexity}
        \resizebox{0.7\textwidth}{!}{
            \begin{tabular}{ M{5mm} | M{5mm} | M{15mm} | M{15mm} | M{25mm} | M{25mm} }
                \toprule
                $n$ & $r$ & GB min (bits) & GB MB (bits) & Deterministic Solving (bits) & Probabilistic Solving (bits) \\
                \midrule

                \multicolumn{6}{ c }{$d_1 = 3$} \\
                \midrule

                $3$ & $4$ & $123$ & $129$ & $137$ & $96$  \\
                $3$ & $6$ & $162$ & $170$ & $207$ & $143$ \\

                $4$ & $4$ & $143$ & $160$ & $165$ & $115$ \\
                $4$ & $5$ & $166$ & $186$ & $207$ & $143$ \\

                $5$ & $3$ & $134$ & $164$ & $145$ & $101$ \\
                $5$ & $4$ & $162$ & $201$ & $194$ & $134$ \\

                $6$ & $3$ & $150$ & $208$ & $166$ & $115$ \\
                $6$ & $4$ & $181$ & $257$ & $222$ & $153$ \\

                $8$ & $2$ & $204$ & $244$ & $138$ & $96$  \\
                $8$ & $3$ & $279$ & $338$ & $208$ & $143$ \\

                \midrule
                \multicolumn{6}{c}{$d_1 = 5$} \\
                \midrule

                $3$ & $4$ & $123$ & $131$ & $149$ & $104$ \\
                $3$ & $5$ & $143$ & $153$ & $187$ & $129$ \\

                $4$ & $3$ & $118$ & $135$ & $136$ & $95$  \\
                $4$ & $5$ & $166$ & $195$ & $229$ & $158$ \\

                $5$ & $3$ & $134$ & $174$ & $162$ & $113$ \\
                $5$ & $4$ & $162$ & $215$ & $217$ & $149$ \\

                $6$ & $3$ & $172$ & $225$ & $187$ & $130$ \\

                $8$ & $2$ & $225$ & $263$ & $158$ & $110$ \\
                $8$ & $3$ & $309$ & $366$ & $238$ & $164$ \\

                \bottomrule
            \end{tabular}
        }
    \end{table}
    \clearpage % to fix box issues

    \subsection{\ArionHash}
    In a preimage attack on \ArionHash we are given a given hash value $\alpha \in \Fp$ and we have to find $\mathbf{x} \in \Fp^{r'}$ such that $\ArionHash (\mathbf{x}) = \alpha$.
    In a second-preimage attack we assume that we are given a message that consists of two input blocks, i.e.\ $\mathbf{y} = \left( \mathbf{y}_1, \mathbf{y}_2 \right) \in \left( \Fp^{r'} \right)^2$.
    Now we again have to find $\mathbf{x} \in \Fp^{r'}$ such that $\ArionHash (\mathbf{y}) = \ArionHash (\mathbf{x})$.
    Consequently, both preimage attacks on \ArionHash reduce to the same equation
    \begin{equation}\label{Equ: preimage attack}
        \text{\Arion -} \pi
        \begin{pmatrix}
            \mathbf{x}_\text{in} \\
            \texttt{IV}
        \end{pmatrix}
        =
        \begin{pmatrix}
            \alpha \\
            \mathbf{x}_\text{out}
        \end{pmatrix}
        ,
    \end{equation}
    where $\alpha \in \Fp$ is the output of \ArionHash, $\texttt{IV} \in \Fp^{c}$ is the initial value, and $\mathbf{x}_\text{in} = \left( x_{\text{in}, 1}, \dots, x_{\text{in}, r'} \right)$ and $\mathbf{x}_\text{out} = \left( x_{\text{out}, 2}, \dots, x_{\text{out}, n} \right)$ are indeterminates.
    Analog to \Arion we construct an iterated polynomial system where each round is modeled with a polynomial vector.
    Note that \Cref{Equ: preimage attack} is not fully determined if $n \geq 3$ and $r' > 1$, in such a case we have $n - 1 + r' > n$ many variables for $\mathbf{x}_\text{in}$ and $\mathbf{x}_\text{out}$ and $(r - 1) \cdot n$ many intermediate state variables, but we only have $r \cdot n$ many equations.
    In order to obtain a fully determined system the adversary either has to guess some values for $\mathbf{x}_\text{in}$ or $\mathbf{x}_\text{out}$, but each guess has success probability $1 / p$, so we can neglect this approach, or he has to add additional equations.
    If an adversary is unable to exploit additional algebraic structures of \Arionpi (which are unknown to us), then he has just one generic choice: he has to add field equations until the system is fully determined.
    This approach introduces polynomials of very high degree, thus we do not expect this attack to be feasible below our $128$ bit Gr\"obner basis security claim.
    Therefore, in the analysis of this section we always choose $c = n - 1$ to obtain a fully determined system.
    Note that the Macaulay bound for \ArionHash is identical to the one for \Arion, see \Cref{Equ: cipher Macaulay bound}.

    We implemented \ArionHash in the \verb!OSCAR! \cite{OSCAR} computer algebra system and computed the Gr\"obner basis of $\mathcal{F}_\ArionHash$ with F4.
    As initial value we chose $\texttt{IV} = \mathbf{0}^c$.
    Further, we computed the $\Fp$-vector space dimension of the quotient space for \ArionHash.
    We used the same primes as for \Arion, see \Cref{Equ: primes for experiment}.
    In \Cref{Fig: hash n=2} we record our empirical results for $n = 2$ and in \Cref{Fig: hash n=3} we record our empirical results for $n = 3$.
    From experiments, we hypothesize that the solving degree is indeed bounded by the Macaulay bound and that the quotient space dimension grows exponential in $r$.
    \clearpage % to fix box issues
    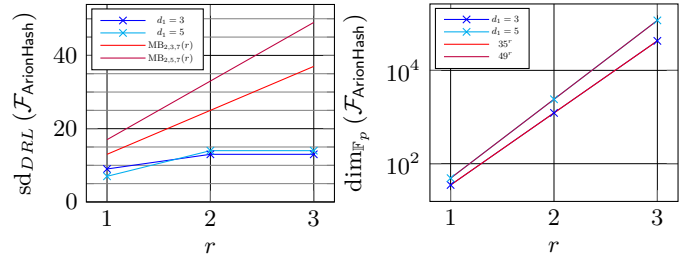
\begin{figure}[H]
        \centering
            \begin{tikzpicture}[declare function={Sd(\n,\d,\e,\r) = \r * (\e + 2 * (\d + 1) * (2^(\n - 1) - 1) - (\n - 1) - \n) + 1;Dim3(\r) = 35^\r;Dim5(\r) = 49^\r;}]
        \pgfplotsset{every axis/.append style={width=0.4\linewidth,title style={align=center}}}
        \begin{axis}[name=axis1,
                     xlabel=$r$,
                     ylabel=$\solvdeg_{DRL} \left( \mathcal{F}_\ArionHash \right)$,
                     legend entries={$d_1 = 3$, $d_1 = 5$, $\mb_{2, 3, 7} (r)$, $\mb_{2, 5, 7} (r)$},
                     legend style={at={(0.02, 0.98)}, anchor=north west, nodes={scale=0.4, transform shape}},
                     grid=both,
                     grid style={line width=.1pt, draw=black!50}, major grid style={line width=.2pt,draw=black!100},
                     ymin=0,
                     xtick=data,
                     ytick={0, 20, ..., 1000},
                     minor y tick num=3]
            \addplot [color=blue,
                      mark=x] coordinates {
                (1, 9)
                (2, 13)
                (3, 13)
            };
            \addplot [color=cyan,
                      mark=x] coordinates {
                (1, 7)
                (2, 14)
                (3, 14)
            };
            \addplot [domain=1:3,
                      color=red] {
                Sd(2, 3, 7, x)
            };
            \addplot [domain=1:3,
                      color=purple] {
                Sd(2, 5, 7, x)
            };
        \end{axis}
        \begin{axis}[at={(axis1.outer north east)},
                     anchor=outer north west,
                     name=axis2,
                     xlabel=$r$,
                     ylabel=$\dim_{\Fp} \left( \mathcal{F}_\ArionHash \right)$,
                     legend entries={$d_1 = 3$, $d_1 = 5$, $35^r$, $49^r$},
                     legend style={at={(0.02, 0.98)}, anchor=north west, nodes={scale=0.4, transform shape}},
                     grid=both,
                     grid style={line width=.1pt, draw=black!50}, major grid style={line width=.2pt,draw=black!100},
                     ymode=log,
                     log basis y=10,
                     xtick=data]
            \addplot [color=blue,
                      mark=x] coordinates {
                (1, 35)
                (2, 1225)
                (3, 42875)
            };
            \addplot [color=cyan,
                      mark=x] coordinates {
                (1, 49)
                (2, 2401)
                (3, 117649)
            };
            \addplot [domain=1:3,
                      color=red] {
                Dim3(x)
            };

            \addplot [domain=1:3,
                      color=purple] {
                Dim5(x)
            };
        \end{axis}
    \end{tikzpicture}
        \caption{Experimental solving degree and vector space dimension of the quotient ring for \ArionHash with $n = 2$ and $d_2 = 7$.}
        \label{Fig: hash n=2}
    \end{figure}
    \begin{figure}[H]
        \centering
            \begin{tikzpicture}[declare function={Sd(\n,\d,\e,\r) = \r * (\e + 2 * (\d + 1) * (2^(\n - 1) - 1) - (\n - 1) - \n) + 1;Dim3(\r) = 91^\r;Dim5(\r) = 133^\r;}]
        \pgfplotsset{every axis/.append style={width=0.4\linewidth,title style={align=center}}}
        \begin{axis}[name=axis1,
                     xlabel=$r$,
                     ylabel=$\solvdeg_{DRL} \left( \mathcal{F}_\ArionHash \right)$,
                     legend entries={$d_1 = 3$, $d_1 = 5$, $\mb_{3, 3} (r)$, $\mb_{3, 5} (r)$},
                     legend style={at={(0.02, 0.98)}, anchor=north west, nodes={scale=0.4, transform shape}},
                     grid=both,
                     grid style={line width=.1pt, draw=black!50}, major grid style={line width=.2pt,draw=black!100},
                     ymin=0,
                     xtick=data,
                     ytick={0, 40, ..., 1000},
                     minor y tick num=3]
            \addplot [color=blue,
                      mark=x] coordinates {
                (1, 26)
                (2, 35)
            };
            \addplot [color=cyan,
                      mark=x] coordinates {
                (1, 36)
                (2, 48)
            };
            \addplot [domain=1:2,
                     color=red] {
                Sd(3, 3, 7, x)
            };
            \addplot [domain=1:2,
                      color=purple] {
                Sd(3, 5, 7, x)
            };
        \end{axis}
        \begin{axis}[at={(axis1.outer north east)},
                     anchor=outer north west,
                     name=axis2,
                     xlabel=$r$,
                     ylabel=$\dim_{\Fp} \left( \mathcal{F}_\ArionHash \right)$,
                     legend entries={$d_1 = 3$, $d_1 = 5$, $91^r$, $133^r$},
                     legend style={at={(0.02, 0.98)}, anchor=north west, nodes={scale=0.4, transform shape}},
                     grid=both,
                     grid style={line width=.1pt, draw=black!50}, major grid style={line width=.2pt,draw=black!100},
                     ymode=log,
                     log basis y=10,
                     xtick=data]
            \addplot [color=blue,
                      mark=x] coordinates {
                (1, 91)
                (2, 8281)
            };
            \addplot [color=cyan,
                      mark=x] coordinates {
                (1, 133)
                (2, 17689)
            };
            \addplot [domain=1:2,
                      color=red] {
                Dim3(x)
            };
            \addplot [domain=1:2,
                      color=purple] {
                Dim5(x)
            };
        \end{axis}
    \end{tikzpicture}
        \caption{Experimental solving degree and vector space dimension of the quotient ring for \ArionHash with $n = 3$ and $d_2 = 7$.}
        \label{Fig: hash n=3}
    \end{figure}

    To better understand the growth of the base of the quotient space dimension we computed the quotient space dimension for $n \leq 5$ and $r = 1$, see \Cref{Tab: hash empirical quotient space dimension}.
    \clearpage % to fix box issues
    \begin{table}[H]
        \centering
        \caption{Empirical growth of the vector space dimension of the quotient space of \ArionHash.}
        \label{Tab: hash empirical quotient space dimension}
        \begin{tabular}{ M{5mm} | M{5mm} | M{7mm} | M{1.6cm} }
            \toprule
            $n$ & $d_1$ & $d_2$ & $\dim_{\Fp, \text{emp}}$ \\
            \midrule

            $2$ & $3$ & $7$   & $35$    \\
            $2$ & $3$ & $257$ & $1285$  \\
            $3$ & $3$ & $7$   & $91$    \\
            $3$ & $3$ & $257$ & $3341$  \\
            $4$ & $3$ & $7$   & $203$   \\
            $5$ & $3$ & $7$   & $427$   \\

            $2$ & $5$ & $7$   & $49$    \\
            $2$ & $5$ & $257$ & $1799$  \\
            $3$ & $5$ & $7$   & $133$   \\
            $3$ & $5$ & $257$ & $4833$  \\
            $4$ & $5$ & $7$   & $301$   \\
            $5$ & $5$ & $7$   & $637$   \\

            \bottomrule
        \end{tabular}
    \end{table}
    From our experiments we conjecture that the quotient space dimension grows or is bounded via
    \begin{equation}\label{Equ: hash quotient space dimension}
        \dim_{\Fp} \left( \mathcal{F}_\ArionHash \right) (n, r, d_1, d_2) = \Big( 2^{n - 1} \cdot d_2 \cdot \left( d_1 + 1 \right) - d_1 \cdot d_2 \Big)^r.
    \end{equation}
    As for \Arion, we apply this equation in \Cref{Equ: complexity deterministic solving,Equ: complexity probabilistic solving} to estimate the cost of solving the \ArionHash polynomial system.

    Since the polynomial degrees of \ArionHash coincide with the ones for \Arion also the Gr\"obner basis complexity estimates of \ArionHash and \Arion coincide.
    \clearpage % to fix overfull box issue
    \begin{table}[H]
        \centering
        \caption{Empirical cost estimation of preimage Gr\"obner basis attacks on \ArionHash for primes $p \geq 2^{60}$ and $d_2 \geq 121$.
            The column GB min contains the complexity of Gr\"obner basis computations estimated with the highest polynomial degree in the \Arion polynomial system.
            The column GB MB contains the complexity Gr\"obner basis computations estimated via the Macaulay bound.
            We assume that the adversary has an optimal Gaussian elimination algorithm with $\omega = 2$.}
        \label{Tab: hash Groebner basis complexity}
        \resizebox{0.7\textwidth}{!}{
            \begin{tabular}{ M{5mm} | M{5mm} | M{15mm} | M{15mm} | M{25mm} | M{25mm} }
                \toprule
                $n$ & $r$ & GB min (bits) & GB MB (bits) & Deterministic Solving (bits) & Probabilistic Solving (bits) \\
                \midrule

                \multicolumn{6}{c}{$d_1 = 3$} \\
                \midrule

                $3$ & $5$ & $143$ & $150$ & $158$ & $110$ \\
                $3$ & $6$ & $162$ & $170$ & $190$ & $132$ \\

                $4$ & $4$ & $143$ & $160$ & $141$ & $98$  \\
                $4$ & $6$ & $187$ & $210$ & $212$ & $146$ \\

                $5$ & $4$ & $162$ & $201$ & $154$ & $107$ \\
                $5$ & $5$ & $187$ & $235$ & $193$ & $133$ \\

                $6$ & $4$ & $181$ & $257$ & $167$ & $115$ \\
                $6$ & $5$ & $208$ & $301$ & $208$ & $143$ \\

                $8$ & $3$ & $279$ & $338$ & $143$ & $100$ \\
                $8$ & $4$ & $345$ & $423$ & $191$ & $132$ \\

                \midrule
                \multicolumn{6}{c}{$d_1 = 5$} \\
                \midrule

                $3$ & $4$ & $123$ & $131$ & $133$ & $93$  \\
                $3$ & $6$ & $162$ & $173$ & $200$ & $138$ \\

                $4$ & $4$ & $143$ & $143$ & $167$ & $103$ \\
                $4$ & $5$ & $179$ & $166$ & $195$ & $128$ \\

                $5$ & $3$ & $162$ & $215$ & $161$ & $111$ \\
                $5$ & $5$ & $187$ & $251$ & $201$ & $139$ \\

                $6$ & $3$ & $172$ & $225$ & $130$ & $91$  \\
                $6$ & $5$ & $244$ & $328$ & $217$ & $149$ \\

                $8$ & $3$ & $309$ & $366$ & $148$ & $103$ \\
                $8$ & $4$ & $385$ & $461$ & $198$ & $137$ \\

                \bottomrule
            \end{tabular}
        }
    \end{table}

    An adversary can also set up a collision attack via the equation
    \begin{equation}
        \ArionHash (\mathbf{x}) = \ArionHash (\mathbf{y}),
    \end{equation}
    where $\mathbf{x}, \mathbf{y} \in \Fp^\ast$ are variables.
    We can construct the corresponding collision polynomial system by first preparing two preimage polynomial systems $\mathcal{F}_1$ and $\mathcal{F}_2$, then we connect the last round of both systems via
    \begin{align}
        \begin{pmatrix}
            0 & 0 & \ldots & 0     \\
            n & 1 & \ldots & n - 1 \\
            &   & \ddots &       \\
            2 & 3 & \ldots & 1
        \end{pmatrix}
        \tilde{\mathcal{F}} \left( \mathbf{\hat{x}}^{(r - 1)} \right) + \mathbf{c}_r
        &=
        \begin{pmatrix}
            0 \\
            \mathbf{x}_{out}
        \end{pmatrix}
        , \\
        x_n^{(r - 1)} - \left( z^{(r)} \right)^{d_2} &= 0, \\
        \begin{pmatrix}
            0 & 0 & \ldots & 0     \\
            n & 1 & \ldots & n - 1 \\
            &   & \ddots &       \\
            2 & 3 & \ldots & 1
        \end{pmatrix}
        \tilde{\mathcal{F}} \left( \mathbf{\hat{y}}^{(r - 1)} \right) + \mathbf{c}_r
        &=
        \begin{pmatrix}
            0 \\
            \mathbf{y}_{out}
        \end{pmatrix}
        , \\
        y_n^{(r - 1)} - \left( z^{(r)} \right)^{d_2} &= 0, \\
        \begin{pmatrix}
            1 & 2 & \ldots & n \\
            0 & 0 & \ldots & 0 \\
            &   & \ddots &   \\
            0 & 0 & \ldots & 0
        \end{pmatrix}
        \left( \tilde{\mathcal{F}} \left( \mathbf{\hat{x}}^{(r - 1)} \right) - \tilde{\mathcal{F}} \left( \mathbf{\hat{y}}^{(r - 1)} \right) \right) &= \mathbf{0}.
    \end{align}
    Note that the collision polynomial system consists of $2 \cdot r \cdot \left( n + 1 \right) - 1$ many equations in $2 \cdot r \cdot \left( n + 1 \right)$ many variables, hence it is not fully determined.
    Therefore, one has to guess at least one variable in the collision polynomial system.
    In all our experiments we guessed one value of $\mathbf{y}_{out}$ and then computed the DRL Gr\"obner basis of the system.
    We observed that the position of our guess in $\mathbf{y}_{out}$ did never affect the dimension of the quotient space dimension.
    In \Cref{Tab: hash empirical collision quotient space dimension} we record the observed quotient space dimension in our experiments.
    \begin{table}[H]
        \centering
        \caption{Empirical growth of the vector space dimension of the quotient space of the collision polynomial system of \ArionHash.}
        \label{Tab: hash empirical collision quotient space dimension}
        \begin{tabular}{ M{5mm} | M{5mm} | M{5mm} | M{5mm} | M{1.6cm} }
            \toprule
            $n$ & $r$ & $d_1$ & $d_2$ & $\dim_{\Fp, \text{emp}}$ \\
            \midrule

            $2$ & $1$ & $3$ & $3$ & $225$   \\

            $2$ & $2$ & $3$ & $3$ & $50625$ \\

            $2$ & $1$ & $5$ & $5$ & $1225$  \\

            $2$ & $1$ & $3$ & $7$ & $1225$  \\

            $3$ & $1$ & $3$ & $7$ & $8281$  \\

            $2$ & $1$ & $5$ & $7$ & $2401$  \\

            $3$ & $1$ & $3$ & $3$ & $1521$  \\

            $3$ & $1$ & $5$ & $5$ & $9025$  \\

            $3$ & $1$ & $3$ & $7$ & $8281$  \\

            $3$ & $1$ & $5$ & $7$ & $17689$ \\

            $4$ & $1$ & $3$ & $3$ & $7569$  \\

            $4$ & $1$ & $3$ & $7$ & $41209$ \\

            \bottomrule
        \end{tabular}
    \end{table}

    From our experiments we conjecture that the quotient space dimension grows or is bounded via
    \begin{equation}\label{Equ: collision quotient space dimension}
        \dim_{\Fp} \left( \mathcal{F}_{\ArionHash, coll} \right) \left( n, r, d_1, d_2 \right) = \left( \dim_{\Fp} \left( \mathcal{F}_\ArionHash \right) \left( n, r, d_1, d_2 \right) \right)^2
    \end{equation}
    Hence, the complexity estimate for the collision attack corresponds to the one for the preimage attack with doubled number of rounds.
    In particular, \Cref{Tab: hash Groebner basis complexity} implies at least $128$ bit security against a collision Gr\"obner basis attack.

    \section{Performance Evaluation}
    \subsection{\libsnark Implementation}\label{Sec: libsnark performance}
    We implemented \ArionHash, \Griffin\ and \Poseidon using the \verb!C++! library \libsnark \cite{libsnark} that is used in the privacy-protecting digital currency
    Zcash \cite{Zcash}.
    The comparative result of our experiment is given in \Cref{Tab: libsnark runtimes}. All experiments were run on a system with an Intel Core i7-11800H CPU and 32 GB RAM on a Clear Linux instance, using the \texttt{g++-12} compiler with \texttt{-O3 -march=native} flags.
    Our result shows that \ArionHash significantly outperforms \Poseidon showing 2x efficiency improvement, and the \aArionHash\ is considerably faster than \Griffin for practical parameter choices e.g. $n = 3$ or $4$ in Merkle tree hashing mode.
    \begin{table}[H]
        \centering
        \caption{Performance of various hash functions for proving the membership of a Merkle tree accumulator over $\text{BN}254$ with $d_2 = 257$.
            Proving times are in ms.
            For all hash functions verification requires less than $\SI{20}{\milli\second}$.}
        \label{Tab: libsnark runtimes}
        \resizebox{1.0\textwidth}{!}{
            \begin{tabular}{ M{1.1cm} | M{1.1cm} M{1.1cm} M{1.1cm} | M{1.1cm} M{1.1cm} M{1.1cm} | M{1.1cm} M{1.1cm} M{1.1cm} | M{1.1cm} M{1.1cm} M{1.1cm} }
                \toprule
                & \multicolumn{12}{ c  }{Time (ms)} \\
                \midrule

                Height & \multicolumn{3}{ c | }{\ArionHash} & \multicolumn{3}{ c | }{\aArionHash} & \multicolumn{3}{ c | }{\Griffin} & \multicolumn{3}{ c }{\Poseidon} \\
                \midrule
                $d = 5$ & $n = 3$ & $n = 4$ & $n = 8$ & $n = 3$ & $n = 4$ & $n = 8$ & $n = 3$  & $n = 4$  & $n = 8$ & $n = 3$  & $n = 4$  & $n = 8$ \\
                \midrule
                $4$   & $101$ & $103$ & $142$  & $73$  & $87$  & $143$  & $88$  & $99$  & $133$ & $186$  & $212$  & $274$  \\
                $8$   & $211$ & $216$ & $294$  & $145$ & $177$ & $294$  & $181$ & $209$ & $270$ & $386$  & $417$  & $566$  \\
                $16$  & $392$ & $401$ & $554$  & $278$ & $334$ & $553$  & $338$ & $387$ & $505$ & $745$  & $805$  & $1095$ \\
                $32$  & $730$ & $751$ & $1046$ & $509$ & $646$ & $1047$ & $622$ & $727$ & $980$ & $1422$ & $1550$ & $2111$ \\
                \bottomrule
            \end{tabular}
        }
    \end{table}

    \subsection{Dusk Network \Plonk}\label{Sec: Dusk Plonk performance}
    We implemented \ArionHash in \verb|Rust| using the Dusk Network \Plonk \cite{Dusk-Plonk} library which uses 3-wire constraints, and compared its performance to the Dusk Network \Poseidon \cite{Dusk-Poseidon} reference implementation.
    We note that for \ArionHash we only implemented a generic affine layer, i.e., \ArionHash needs $n \cdot (n - 1)$ constraints for matrix multiplication and $n$ constraints for constant addition.
    The comparative result of our experiment is given in \Cref{Tab: plonk runtimes}.
    All experiments were run on a system with a Intel Core i7-10700 and 64 GB RAM.
    Our result shows that \ArionHash outperforms \Poseidon showing 2x efficiency improvement in Merkle tree hashing mode.
    \clearpage % to fix overfull box issue
    \begin{table}[H]
        \centering
        \caption{Performance of various hash functions for proving the membership of a Merkle tree accumulator over $\text{BLS}12$ for $d_1 = 5$ and $n = 5$.
            For each hash function and each Merkle tree height $1000$ samples were collected.
            Average times plus-minus the standard deviation are given in the table.
            All timings are in ms.
            For all hash functions verification requires less than $\SI{10}{\milli\second}$.}
        \label{Tab: plonk runtimes}
        \resizebox{1.0\textwidth}{!}{
            \begin{tabular}{ M{3cm} | M{1.6cm} M{1.6cm} M{1.6cm} M{1.6cm} | M{1.6cm} M{1.6cm} M{1.6cm} M{1.6cm} }
                \toprule
                \multicolumn{9}{ c  }{Time (ms)} \\
                \midrule

                & \multicolumn{4}{ c | }{\ArionHash} & \multicolumn{4}{ c }{\Poseidon} \\
                \midrule

                Native sponge    & \multicolumn{4}{ c | }{$0.121 \pm 0.001$} & \multicolumn{4}{ c }{$0.101 \pm 0.001$} \\
                Proof generation & \multicolumn{4}{ c | }{$167 \pm 1$}       & \multicolumn{4}{ c }{$313 \pm 4$} \\

                \midrule
                \multicolumn{9}{ c }{Merkle Tree} \\
                \midrule
                Height & $4$ & $8$ & $16$ & $32$ & $4$ & $8$ & $16$ & $32$ \\
                \midrule

                Proof generation & $323 \pm 4$ & $533 \pm 5$ & $1011 \pm 13$  & $2034 \pm 16$ & $524 \pm 6$ & $1008 \pm 14$ & $1975 \pm 27$ & $4015 \pm 57$ \\
                \bottomrule
            \end{tabular}
        }
    \end{table}
\end{appendix}

\end{document}